\documentclass[a4paper,10pt]{article}

\usepackage[margin=0.92in]{geometry} 

\usepackage{authblk}

\usepackage{amsmath}
\usepackage{amsthm}
\usepackage{amssymb}
\usepackage{soul}

\usepackage[utf8]{inputenc}
\usepackage{hyperref}
\hypersetup{
	unicode,
	pdfauthor={Jean Gasnier, Virgile Guemard},
	pdftitle={},
	pdfsubject={},
	pdfproducer={LaTeX},
	pdfcreator={pdflatex}
}

\usepackage[shortlabels]{enumitem}

\usepackage[sort&compress,numbers,square]{natbib}

\theoremstyle{plain}
\newtheorem{theorem}{Theorem}[section]
\newtheorem{corollary}[theorem]{Corollary}
\newtheorem{lemma}[theorem]{Lemma}

\newtheorem{proposition}[theorem]{Proposition}

\theoremstyle{definition}
\newtheorem{definition}[theorem]{Definition}

\newtheorem{remark}[theorem]{Remark}

\usepackage{graphicx, color}
\graphicspath{{fig/}}

\usepackage{algorithm, algpseudocode} 
\usepackage{mathrsfs}

\usepackage{dcolumn}
\usepackage{bm}
\usepackage[english]{babel}
\usepackage[OT1]{fontenc}

\usepackage[colorinlistoftodos, color=green!40, prependcaption]{todonotes}

\usepackage{amsthm}
\usepackage{mathtools}
\usepackage{physics}
\usepackage{xcolor}
\usepackage{graphicx}
\usepackage{tikz-cd}
\usepackage{bbold}

\usepackage{adjustbox}
\usepackage{placeins}

\usepackage{csquotes}
\usepackage{booktabs}

\usepackage[caption=false]{subfig}
 \makeatletter
\renewcommand{\fnum@figure}{FIG. \thefigure}
\makeatother
\usepackage{tikz}
\usetikzlibrary{decorations.markings}

\usepackage{amsmath,mleftright}
\usepackage{xparse}

\NewDocumentCommand{\evalat}{sO{\big}mm}{%
  \IfBooleanTF{#1}
   {\mleft. #3 \mright|_{#4}}
   {#3#2|_{#4}}%
}

\usepackage{amsthm}

\usepackage{extpfeil}
\usepackage{tikz-cd}
\usepackage{svg}
\usepackage{nicematrix}
\usepackage{comment}

\usepackage{titlesec}
\titleformat{\section}
  {\normalfont\bfseries}{\thesection}{1em}{}
\titleformat{\subsection}
  {\normalfont\bf}{\thesubsection}{1em}{}{\normalfont\bfseries}
\titleformat{\subsubsection}
  {\normalfont\bfseries}{\thesubsubsection}{1em}{}
\renewcommand{\thesubsection}{\thesection.\arabic{subsection}}
\renewcommand{\thesection}{\arabic{section}}
\renewcommand{\thesubsubsection}{\thesubsection.\arabic{subsubsection}} 

\usetikzlibrary{arrows}

\usepackage{leftindex}
\makeatletter
\newtheorem*{rep@theorem}{\rep@title}
\newcommand{\newreptheorem}[2]{%
\newenvironment{rep#1}[1]{%
 \def\rep@title{#2 \ref{##1}}%
 \begin{rep@theorem}}%
 {\end{rep@theorem}}}
\makeatother
\newreptheorem{proposition}{Proposition}

\DeclareMathOperator{\PAut}{{\bf PAut}}

\newcommand{\NN}{\ensuremath{\mathbb{N}}}

\newcommand{\C}{\ensuremath{\mathcal{C}}}

\newcommand{\F}{\ensuremath{\mathbb{F}}}

\newcommand{\cL}{\mathcal{L}}

\newcommand{\Enc}{\operatorname{E}}

\newcommand{\fs}{\mathfrak{s}}

\newcommand{\tF}{\widetilde{F}}
\newcommand{\tC}{\widetilde{C}}

\DeclareMathOperator{\Pic}{\ensuremath{\text{\normalfont{Pic}}}}

\DeclareMathOperator{\DDiv}{Div}

\DeclareMathOperator{\Hil}{\text{\normalfont{Hil}}}

\DeclareMathOperator{\Gal}{\text{\bf{Gal}}}

\DeclareMathOperator{\Con}{\text{\normalfont{Con}}}  

\begin{document}

\title{Quantum group codes for non-Clifford logic: \\enhanced decoding, addressability and parallelizability}

\author[1]{Jean Gasnier}
\author[1,2]{Virgile Gu\'emard}

\affil[1]{Institut de Math\'ematiques de Bordeaux, UMR 5251, Universit\'e de Bordeaux, France}
\affil[2]{Naquidis Center, Institut d’Optique Graduate School, 91127, Palaiseau, France}

\date{\today}

\maketitle

\begin{abstract}
We introduce a framework based on classical quasi group codes to define a class of quantum CSS codes, called \textit{quantum group codes}, supporting transversal multi-control-$Z$ gates which are both addressable and parallelizable, thus allowing to efficiently implement circuits composed of non-Clifford gates at the logical level. Building on this, we use a lifting procedure of classical AG codes established from class field theory to construct good quantum group codes with improved decoding complexity and logical multi-control-$Z$ gate parallelizability. More precisely, on input a good quantum AG code over the alphabet $\F_q$ with transversal $\mathsf{C}^m\mathsf Z$ gate, we apply this lifting procedure to its underlying classical AG code and obtain a quantum group code over the alphabet $\F_{q^2}$ supporting a transversal $\mathsf{C}^m\mathsf Z$ gate as well as addressable and parallelizable $\mathsf{C}^{m-1}\mathsf Z$ gates. In addition, this quantum code admits a quasi-quadratic time decoder with a linear decoding radius. This is to be compared with the previous quantum AG codes which have a cubic-time decoder. Hence, our work implies a decrease of the time complexity of state-of-the-art magic-state distillation protocols by an almost linear factor.
\end{abstract}
\tableofcontents

\section{Introduction}

Existing quantum fault-tolerant computation schemes \cite{Aharonov97,Gottesman2013,8555154} suffer from large space and time overhead, still rendering a practical quantum computer out of reach. The study of quantum error correcting codes, and in particular Calderbank-Shor-Steane codes \cite{Calderbank1996,Shor1995,Steane1996}, supporting transversal gates lies at the center of much recent progress. Broadly speaking, we say that a quantum code admits a transversal gate $U$ if there exists a short depth physical quantum circuit composed of local gates whose effect is to implement $U$ at the logical level. The low connectivity of the physical circuit prevents errors from propagating, making transversal gates naturally fault tolerant.  While no quantum code supports a universal set of transversal gates \cite{PhysRevLett.102.110502}, it is a major challenge to design codes admitting certain transversal gates up to the third level of the Clifford hierarchy such as the $T$ and the $CCZ$ gates\cite{PhysRevA.95.012329}.\par

Major efforts have been invested both in the study of fault-tolerant properties of quantum low-density parity-check (LDPC) codes and non-LDPC codes. While the class of LDPC codes is one of the leading candidates for physical implementations and extensive efforts have already been invested on designing LDPC codes with transversal Clifford and non-Clifford gates \cite{PhysRevLett.97.180501,Bombin_2015,PhysRevA.91.032330,PhysRevLett.111.090505,Kubica_2015,PhysRevA.100.012312}, with recent progress on high rate codes~\cite{zhu2024nonclifford,scruby2024quantumrainbowcodes,zhu2025topological,golowich2024quantumldpccodestransversal,lin2024transversalnoncliffordgatesquantum}, the parameters and computation capabilities of these codes appear to be still limited. Alternatively, relaxing the LDPC constraint has proved fruitful to understand which properties of a code allows for efficient fault-tolerant computation. Currently, parameters and computational capacities of non-LDPC codes surpass those of LDPC codes, which motivates our focus on those codes in this work.

First, non-LDPC codes supporting non-Clifford transversal gates are a key component of magic state distillation (MSD) \cite{PhysRevA.71.022316}, a protocol that enables universal fault-tolerant quantum computation when combined with gate teleportation. MSD is a very resource-intensive technique and improving of its time and space overhead, the latter being defined as the ratio of input to output magic states of a protocol, is the subject of major research efforts. First, it was shown that generator matrices of punctured classical Reed-Muller and Reed-Solomon codes yield quantum codes over prime-dimensional qudits adapted for MSD \cite{PhysRevX.2.041021,PhysRevA.86.052329,Krishna2019}. These techniques have been generalized to algebraic geometry (AG) codes, producing asymptotically good binary quantum codes supporting non-Clifford transversal gates, thus allowing for asymptotically constant space-overhead MSD~\cite{ Wills2025,golowich2024asymptoticallygoodquantumcodes,nguyen2024goodbinaryquantumcodes}.

In all prior work mentioned above, the unique transversal gate supported by the LDPC or non-LDPC codes is implemented by acting on all the physical qudits at the same time, and its effect is to act on all or a large fraction of the logical qudits. For high rate quantum codes, the algorithmic practicality and application outside of MSD of such global gates appear to be limited. What is rather needed is a precise control over which logical qudits are operated on. Therefore, a new direction of study has been initiated \cite{he2025addressable} on quantum codes supporting addressable non-Clifford gates. Namely, a $m$-qudit gate is addressable if a large fraction of $m$-subsets of logical qudits can be acted upon transversally\footnote{Note that we are only interested in transversal gates, so that addressability is meant to imply transversality in this work} via a short depth physical circuit of local gates. In \cite{he2025addressable,he2025good}, the addressable orthogonality framework accounts for codes with these properties and proves the existence of such codes with asymptotically good parameters. Building on this work, it was shown in \cite{guemard2025good} that there exists a good code family that supports addressable gates in arbitrary levels of the Clifford hierarchy, and more importantly that these gates can be parallelized to a high degree. These results lead to a substantial reduction in the depth overhead of multi-control-$Z$ circuits. More precisely, with these codes, the minimal depth of any logical $\mathsf C^{m-1}\mathsf Z$-circuit, $m>1$, involving logical qudits from $m$-distinct code blocks is upper bounded by $O(k^{m-1})$, where $k$ is the code dimension\footnote{This scaling is optimal for dense circuits but can be readily improved from the results of \cite{guemard2025good} by considering specific circuits.}.

\paragraph{Main results.} The recent work on MSD and addressability presented above may still be improved at several levels. First, the MSD scheme of \cite{Wills2025} is based on quantum codes admitting a decoder with a cubic time-complexity in the length of the code. This implies that the MSD protocol \cite{Wills2025} inherits a cubic time-complexity, thus leaving room for progress. The constructions of asymptotically good addressable codes suffer from the same high decoding complexity. Second, the only existing family of quantum codes with linear distance and fully addressable gates is built from a very specific family of classical codes, namely transitive AG codes defined over the tower of function fields from~\cite{Stichtenoth2006}. It remains interesting to study different families of such codes.

In this work, we address both of these problems. 
First, we identify sufficient properties of a quantum code to admit parallelizable gates. Such quantum codes can be directly built from classical quasi group codes with multiplication properties. Then, we leverage recent lifting results on AG codes to convert good quantum AG codes with a unique global transversal gate, such as those used in MSD protocols, into quantum codes with addressable and parallelizable non-Clifford gates. The lifting is done via an abelian function field extension developed in \cite{CouGas26} using class field theory, which naturally endows the linear codes with a linear-time encoder and a quadratic-time decoder, up to logarithmic factors, with a linear decoding radius. In this work, we verify that this transformation preserves the properties necessary to build a quantum CSS code with transversal non-Clifford gates. Moreover, the inherited free-module structure of the code is exactly suited for addressability and parallelizability of the logical gates. Importantly, the quadratic time decoder can be directly used to decode the quantum codes we construct up to errors of linear size.\par

We can summarize our results in the following theorem.

\begin{theorem}\label{Theorem, main}
Let $t>4$ be a power of two, and set $q =t^4$. Then, for any $1<m\leqslant \frac{t}{2}-1$ there exists:

\begin{enumerate}[(1)]
    \item a sequence of $q$-dimensional qudit quantum group codes with asymptotic parameters 
\[ 
[[n,\Theta (n/\log n) ,\Theta(n)]]_{q},
\]
and admitting transversal $\mathsf C^{\tilde m-1} \mathsf Z$-gates that are fully addressable for all $1<\tilde m\leqslant m$. Moreover, the depth of any logical circuit of $\mathsf C^{\tilde m-1} \mathsf Z$-gates acting across the $m$ code blocks of $\mathcal{Q}^{\otimes \widetilde m}$ is upper bounded by $O(k^{\tilde m-1})$, where $k$ denotes the dimension of $\mathcal Q$.

\item a sequence of $q$-dimensional qudit quantum group codes with asymptotic parameters 
\[ 
[[n,\Theta (n) ,\Theta(n)]]_{q},
\]
and admitting a transversal $\mathsf C^{\tilde m }\mathsf Z$-gate as well as orbit-wise addressable transversal $\mathsf C^{\tilde m-1 }\mathsf Z$ gates across $\Theta(\log(n))$ orbits of size $\Theta(n/\log(n))$ for $1<\tilde m \leqslant m$.
\end{enumerate}

In both cases, the quantum code family admits a $O(n^2\cdot \operatorname{polylog}(n))$-time decoder\footnote{The polylogarithmic factor of the decoder is $\log^4(n)$} with a linear decoding radius.
\end{theorem}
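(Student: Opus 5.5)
The plan is to assemble the theorem from three ingredients: the quantum group code framework, a base family of good quantum AG codes, and the class field theory lifting of \cite{CouGas26}.

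\textbf{Framework.} I would first set up the general result that the rest of the paper presumably provides. Let $G$ be a finite abelian group. Take quasi group codes $C_1,\dots,C_m \subseteq \F_Q[G]^\ell$, that is, free $\F_Q[G]$-modules. Assume the componentwise product $C_1\star\cdots\star C_m$ is contained in a code whose dual has large distance, in the spirit of the multiplication property of \cite{Wills2025} and \cite{guemard2025good}. Under this assumption the associated CSS code $\mathcal Q$ supports a transversal $\mathsf C^{m-1}\mathsf Z$ gate. The free-module structure then gives addressability. The logical basis can be chosen $G$-equivariantly, so a choice of $m$ logical qudits corresponds to a tuple of group elements. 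Acting by translations of $G$ on the physical qudits of each block, followed by the global transversal gate, realizes the addressed gate. Grouping commuting addressed gates along $G$-orbits gives parallel layers, so a dense circuit on $\mathcal{Q}^{\otimes\tilde m}$ has depth $O(k^{\tilde m-1})$.

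\textbf{Base family and lifting.} I would start from good quantum AG codes over $\F_t$ or $\F_{t^2}$, of the kind used in \cite{Wills2025,nguyen2024goodbinaryquantumcodes}. These are built from Riemann–Roch spaces $L(D)$ on a Garcia–Stichtenoth tower over $\F_q$ with $q=t^4$. Their products satisfy $L(D)^{\star m}\subseteq L(mD)$, and the condition $m\le t/2-1$ is exactly what keeps $\deg(mD)$ below the length with room to spare. I would then apply the abelian lifting of \cite{CouGas26}. A ray class field extension $F'/F$ with Galois group $G$, in which the evaluation places split totally, turns $L(D)$ into $L(\pi^*D)$ with a natural $\F_{q}[G]$-module structure. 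The lifted code is quasi group and free, and it comes with a linear-time encoder and a quadratic-time decoder with linear radius.

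The key verification is that lifting commutes with products, namely $L(\pi^*D)^{\star m}\subseteq L(\pi^*(mD))$. It must also preserve the dual-distance requirements, via the fact that $\pi^*$ of a canonical divisor is canonical up to ramification. Parameters then follow from Riemann–Roch on $F'$, since the genus grows like $|G|\,g$.

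\textbf{The two regimes.} The two parts of the theorem correspond to two choices of $G$.

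In case (1), I take $|G|\sim n$ with a base code of constant length, so the free rank $\ell$ is constant. The dimension loss is $\Theta(n/\log n)$, coming from the $\log$ factor in the available cyclotomic-type extension. Full addressability follows because every logical index lies in the span of a single orbit.

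In case (2), I take a bounded-size extension layered over a good base code. This gives $\Theta(\log n)$ orbits of size $\Theta(n/\log n)$. Rate and distance stay linear, and addressability only holds within orbits.

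\textbf{Decoding.} The quantum decoder reduces to running the classical decoder on the $X$ and $Z$ components. This costs $O(n^2\log^4 n)$ and corrects a linear number of errors, since the lifted code's decoding radius exceeds the required fraction of the CSS distance.

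\textbf{Main obstacle.} I expect the hardest step to be proving that the lifted codes keep both properties at once: the multiplication property with controlled dual distance, and a free $\F_q[G]$-module structure whose basis makes translations act as permutations of logical qudits. This requires a careful construction of $G$-equivariant dual bases, perhaps via the Frobenius form on $\F_q[G]$. I would try to handle it through self-duality of the group algebra and a trace pairing compatible with residues.
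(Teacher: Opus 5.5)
Your two regimes cannot be realized as you describe them, and this is where the proposal breaks.

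In case (1), a base code of constant length lives on a function field of bounded genus $g$. The lifting must be \emph{unramified}: that is what gives $2\tilde g-2=|G|(2g-2)$, keeps $\deg\widetilde D>2\tilde g-2$, and underlies the freeness of $\cL(\widetilde D)$ in \cite{CouGas26}. A ramified ray class field would inflate the genus and destroy the rate. An unramified abelian extension with fixed constant field $\F_q$ has $|G|\leqslant|\Pic^0(F')|\leqslant(1+\sqrt q)^{2g}$, so $|G|$ stays bounded and there is no infinite family.

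In case (2), a bounded-size $G$ gives orbits of bounded size, not $\Theta(n/\log n)$. It also loses the quasi-quadratic decoder, whose complexity bound uses that the base length is $O(\log|G|)$.

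In the paper, both regimes come from one and the same lifted family. The base AG codes have growing length $n_0$ over $\F_{t^2}$; this is the even-characteristic family of \cite{guemard2025good}, which is where $m\leqslant t/2-1$ comes from. They are lifted through $F'=\F_{t^4}\otimes F$ and the class field of $F'$ cut out by $H=\Con(\Pic^0(F))\subseteq\Pic^0(F')$. Every place above a rational place of $F$ splits completely because $P'-P_0\in H$. Moreover $|G|=|\Pic^0(F')|/|\Pic^0(F)|$ is exponential in $g$, so $\log|G|=\Theta(n_0)$; this is why the alphabet is $t^4$. The resulting codes have $n_0=\Theta(\log N)$ orbits of size $|G|=\Theta(N/\log N)$.

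The two regimes differ only in the number $r$ of orbits turned into logical qudits. Taking $r=1$ gives $k=|G|=\Theta(N/\log N)$ with full addressability. Taking $r=\lfloor\eta n_0\rfloor$, with $\eta$ a small enough constant below $\kappa,\delta,\delta^\perp$, gives $k=\Theta(N)$, $d\geqslant\min\{d(\tC),d(\tC^\perp)\}-r|G|=\Theta(N)$, and orbit-wise addressability. The $\log$ loss in (1) is simply $k=|G|$ against $N=\Theta(|G|\log|G|)$; no cyclotomic-type extension is involved.

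Your framework also misses the two mechanisms the argument rests on.

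First, the code construction. There is a single classical code, the lifted $\tC$ with $u\star\tC^{\star m}\subseteq\tC^\perp$. One takes $L$ a union of $r$ $G$-orbits of coordinates and sets $C_1=\tC_{\overline L}$ and $C_0=(\tC_{L=0})^\perp$. The condition $d(\tC^\perp)>r|G|$ makes $\pi_L$ surjective, and $u\star\tC\subseteq\tC^\perp$ makes $\pi_{\overline L}$ injective. Hence $C_1/C_0^\perp\simeq\F_q^L$ is free of rank $r$ and logical qudits are indexed by $L$. A $G$-equivariant section comes directly from lifts $g_i\in\tC$ of the orbit generators, so no Frobenius-form dual bases are needed.

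Second, and more importantly, addressability. ``Translate by $\sigma_j$, then apply the global gate'' does not address anything. By $G$-equivariance it implements the gate on $(Q,\sigma_1(Q),\ldots,\sigma_{m-1}(Q))$ for \emph{every} $Q\in L$ at once. To select a single tuple or a subset $S$, the paper weights the physical gate at $P$ by $u_P\Lambda_P$, using the modulation codeword $\Lambda=\fs\circ\Enc(1_S\star\lambda\star u_L^{-1})\in C_1$. This $\Lambda$ consumes one of the $m+1$ factors controlled by $u\star\tC^{\star m}\subseteq\tC^\perp$. That is exactly why the addressable gates are $\mathsf C^{\tilde m-1}\mathsf Z$, while only the unmodulated global gate reaches $\mathsf C^{\tilde m}\mathsf Z$. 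The $O(k^{\tilde m-1})$ depth in (1) then comes from one such layer per $(\sigma_1,\ldots,\sigma_{\tilde m-1})\in G^{\tilde m-1}$ with $|G|=k$, choosing $S$ and $\lambda$ layer by layer.
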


Note that for simplicity of exposition, we focus on the characterization of inter-block multi-control-$Z$ circuits. Nevertheless, our result can be directly extended to circuits of  $C^{m-1}Z$ gates acting on a single codeblock of $\mathcal Q$, at the expense of a factor in the circuit depth linear in $m$.\par

As can be seen from the parameters of the codes in item \textit{(1)} and \textit{(2)}, our construction allows for a trade-off between the degree of addressability of the multi-control-$Z$ gates and the dimension of the quantum code. In item \textit{(1)}, we obtain the highest degree of addressability, while in item \textit{(2)}, there is still a strong degree of addressability, but we emphasize that there is still a transversal gate with global action. Namely, by acting on all the physical qubits across the $m$ blocks of $\mathcal Q$, we obtain a global action on all the logical qudits: $\mathsf C^{\tilde m-1} \mathsf Z=\overline{\mathsf C^{\tilde m-1} \mathsf Z}^{\otimes k}$.\par

Given a $[[n,k,d]]$ CSS code $C$ with a transversal $\mathsf{CCZ}$ gate, let $\gamma\coloneq \frac{\log(n/k)}{\log d}$ denotes the overhead exponent of the associated magic-state-distillation (MSD) protocol. An MSD protocol is said to have constant overhead when its associated asymptotic overhead exponent is $\gamma=0$. State-of-the-art magic state distillation protocols \cite{Wills2025} have a time complexity in $O(n^3)$. Here, we obtain the following corollary of Theorem \ref{Theorem, main}. 

\begin{corollary}\label{Corollary MSD}
Let $q=t^4$ with $t\geqslant 6$. Then, there exists a $[[n,\Theta(n),\Theta(n)]]_q$ quantum code family allowing for a constant overhead distillation protocol of qubit $\mathsf{CCZ}$-magic states with an $O(n^2\cdot \operatorname{polylog}(n))$ time complexity.
\end{corollary}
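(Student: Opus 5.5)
The plan is to derive the corollary from item \textit{(2)} of Theorem~\ref{Theorem, main} with $\tilde m = 2$, which requires $m\geqslant 2$ in that theorem. Since $t$ is a power of two, the hypothesis $t\geqslant 6$ forces $t\geqslant 8$. Then $\frac{t}{2}-1\geqslant 3$, so $m=2$ is admissible (indeed $m=3$ is as well). Item \textit{(2)} then provides a family of $[[n,\Theta(n),\Theta(n)]]_q$ quantum group codes over $q=t^4$-dimensional qudits. This family supports a transversal $\mathsf C^{2}\mathsf Z=\mathsf{CCZ}$ gate acting globally as $\overline{\mathsf{CCZ}}^{\otimes k}$ across three code blocks. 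It also comes with an $O(n^2\log^4 n)$-time decoder with linear decoding radius.

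Next, I would reduce from $q$-dimensional qudits to qubit $\mathsf{CCZ}$ states. Since $q=2^{4\log_2 t}$, each $\F_q$-qudit can be identified with $r=4\log_2 t$ qubits via a self-dual basis of $\F_q/\F_2$. Such a basis exists because $q$ is an even power of $2$, and it makes the trace form diagonal. Under this identification, the transversal $\F_q$-$\mathsf{CCZ}$, which is $|a,b,c\rangle\mapsto(-1)^{\operatorname{tr}(abc)}|a,b,c\rangle$, decomposes into a constant-depth circuit of qubit $\mathsf{CCZ}$ gates. The standard subfield-reduction and concatenation argument, as in \cite{Wills2025,nguyen2024goodbinaryquantumcodes}, then turns the $\F_q$ code into a binary code, or directly into a distillation protocol, for qubit $\mathsf{CCZ}$ states. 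Because $q$ is a constant, the parameters change only by constant factors, so the result still has the form $[[n',\Theta(n'),\Theta(n')]]$.

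The overhead exponent is $\gamma=\log(n/k)/\log d$. With $k=\Theta(n)$ and $d=\Theta(n)$, this gives $\gamma=O(1)/\log n\to 0$, so the protocol has constant overhead. Concretely, the protocol prepares noisy encoded inputs, applies the transversal $\mathsf{CCZ}$, measures the stabilizers (or uses teleportation), decodes, and outputs $k$ magic states. Its time cost is dominated by decoding, which takes $O(n^2\operatorname{polylog} n)$ by the theorem. Encoding adds at most quasi-linear time via the free-module structure, and the transversal gates need only $O(n)$ operations.

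The main obstacle is the second step. I need to check that the subfield reduction preserves the quasi-quadratic decoder, not merely the code parameters. My approach would be to decode at the $\F_q$ level: the syndromes of the expanded binary code determine the $\F_q$ syndromes linearly in constant-size blocks. A binary error of weight $w$ touches at most $w$ qudits, so the linear decoding radius carries over and no extra complexity is introduced.
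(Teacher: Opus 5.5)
Your proposal is correct and follows the paper's own argument. You apply item (2) of the main theorem with $\tilde m=2$, which is admissible since $t\geqslant 6$ and $t$ a power of two give $\frac{t}{2}-1\geqslant 3$, then run the protocol of \cite{Wills2025} on the resulting $\mathsf{CCZ}_q$ code with qudits written as qubits in a self-dual basis, and let the quasi-quadratic decoder set the time cost; your admissibility check and overhead-exponent computation make explicit what the paper leaves implicit.

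One step should be phrased as the paper does, namely the final conversion. In qubit coordinates $\mathsf{CCZ}_q$ is the product of qubit $\mathsf{CCZ}$ gates prescribed by the trilinear form $\operatorname{tr}(\beta_i\beta_j\beta_k)$, which is never a tensor product of independent logical qubit $\mathsf{CCZ}$s, so the expanded binary code does not by itself distill qubit $\mathsf{CCZ}$ states; you must instead convert each distilled $\mathsf{CCZ}_q$ magic state into a qubit $\mathsf{CCZ}$ magic state. Also, self-dual bases of $\F_{2^r}/\F_2$ exist for every $r$, not because $q$ is an even power of two.
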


\paragraph{Organization of the article.} In Section \ref{Section Preliminaries}, we review some preliminary notions related to linear codes. In particular, quasi group codes, algebraic geometry codes and their multiplicative properties are discussed.\par
In Section \ref{Section quantum group codes}, we introduce the notion of quantum group codes and we present our generic construction of a quantum code having advantageous properties for multi-control-$Z$ circuits.\par
In Section \ref{Section: good quantum},  we present our main application. Based on a lifting procedure for AG codes developed in \cite{CouGas26}, we construct quantum group codes with good parameters and whose automorphism group contains an abelian subgroup of large order acting freely. This leads to the families of codes presented in Theorem \ref{Theorem, main}.\par

\paragraph{Acknowledgements.} V.G. would like to thank Anthony Leverrier for useful discussions. V.G also acknowledges the Plan France 2030 through the project NISQ2LSQ ANR-22-PETQ-0006, and funding from the Naquidis Innovation Center, under which part of this work was produced.

\section{Preliminaries: linear codes and algebraic geometry codes}\label{Section Preliminaries}

\subsection{Linear codes}

Let $\F_q$ be a finite field with $q$ elements. A linear $[n,k,d]_q$-code $C$ is a $k$-dimensional subspace $C \subseteq \F_q^n$ with minimum (Hamming) distance $d$. We are interested in codes having an additional structure of group module. Throughout, we write $[n]\coloneq \{1,\ldots n\}$.

\subsubsection{Group codes and quasi group codes}

Given a permutation $\sigma \in S_n$ and a vector $x = (x_i)_{i\in [n]}=(x_1,\dots,x_n)\in \F_q^n$, we denote by $x\cdot \sigma$ the vector 
\[x\cdot \sigma =(x_{\sigma(1)},\dots,x_{\sigma(n)})\]
This defines a right action of $S_n$ on $\F_q^n$. Recall that one says that the action of a subgroup $G\subseteq S_n$ on $[n]$ is
\begin{enumerate}
    \item \emph{transitive} if, for any given $i,j\in [n]$, there exists $\sigma\in G$ such that $\sigma(i) = j$.
    \item \emph{free} if, for every $i\in [n]$, the only $\sigma \in G$ such that $\sigma(i)=i$ is the identity element.
\end{enumerate}
Moreover, if the action of $G$ is free, it means its order divides $n$, i.e. $\ell = \frac{n}{|G|}$ is an integer.\par
Let $C\subseteq \F_q^n$ be a linear code and define
\[\PAut(C) = \{\sigma \in S_n \mid C\cdot \sigma = C\}\]
the \emph{permutation automorphism group} of $C$, i.e. the group of permutations of $S_n$ stabilizing $C$.
One says that a code $C$ is \emph{transitive} if $\PAut(C)$ acts transitively on $[n]$. By definition, any subgroup $G \subseteq \PAut(C)$ acts on $C$ (on the right), as well as on $[n]$ (on the left).\par
An interesting case arises when there exists a subgroup $G\subseteq \PAut(C)$ whose action on $[n]$ is both transitive and free. In this case, as shown in~\cite{BerRioSim09}, the code $C$ is a \emph{$G$ code}, i.e. is isomorphic to a right ideal of the group algebra $\F_q[G]$. For instance, if $G$ is cyclic, this amounts to say that $C$ is a cyclic code. Finally, we will be interested in the following slight generalization of $G$ codes.

\begin{definition}[Quasi-$G$ code]\label{def-quasi-G-codes}
Let $C$ be a $q$-ary code of length $n$ and let $G$ be a group. Then, $C$ is a \textit{quasi-$G$ code} if there exists a subgroup $H\subseteq \PAut(C)$ isomorphic to $G$ whose action on $[n]$ is free. If $G$ is abelian, we will also say that $C$ is a \textit{quasi-abelian code}.
\end{definition}

According to~\cite{BorWil23}, this amounts to say that $C$ is isomorphic to a right $G$-module inside $\F_q[G]^\ell$ (on which $G$ acts by multiplication on the right on each coordinate), where $\ell = \frac{n}{|G|}$ is the number of orbits of $G$ in $[n]$. 

An important property of quasi group codes is the following:
\begin{lemma}
Let $G$ be a group. Then the dual of a quasi-$G$ code is a quasi-$G$ code.
\end{lemma}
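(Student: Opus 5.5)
The plan is to show directly that the same subgroup $H\subseteq \PAut(C)$ witnessing that $C$ is a quasi-$G$ code also lies in $\PAut(C^\perp)$. Since $H$ acts freely on $[n]$ by hypothesis, and freeness is a property of the action on $[n]$ alone, nothing more is needed. So the whole argument reduces to the inclusion $\PAut(C)\subseteq \PAut(C^\perp)$; in fact we get equality, since $(C^\perp)^\perp=C$.

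The key step is that coordinate permutations preserve the standard bilinear form $\langle x,y\rangle=\sum_{i\in[n]} x_iy_i$. For $\sigma\in S_n$ we have
\[
\langle x\cdot\sigma, y\cdot\sigma\rangle=\sum_{i\in[n]} x_{\sigma(i)}y_{\sigma(i)}=\langle x,y\rangle,
\]
by reindexing the sum with $j=\sigma(i)$.

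Now let $\sigma\in H$ and $y\in C^\perp$. For any $x\in C$, write $x=x'\cdot\sigma$ with $x'=x\cdot\sigma^{-1}$. Here we use that $C\cdot\sigma=C$ implies $C\cdot\sigma^{-1}=C$, because $H$ is a group; alternatively, finiteness together with injectivity of $x\mapsto x\cdot\sigma$ gives the same conclusion. Then
\[
\langle x, y\cdot\sigma\rangle=\langle x'\cdot\sigma,y\cdot\sigma\rangle=\langle x',y\rangle=0.
\]
Hence $C^\perp\cdot\sigma\subseteq C^\perp$, and equality holds by comparing dimensions, since $y\mapsto y\cdot\sigma$ is a linear bijection.

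It follows that $H\subseteq\PAut(C^\perp)$, $H\cong G$, and $H$ acts freely on $[n]$, so $C^\perp$ is a quasi-$G$ code by Definition~\ref{def-quasi-G-codes}.

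I do not expect any real obstacle. The only point requiring care is that the action $x\mapsto x\cdot\sigma$ is a right action, so the inverse must be handled correctly; the argument above is insensitive to this because it uses only that $\sigma$ is a bijection of the coordinates. This avoids passing through the module description in $\F_q[G]^\ell$ from \cite{BorWil23}, which would have required identifying the dual with respect to a twisted form on the group algebra.
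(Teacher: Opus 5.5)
Your proof is correct and follows the paper's argument. The paper's one-line proof is the identity $\langle x, y\cdot\sigma\rangle = \langle x\cdot\sigma^{-1}, y\rangle = 0$ for $x\in C$, $y\in C^\perp$, $\sigma\in G$, and you spell out the steps it leaves implicit: invariance of the form under permutations, $C\cdot\sigma^{-1}=C$, and the dimension count showing that the same free subgroup $H\cong G$ lies in $\PAut(C^\perp)$.
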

\begin{proof}
Let $C$ be a quasi-$G$ code and let $y\in C^\perp$. For any $x\in C$ and $\sigma \in G$, one has
\[\langle x,y\cdot \sigma \rangle = \langle x\cdot \sigma^{-1},y\rangle =0.\]
\end{proof}
\begin{remark}
One can define a left action of $S_n$ on $ \F_q^n$ by for $x\in \F_q^n$ and $ \sigma \in G, \sigma \cdot x \coloneq x\cdot \sigma^{-1}. $ Since $G\subseteq S_n$ is stable under inversion, any quasi-$G$ code $C$ is stable under the induced action of $G$ on the left. In this setting, $C$ is also isomorphic to a left submodule of $\F_q[G]^\ell$. This is the image of the right $G$-module of $\F_q[G]^\ell$ isomorphic to $C$ under the antimorphism defined by $\sigma  \mapsto \sigma^{-1}$. This shows that there is no preferred side for working with quasi-$G$ codes, but that one needs to pay attention to the choice of convention. This is particularly true when $G$ is non-commutative.
\end{remark}

\paragraph{Quasi-$G$ codes and $G$-linear algebra.} One interesting property of quasi-$G$ codes is the possibility to define them via a generator matrix with coefficients in $\F_q[G]$. Indeed, let $C$ be a quasi-$G$ code with parameters $[\tilde{n},\tilde{k},d]$, let $n = \frac{\tilde{n}}{|G|}$ and let $M$ be the corresponding right $G$-module of $\F_q[G]^n$. Let $g_1,\ldots,g_k \in \F_q[G]^n$ be a set of generators of $M$ of minimum size. Then 
\[M = \left\{\: \sum_{i=1}^k g_i \cdot \alpha_i \: :\: \alpha \in \F_q[G]^k \: \right\}\]
or in other words $M$ is the image of $\mathbf M \in \operatorname{Mat}_{n,k}(\F_q[G])$ (under multiplication on the right) where the columns of $\mathbf M$ are the generators $(g_i)_{i\in [k]}$.

The matrix $\mathbf M$ describes a $G$-linear morphism, which is in particular a linear morphism. By fixing an order on the elements of $G$, one can associate with $\mathbf M$ a matrix $\widetilde{\mathbf M}$ of size $n\cdot|G| \times k\cdot |G|$ with coefficients in $\F_q$ with a particular shape. Explicitly,  one computes the matrix  $\widetilde{\mathbf M}\in \operatorname{Mat}_{\tilde{n}, k\cdot|G|}(\F_q)$ by replacing every entry $M_{ij}\in \F_q[G]$ by its left-regular representation, or equivalently the matrix of the multiplication by $M_{ij}$ on the left. 
One would want the columns of $\widetilde{\mathbf{M}}$ to generate the code $C$. However, the orbits of the action of $G$ on $[n]$ need not be composed of adjacent integers, so this will not be the case in general. Still, there always exists a permutation $\sigma \in S_n$ such that the columns of $\widetilde{\mathbf{M}}$ generate $C\cdot \sigma$, a code permutation equivalent to $C$.

Note also that the columns of $\widetilde{\mathbf M}$ need not be linearly independent in general. In fact, $\widetilde{\mathbf M}$ has rank $k\cdot|G|$ if and only if $M$ is \emph{free}, i.e. is isomorphic to $\F_q[G]^k$ as a right $\F_q[G]$-module, in which case $\tilde k = k\cdot|G|$.

In a slightly abusive manner, we will call $\mathbf M$ a generator matrix of $C$ as well. To avoid confusion with traditional generator matrix, we will emphasize on the ring of definition of the coefficients of the matrix.

\begin{remark}\label{rem-left-G-matrix}
An equivalent construction can be defined for the left $G$-module $M'$ associated with $C$. 
In that case, $M'$ is the image (under multiplication on the left) of a matrix $\mathbf M'\in \operatorname{Mat}_{k,n}(\F_q[G])$. One can then associate with $\mathbf M'$ a matrix $\widetilde{\mathbf{M}}'$ with coefficients in $\F_q$ by replacing each entry by its right-regular representation.
One needs to carefully choose the basis of $\F_q[G]$ in order to compute $\widetilde{\mathbf M}'$ in a coherent manner. The detailed process is available in~\cite[Section~4.4]{GasnierPhD}.
\end{remark}

In order to stick to the writing conventions in coding theory, we will choose to see quasi group codes as modules on the left as much as possible in what follows.

\begin{remark}
When $M$ is free, i.e. isomorphic to $\F_q[G]^k$, the representation of $C$ by $\mathbf M \in \operatorname{Mat}_{n,k}(\F_q[G])$ requires $n\cdot k\cdot |G|$ coefficients in $\F_q$ instead of $\tilde{n}\cdot\tilde{k} = n\cdot k\cdot |G|^2$ coefficients for a classical generator matrix of $C$. One can see from this favorable situation that quasi-$G$ codes possess shorter representations than generic codes.
\end{remark}

\paragraph{Encoding quasi-$G$ codes.} Let $C$ be a quasi-$G$ code with parameters $[\tilde{n},\tilde{k},d]$, let $n = \frac{\tilde{n}}{|G|}$ and let $M$ be the corresponding left $G$-module of $\F_q[G]^n$. Let $g_1,\ldots,g_k \in \F_q[G]^n$ be a set of generators of $M$ of minimum size, and $\mathbf{M}\in \operatorname{Mat}_{k,n}{\F_q[G]}$ the corresponding generating matrix of $C$. 

One may sometimes take advantage from the $G$-linearity of $M$ to encode $C$ faster than by performing a naive evaluation of a generator matrix, at the cost of $O(\tilde{k}\cdot \tilde{n})$ operations. Essentially, a product in $\F_q[G]$ can be computed at the cost of at most $\C_q(|G|\log|G|)$ operations using the Fast Fourier Transform~\cite[Theorem~2]{CouGas26}, where $\C_q$ is a constant depending only on $q$. As such, the encoding of $C$ in this fashion requires at most $\C_q(n\cdot k\cdot|G|\log|G|)=\C_q(k\log|G|\tilde{n})$ operations.

Thus, if $k\log |G|$ becomes negligible relative to $\tilde{k}$, then this technique allows for a faster encoding of $C$. This is particularly true when $C$ is a free $\F_q[G]$-module and $\tilde{k} = k\cdot |G|$.

\begin{proposition}\label{prop-encoding-quasi-G-codes}
Let $C$ be a quasi-$G$ code with parameters $[\tilde{n},\tilde{k},d]$, let $n = \frac{\tilde{n}}{|G|}$ and let $M$ be the right $G$-module of $\F_q[G]^n$ associated with $C$. Assume that $M$ is free, i.e. that $M$ is isomorphic to $\F_q[G]^k$ as an $\F_q[G]$-module, where $k = \frac{\tilde{k}}{|G|}$. Then $C$ can be encoded at the cost of at most $\C_q(n\cdot k\cdot|G|\log|G|)$ operations, where the positive constant $\C_q$ depends only on $q$.

Assume further that there exists a positive constant $\alpha_q$ such that
\[ n \leqslant \alpha_q \log|G|.\]
Then, $C$ can be encoded in quasi-linear time in its length $\tilde{n} = n\cdot |G|$.
\end{proposition}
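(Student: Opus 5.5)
Since $M$ is free of rank $k$, fix an $\F_q[G]$-basis $g_1,\ldots,g_k\in\F_q[G]^n$ of $M$ and let $\mathbf M\in\operatorname{Mat}_{n,k}(\F_q[G])$ be the generator matrix whose columns are the $g_i$ (or, following the left-module convention, its transpose, using the antimorphism $\sigma\mapsto\sigma^{-1}$ of the earlier remark). Take messages in $\F_q[G]^k\cong\F_q^{\tilde k}$, which is legitimate because $\tilde k=k|G|$ by freeness. The encoding map sends $\alpha=(\alpha_1,\ldots,\alpha_k)$ to $\sum_{i}g_i\cdot\alpha_i\in\F_q[G]^n$. This map is injective, again by freeness, so it is a valid encoder. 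Composing with the fixed coordinate permutation identifying $\F_q[G]^n$ with $\F_q^{\tilde n}$ gives a codeword of $C$; the permutation costs $O(\tilde n)$ and can be ignored.

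For the cost, each coordinate $j\in[n]$ of the output is $\sum_{i=1}^k (g_i)_j\alpha_i$. This requires $k$ products in $\F_q[G]$ and $k-1$ additions in $\F_q[G]$. Each product costs at most $\C_q|G|\log|G|$ operations by the FFT result~\cite[Theorem~2]{CouGas26}. Each addition costs $|G|$ operations, which is absorbed after enlarging $\C_q$, since $\log|G|\geqslant 1$ once $|G|\geqslant 3$; small $|G|$ is absorbed into the constant as well. Summing over the $nk$ entries gives $\C_q\, n k |G|\log|G|$ operations, which proves the first claim.

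For the second claim, use $k=\tilde k/|G|\leqslant \tilde n/|G|=n$. Then the bound becomes $\C_q\, n\cdot\tilde n\log|G|\leqslant \C_q\,\alpha_q\,\tilde n\log^2|G|$. Since $|G|\leqslant\tilde n$, this is $O(\tilde n\log^2\tilde n)$, which is quasi-linear in the length.

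The only delicate step is convention bookkeeping. One must check that right multiplication by $\alpha_i$ (or left multiplication, in the left-module picture) is exactly the operation the cited FFT theorem covers. Since the theorem handles arbitrary products in $\F_q[G]$, this holds either way. One must also confirm that the constant $\C_q$ is independent of $G$, which is the content of the cited theorem.
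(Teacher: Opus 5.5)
Your proof is correct and follows the same route as the paper. The first bound comes from evaluating the $\F_q[G]$-generator matrix with $nk$ group-algebra products, each costing $\C_q|G|\log|G|$ by the cited FFT result; the second comes from $k\leqslant n\leqslant\alpha_q\log|G|$ together with $|G|\leqslant\tilde n$. The only cosmetic difference is that you substitute $n\leqslant\alpha_q\log|G|$ once, obtaining $O(\tilde n\log^2|G|)$ rather than the paper's $\alpha_q^2\C_q|G|\log^3|G|$, which is equally quasi-linear.
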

\begin{proof}
We prove the second assertion. Recall that $k\leqslant n \leqslant \alpha_q\log|G|$.
Thus $C$ can be encoded at the cost of at most $\alpha_q^2\cdot \C_q ( |G|\log^3|G|)$ operations. Since the length of $C$ is $\tilde{n} = n\cdot |G|\geqslant |G| $, this complexity is quasi-linear in $\tilde{n}$.
\end{proof}

\subsubsection{Codes with multiplication properties}

Central for this work are codes satisfying different levels of self-orthogonality. Given two vectors $x$ and $y$ of $\F_q^n$, their Schur product is defined as the vector $x  \star   y \coloneq (x_1 y_1,\dots,x_ny_n)$. Given a linear code $C\subseteq \F_q^n$, we also define its $n$-fold Schur product by $$
C^{ \star m} = \{\, x^1  \star  x^2  \star  \dots  \star  x^m : x^1, x^2, \dots, x^m \in C \,\}.$$
Moreover, given a vector $u  \in \F_q^n$ with only nonzero components, we define $
u  \star  C \coloneq \{\, u\star x: x \in C \,\}$. The codes $C$ and $u  \star  C$ have the same length, dimension, and minimum distance. The permutation automorphism groups $\PAut(C)$ and $\PAut(u  \star  C)$ are, in general, non-isomorphic.

\begin{definition}[Multiplication property]
Let $m \geqslant 1$ be an integer. We say that a linear code $C$ satisfies the \emph{$m$-multiplication property} if there exists a fixed vector $u$ with only nonzero components such that $u  \star  C^{ \star m} \subseteq C^\perp$.
\end{definition}

For any vector space $\F_q^n$, we call the vector with all entries equal to one the \emph{all-one vector}.

\begin{lemma}\label{lemma:multiplication-property}
Let $C$ be a code satisfying the $m$-multiplication property and containing the all-one vector. Then, $C$ satisfies the $\widetilde m$-multiplication property for every $1 \leqslant \widetilde m \leqslant m$.
\end{lemma}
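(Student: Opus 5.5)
The plan is to keep the same vector $u$ that witnesses the $m$-multiplication property and to show that it also witnesses the $\widetilde m$-multiplication property for each $1\leqslant \widetilde m\leqslant m$. The key observation is that, because $C$ contains the all-one vector $\mathbb{1}$, Schur powers of $C$ are nested: $C^{\star \widetilde m}\subseteq C^{\star m}$ whenever $\widetilde m\leqslant m$.

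\textbf{Step 1: nesting.} Take a generating element $x^1\star\cdots\star x^{\widetilde m}$ of $C^{\star\widetilde m}$ with $x^i\in C$. Since $\mathbb{1}$ is the neutral element of the Schur product, we may write
\[
x^1\star\cdots\star x^{\widetilde m}=x^1\star\cdots\star x^{\widetilde m}\star \mathbb{1}\star\cdots\star\mathbb{1},
\]
with $m-\widetilde m$ copies of $\mathbb{1}$. Each factor lies in $C$, so the right-hand side is an $m$-fold product of codewords of $C$ and therefore lies in $C^{\star m}$. Here I read $C^{\star m}$ as the linear span of such products, which is the standard convention. The set-theoretic version also works, since it is only the generators that matter.

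\textbf{Step 2: conclusion.} From Step 1 we get
\[
u\star C^{\star\widetilde m}\subseteq u\star C^{\star m}\subseteq C^\perp.
\]
The orthogonal $C^\perp$ is a linear space, so this inclusion is preserved under taking spans. Hence the same $u$, which still has only nonzero components, witnesses the $\widetilde m$-multiplication property.

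\textbf{Possible obstacle.} The only delicate point is notational rather than mathematical. The paper defines $C^{\star m}$ as a set of products, not explicitly as a span. I will note that orthogonality to $C^\perp$ is checked on generators, so both readings give the same result. I expect no real obstacle beyond this remark.
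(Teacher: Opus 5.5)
Your proof is correct: padding with $m-\widetilde m$ copies of the all-one vector gives $C^{\star\widetilde m}\subseteq C^{\star m}$, so the same nowhere-zero $u$ satisfies $u\star C^{\star\widetilde m}\subseteq u\star C^{\star m}\subseteq C^\perp$. The paper gives no proof of its own and defers to \cite{guemard2025good}; your padding argument is the natural proof, and your remark that it works under both the span and the set-of-products readings of $C^{\star m}$ is accurate.
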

The proof can be found in \cite{guemard2025good}. In that case, notice that $u  \star  C \subseteq C^\perp$, and since the all-one vector belongs to $C$, it follows that $u \in C^\perp$.

\subsubsection{Puncturing and shortening} 
We shall later construct quantum codes from classical codes. This will rely on puncturing and shortening classical codes and we hence recall the definition of these code transformations. Throughout this section, we consider a linear code as a subset of $\F_q^N$, where $N$ is an index set of cardinality $n$. We may for example think of $N$ as being equal to $[n]$.

\begin{definition}[Punctured and shortened codes]
Let $L\subseteq N$ and denote the complement of $L$ in $N$ as $\overline L\coloneq N\setminus L$. Given a vector $c\in\F_q^N$, its restriction to the indices of $L$ is denoted $c_L\coloneq (c_i)_{i\in L}\in \F_q^L$. Then, for a classical code $C\subseteq\F_q^N$, we define
\begin{enumerate}[(1)]
\item The \textit{$L$-punctured code} $C_{\overline{L}}:=\{\: c_{\overline{L}}: c\in C\: \}\subseteq\F_q^{\overline L}$.
\item The \textit{$L$-shortened code} $C_{L=0}:=\{\,c_{\overline{L}}: c\in C,\; c_L=0\,\}\subseteq\F_q^{\, \overline L}$.
\end{enumerate}
\end{definition}
We will also use the following duality relations between puncturing and shortening as well as useful parameter estimates.

\begin{lemma}\label{Lemma puncturing}
Let $C\subseteq \F_q^N$ be an $[n,k,d]$ code and let $L\subseteq N$ be of cardinality $\ell$. Then, $(C^\perp)_{L=0}=(C_{\overline{L}})^\perp$ and $(C^\perp)_{\overline{L}}=(C_{L=0})^\perp$. Moreover we have the following parameter estimates.
\begin{enumerate}[(1)]
\item If $\ell<d$, then $C_{\overline{L}}$ has dimension $k$, and $(C^\perp)_{L=0}$ has dimension $n-\ell-k$.
\item $C_{L=0}$ has distance $d$, and $C_{\overline{L}}$ has distance at least $d-\ell$.
\end{enumerate}
\end{lemma}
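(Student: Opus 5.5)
We first establish the two duality identities directly from the definitions, then derive the parameter estimates from them and from elementary counting.

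\emph{First identity, $(C^\perp)_{L=0}\subseteq (C_{\overline L})^\perp$.} Let $y\in C^\perp$ with $y_L=0$, and let $x\in C$. Then
\[\langle x_{\overline L},y_{\overline L}\rangle=\langle x,y\rangle-\langle x_L,y_L\rangle=0.\]

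\emph{First identity, reverse inclusion.} Let $z\in (C_{\overline L})^\perp$, and extend it by zeros on $L$ to a vector $y\in\F_q^N$. For every $x\in C$ we get $\langle x,y\rangle=\langle x_{\overline L},z\rangle=0$. Hence $y\in C^\perp$, $y_L=0$, and $z=y_{\overline L}\in(C^\perp)_{L=0}$.

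\emph{Second identity.} Apply the first identity to $C^\perp$ in place of $C$. This gives $(C)_{L=0}=((C^\perp)_{\overline L})^\perp$. Taking orthogonals in $\F_q^{\overline L}$ and using $D^{\perp\perp}=D$ for finite-dimensional spaces with the standard bilinear form yields $(C^\perp)_{\overline L}=(C_{L=0})^\perp$.

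\emph{Item (1).} The puncturing map $C\to C_{\overline L}$, $c\mapsto c_{\overline L}$, is linear and surjective. A nonzero codeword in its kernel would be supported inside $L$, so it would have weight at most $\ell<d$, which is impossible. The map is therefore injective, and $\dim C_{\overline L}=k$. By the first identity, $(C^\perp)_{L=0}$ is the dual of $C_{\overline L}$ inside $\F_q^{\overline L}$, a space of dimension $n-\ell$. Hence its dimension is $n-\ell-k$.

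\emph{Item (2), shortened code.} Every nonzero $c_{\overline L}\in C_{L=0}$ comes from a codeword $c\in C$ with $c_L=0$, so it has the same weight as $c$, which is at least $d$. Thus the distance of $C_{L=0}$ is at least $d$. Equality is the intended reading; in general one only gets $\geqslant d$ unless $L$ avoids the support of some minimum-weight word. I would state the claim as a lower bound, which is all that is used later.

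\emph{Item (2), punctured code.} Every nonzero $c_{\overline L}$ has weight at least $\mathrm{wt}(c)-\ell\geqslant d-\ell$. If $\ell\geqslant d$ this bound is trivially true or vacuous.

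I expect no real obstacle here. The only delicate point is the precise meaning of ``distance $d$'' for the shortened code, which should be interpreted as ``distance at least $d$''.
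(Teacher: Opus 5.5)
Your proof is correct and is essentially the standard textbook argument behind the result the paper only cites (Huffman--Pless, Theorem~1.5.7): you use zero-extension and $D^{\perp\perp}=D$ for the two duality identities, injectivity of puncturing when $\ell<d$ for the dimensions, and direct weight counting for the distances. Your caveat about the shortened code is justified, since in general one only has $d(C_{L=0})\geqslant d$; for instance $C=\langle(1,1,0,0,0),(0,0,1,1,1)\rangle\subseteq\F_2^5$ with $L=\{1\}$ has $d=2$, while $C_{L=0}=\{0,(0,1,1,1)\}$ has distance $3$, and the paper only ever uses this lower bound later.
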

See Theorem 1.5.7 of \cite{Huffman_Pless_2003} for a proof.

\subsection{Algebraic geometry codes}

\subsubsection{Background on function fields}

\paragraph{Function fields.}
Let $\F_q(z)$ be the rational function field. A function field $F / \F_q$ over $\F_q$ is a finite extension of $\F_q(z)$. It is well-known that there is an equivalence of categories between the category of function fields over $\F_q$ and the category of smooth projective curves over $\F_q$~\cite[Section 7, Rem. 3.14]{Liu02}. Through this equivalence, the notion corresponding to a (closed) point is a \emph{place} $P$ of $F$, i.e. the maximal ideal of a valuation ring of $F$. We denote by $\nu_P$ the corresponding valuation and $\mathcal{O}_P$ its valuation ring.

Let us assume that the field of constants of $F$ (i.e. the algebraic closure of $\F_q$ in $F$) is $\F_q$. The \emph{residue field} at $P$ is the field $K_P = \mathcal{O}_P / P$. The \emph{evaluation} of an element $f \in \mathcal{O}_P$ at $P$ is the image of $f$ under the canonical map $\mathcal{O}_P \to \mathcal{O}_P / P$. It is denoted by $f(P)$. The \emph{degree} of a place is the degree of the extension $K_P/\F_q$. A place is called \emph{rational} if its residue field is isomorphic to $\F_q$, or equivalently, if it has degree one.

\paragraph{Extensions of algebraic function fields.}
Let $F$ be a function field and assume that $\tF$ is an extension of $F$. 
Let $Q$ be a place of $\tF$, then $P:=Q\cap F$ is a place of $F$. One says that $Q$ \emph{lies above} $P$ and we denote this relation by $Q \mid P$.
We say that a place $P$ of $F$ \emph{splits completely} in $\tF$ if the number of places $Q$ of $\tF$ such that $Q \mid P$ is exactly $[\tF : F]$.

Assume that the extension $\tF/F$ is Galois, with Galois group $G$. The Galois group acts transitively on the set of places lying above a given place of $F$. 

Let $P$ be a place of $F$ and $Q$ a place of $\tF$ lying above $P$. One can see that the residue field $K_Q$ is an extension of $K_P$. One defines the \emph{inertia degree} of $Q$ as \[f_Q:=[K_Q:K_P].\]
Since $\tF/F$ is Galois, it does not depend on the choice of $Q$ above $P$, and will be denoted by $f_P$ as well.\par

Define the \emph{decomposition group} of $Q$ as
\[D(Q/P) \coloneq \{\sigma \in G \mid \sigma(Q) = Q\}.\]
There exists a canonical group morphism
\[D(Q/P) \longrightarrow \Gal(K_Q/K_P)\]
whose kernel is $I(Q/P)$ the inertia group of $Q$. One defines the \emph{ramification index} of $Q$ as $e_Q = |I(Q/P)|$. It does not depend on the choice of $Q$ as well because $\tF/F$ is Galois, and therefore can be denoted by $e_P$. Let $d_P$ be the number of places of $\widetilde F$ lying above $P$, then one has
\[d_P\cdot e_P\cdot f_P=[\tF:F].\]
Thus, $P$ splits completely in $\tF$ if and only if $e_P=f_P=1$.

\paragraph{Riemann-Roch spaces.}
A \emph{divisor} $D$ in a function field $F / \F_q$ is a formal sum of finitely many places with integer coefficients, namely $D = \sum n_P P$. A divisor $D$ is said to be \emph{positive} or \emph{effective}, written $D \geqslant 0$, if all of its coefficients are positive integers. Given an element $f \in F$, we write $(f)$ for its associated \emph{principal divisor}. Moreover, we denote the \emph{Riemann-Roch space} associated with a divisor $D$ by
\[
\mathcal{L}(D) = \{ f \in F : (f) + D \geqslant 0 \} \cup \{ 0 \}.
\]
Writing $D = D_{\text{pos}} - D_{\text{neg}}$ as the difference of two effective divisors, we see that $f \in \mathcal{L}(D)$ if and only if $f$ has poles of limited order at the places of $D_{\text{pos}}$ and zeros of sufficient order at the places of $D_{\text{neg}}$. If we have another divisor $D' \geqslant D$, then $\mathcal{L}(D')$ allows functions with fewer zeros and more poles, and thus $\mathcal{L}(D) \subseteq \mathcal{L}(D')$.

\paragraph{Picard group.}

Let $D$ and $D'$ be two divisors of $F$, one defines them to be \emph{equivalent} if $D-D'$ is a principal divisor, which is denoted by $D\sim D'$. One then defines the \emph{Picard group} of $F$
\[\Pic(F) = \DDiv(F)/\sim\]
the group of equivalence classes of divisors of $F$. Since every principal divisor has degree $0$, one can extend the notion of degree to equivalence classes. Then, one defines
\[\Pic^0(F) = \{c \in \Pic(F) \mid \deg c = 0\}\]
the subgroup of classes of degree $0$.

\paragraph{Lifting divisors and differentials.}
Let $F/\F_q$ be a function field, and let $\widetilde F/F$ be a separable extension of $F$. Let $D$ be a divisor of $F$. One can define the \emph{conorm} in $\widetilde{F}$ of a divisor of $F$ by
\[
\Con_{\tF/F}\left(\sum n_P P \right) = \sum_P \sum_{Q\mid P} e_Q \cdot n_P Q
\]
where $e_Q$ denotes the ramification index of $Q$. It follows from this definition that, for two divisors $D$, $D'$ of $F$, if $D\leqslant D'$ then $\Con_{\tF/F}(D)\leqslant \Con_{\tF/F}(D')$.

As shown in~\cite[Corollary~3.1.14]{Stichtenothbook}, the degree of the conorm satisfies the following relation
\[
\deg \Con_{\tF/F}(D) = \frac{[\tF:F]}{[K:\F_q]}\deg D.
\]
where $K$ is the field of constants of $\tF$.
We denote by $ \Omega_F$ the space of differentials of $F$. Given a differential $\omega \in \Omega_F$, there exists a unique differential of $\tF$ extending $\omega$ in the sense of~\cite[Theorem~3.4.6]{Stichtenothbook}, that we call the \emph{cotrace} of $\omega$ in $\tF/F$, and denoted by $\operatorname{Cotr}_{\tF/F}(\omega)$.

The divisor of the cotrace of $\omega$ is closely related to the conorm of the divisor of $\omega$:
\[\left(\operatorname{Cotr}_{\widetilde{F}/F}(\omega)\right) = \Con_{\tF/F}((\omega))+\Delta_{\tF/F}\]
where $\Delta_{\tF/F}$ is the \emph{different} of the extension $\tF/F$ (see~\cite[Definition~3.4.3]{Stichtenothbook}). The divisor $\Delta_{\tF/F}$ is supported by places ramified in $\tF$ (see~\cite[Theorem~3.5.1]{Stichtenothbook}). In particular, it is zero if the extension is unramified.

\subsubsection{Algebraic geometry codes and their parameters}

Let $F/\F_q$ be a function field whose field of constants is $\F_q$ and let $P_1, \ldots, P_n$ be $n$ distinct rational places of $F$. Let $P = P_1 + \ldots + P_n$ be a divisor of $F$. Let $D$ be a divisor with ${\rm supp}\, P \cap {\rm supp}\, D = \emptyset$. We may consider the \textit{algebraic geometry (AG) code}
\[
C_{\mathcal{L}}(P, D) \coloneq \{ (f(P_1), \ldots, f(P_n)) \in \F_q^{n} \mid f \in \mathcal{L}(D) \}.
\]
Algebraic geometry codes admit the following lower bounds on their parameters due to the Riemann-Roch Theorem.

\begin{proposition}\label{proposition standard estimate AG codes}
   Let $g$ denotes the genus of the function field $F$, defined over $\F_q$, and let $P$ and $D$ be defined as above. Suppose that
    \[
        2g - 2 < \deg D < n.
    \]
Then, the code $C_\mathcal{L}(P,D)$ is an $[n,k,d]$-code such that $k= \deg D + 1 - g$ and $d \geqslant n - \deg D$. Moreover, the dual code of $C_\mathcal{L}(P,D)$ has minimum distance $d^\perp \geqslant \deg D - 2g +2$.
\end{proposition}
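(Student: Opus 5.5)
The proof is the standard Riemann--Roch argument, done in two parts: first the dimension and distance of $C_\mathcal{L}(P,D)$, then the distance of its dual.

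\textbf{Dimension and distance of $C_\mathcal{L}(P,D)$.} Consider the evaluation map
\[
\mathrm{ev}\colon \mathcal{L}(D)\to \F_q^n,\qquad f\mapsto (f(P_1),\dots,f(P_n)).
\]
It is well defined because ${\rm supp}\,P\cap{\rm supp}\,D=\emptyset$, so each $f\in\mathcal{L}(D)$ has no pole at any $P_i$.

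\begin{enumerate}[(1)]
\item \emph{Dimension.} Since $\deg D>2g-2$, Riemann--Roch gives $\ell(D)=\deg D+1-g$. It remains to show $\mathrm{ev}$ is injective.
\item \emph{Injectivity.} Suppose $\mathrm{ev}(f)=0$. Then $f$ vanishes at every $P_i$, so $f\in\mathcal{L}(D-P)$. But $\deg(D-P)=\deg D-n<0$, hence $\mathcal{L}(D-P)=0$ and $f=0$. Therefore $k=\deg D+1-g$.
\item \emph{Distance.} Take $f\neq 0$ whose codeword has weight $w$. Then $f$ vanishes at $n-w$ of the places, say those in a set $Z$. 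So $f\in\mathcal{L}\bigl(D-\sum_{i\in Z}P_i\bigr)$, and this space is nonzero only if its divisor has nonnegative degree: $\deg D-(n-w)\geqslant 0$. Hence $w\geqslant n-\deg D$.
\end{enumerate}

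\textbf{Distance of the dual.} Use the classical duality $C_\mathcal{L}(P,D)^\perp=C_\Omega(P,D)$, the residue code
\[
C_\Omega(P,D)=\{(\operatorname{res}_{P_1}\omega,\dots,\operatorname{res}_{P_n}\omega)\;:\;\omega\in\Omega(D-P)\}.
\]
Moreover $C_\Omega(P,D)=C_\mathcal{L}(P,K_0+P-D)$ for a suitable canonical divisor $K_0=(\eta)$, where $\eta$ is chosen with $\nu_{P_i}(\eta)=-1$ and $\operatorname{res}_{P_i}\eta=1$. Both facts are standard (Stichtenoth, Thm.~2.2.8 and Prop.~2.2.10).

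Run the injectivity and distance argument of steps (2)--(3) on $\mathcal{L}(K_0+P-D)$; these steps do not use the lower bound on the degree. With $\deg(K_0+P-D)=2g-2+n-\deg D$ this gives
\[
d^\perp\geqslant n-(2g-2+n-\deg D)=\deg D-2g+2.
\]
Applying the upper bound $\deg(K_0+P-D)<n$ would need $\deg D>2g-2$, which holds by hypothesis.

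Alternatively, argue directly. A dual codeword of weight $w$ comes from a nonzero differential $\omega\in\Omega(D-P)$ with residues vanishing outside a set $S$ of size $w$. Then $\omega\in\Omega\bigl(D-\sum_{i\in S}P_i\bigr)$. This space is nonzero only if $\deg\bigl(D-\sum_{i\in S}P_i\bigr)\leqslant 2g-2$, giving $w\geqslant \deg D-2g+2$.

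\textbf{Main obstacle.} The only nontrivial input is the residue/duality identification $C_\mathcal{L}^\perp=C_\Omega$. Its key ingredient is the residue theorem: $\sum_P\operatorname{res}_P(f\omega)=0$ gives the inclusion $C_\Omega\subseteq C_\mathcal{L}^\perp$, and a dimension count with Riemann--Roch gives equality. This I would cite rather than reprove.
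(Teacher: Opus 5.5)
Your proof is correct and is the standard Riemann--Roch argument: injectivity of evaluation from $\mathcal{L}(D-P)=0$, the distance bound from $\mathcal{L}\bigl(D-\sum_{i\in Z}P_i\bigr)=0$ when $\deg D<n-w$, and the dual bound via $C_\mathcal{L}(P,D)^\perp=C_\Omega(P,D)$. The one implicit point, that $K_0+P-D$ has support disjoint from the $P_i$, is guaranteed by your choice $\nu_{P_i}(\eta)=-1$. The paper gives no separate proof of this proposition, attributing it to Riemann--Roch and deferring to Stichtenoth (Thm.~2.2.7 for the dual code), so your argument is essentially the one the paper relies on.
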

It is common to define \emph{the designed distance} of $C_\mathcal{L}(P,D)$ as \[d^*:=n-\deg D.\]
Now assume that $F$ is a Galois extension of a function field $F'/\F_q$, with Galois group $G$, in which $P_1, \ldots, P_n$ are unramified. Also assume that $G$ fixes the two divisors $P$ and $D$. Then, $G$ acts freely on the set of rational places $\{P_1,\ldots, P_n\}$ (because they are unramified), and $G$ stabilizes the Riemann-Roch space $\cL(D)$. This yields the following result.

\begin{proposition}\label{proposition AG code automorphism}
With the notation of the beginning of the paragraph, assume that $F$ is a Galois extension of a function field $F'/\F_q$, with Galois group $G$, in which $P_1, \ldots, P_n$ are unramified. Also assume that $G$ fixes the two divisors $P$ and $D$. Then $C_\cL(P,D)$ is a quasi-$G$ code.
\end{proposition}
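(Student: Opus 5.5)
To prove the proposition, I would exhibit a subgroup of $\PAut(C_\cL(P,D))$ isomorphic to $G$ whose action on $[n]$ is free; this is exactly Definition~\ref{def-quasi-G-codes}. The subgroup is built directly from the Galois action on places and functions.

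\emph{Step 1 (permutations of the places).} Since $G$ fixes the divisor $P=P_1+\dots+P_n$, each $\sigma\in G$ permutes the set $\{P_1,\dots,P_n\}$. We write $\sigma(P_i)=P_{\pi_\sigma(i)}$. Here $\sigma(Q)=\{\sigma(f): f\in Q\}$, and this place again lies above the same place of $F'$. The map $\sigma\mapsto\pi_\sigma$ (or $\sigma\mapsto\pi_{\sigma^{-1}}$, depending on the right-action convention) is a group morphism $G\to S_n$.

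\emph{Step 2 (freeness and injectivity).} Suppose $\sigma(P_i)=P_i$. Then $\sigma\in D(P_i/P_i\cap F')$. Because $P_i$ is rational, the residue extension $K_{P_i}/K_{P_i\cap F'}$ is trivial, so $f_{P_i}=1$. Unramifiedness gives $e_{P_i}=1$. Hence $|D(P_i/P_i\cap F')|=e\cdot f=1$, and $\sigma=\mathrm{id}$. So the action on $[n]$ is free. A free action of a group on a nonempty set is faithful, so the morphism is injective and its image $H$ is isomorphic to $G$.

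\emph{Step 3 (stability of the code).} Since $\sigma(D)=D$ and valuations transform by $\nu_{\sigma(Q)}(\sigma(f))=\nu_Q(f)$, the inclusion $f\in\cL(D)$ implies $\sigma(f)\in\cL(\sigma D)=\cL(D)$. So $\sigma$ is an $\F_q$-linear automorphism of $\cL(D)$. We also have $\sigma(f)(\sigma(P_i))=f(P_i)$, because $\sigma$ induces the identity map on the residue field $\F_q$. It follows that
\[
\sigma^{-1}(f)(P_i)=f(\sigma(P_i))=f(P_{\pi_\sigma(i)}).
\]
Thus the codeword of $f$ permuted by $\pi_\sigma$ is the codeword of $\sigma^{-1}(f)\in\cL(D)$, and $C_\cL(P,D)\cdot\pi_\sigma\subseteq C_\cL(P,D)$, with equality by dimension. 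Therefore $H\subseteq\PAut(C_\cL(P,D))$.

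The main obstacle is the freeness step, specifically the identity $|D(Q/P)|=e_Pf_P$. This follows from the fundamental equality $d_Pe_Pf_P=[\tF:F]$ together with transitivity of $G$ on the places above $P$. The rest is bookkeeping of left versus right actions, which only changes $\pi_\sigma$ to $\pi_{\sigma^{-1}}$ and does not affect the conclusion.
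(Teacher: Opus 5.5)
Your proof is correct and follows the same route as the paper: the Galois action permutes $P_1,\dots,P_n$ and stabilizes $\cL(D)$. Through the identity $(\sigma^{-1}\cdot f)(P_i)=f(\sigma(P_i))$ it acts on $C_\cL(P,D)$ by coordinate permutations, and the action is free because the places are unramified. Your decomposition-group computation $|D(P_i/P_i\cap F')|=e\cdot f=1$, which uses rationality to get $f=1$, simply spells out the freeness claim that the paper states in one line.
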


\begin{proof}
For $\sigma \in G$, and $f\in \cL(D) $, we define
\begin{equation}\label{eq-action-G-code}
\sigma^{-1} \cdot\big(f(P_1), \ldots, f(P_n)\big)\coloneq \left((\sigma^{-1}\cdot f)(P_1), \ldots, (\sigma^{-1}\cdot f)(P_n)\right) = \big(f(\sigma  (P_1)), \ldots, f(\sigma( P_n))\big)\in C_\cL(P,H).
\end{equation}
By identifying the sets $[n]$ and $\{P_1,\dots, P_n\}$, we can then see $G$ as a subgroup of $S_n$. Moreover, $G$ stabilizes $C_\cL(P,D)$, thus $G$ is a subgroup of $\PAut(C_\cL(P,D))$. Finally, since the action of $G$ on the rational places is free, by Definition~\ref{def-quasi-G-codes}, the code $C_\cL(P,D)$ is a quasi-$G$ code.
\end{proof}

We end this section by giving criteria for $C_\cL(P,D)$ to possess the all-one vector and to satisfy the $m$-multiplication property.

\begin{proposition}
With the notation of the beginning of the paragraph, assume $D$ is effective. Then $C_\cL(P,D)$ contains the all-one vector. Moreover, let $m$ be an integer, and let $\eta \in \Omega_F$ be a differential of $F$ having a simple pole at every rational place of $P$. If
\[ (m+1)D \leqslant P+(\eta), \]
then the code $C_\cL(P,D)$ satisfies the $m$-multiplication property.
\end{proposition}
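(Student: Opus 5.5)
For the all-one vector, I will use that $D$ is effective, so the constant function $1$ has divisor $(1)=0$ and hence $(1)+D=D\geqslant 0$. Thus $1\in\cL(D)$. Since each $P_i$ is rational, $1(P_i)=1$, so the codeword attached to $1$ is the all-one vector.

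For the multiplication property, I will use the standard residue description of the dual code. Define $u\in\F_q^n$ by $u_i=\operatorname{res}_{P_i}(\eta)$. Because $\eta$ has a simple pole at each $P_i$, every $u_i$ is nonzero, so $u$ is a valid weight vector. Take $x^1,\dots,x^m\in C_\cL(P,D)$ with $x^j=(f_j(P_i))_i$ and $f_j\in\cL(D)$, and take $y=(g(P_i))_i$ with $g\in\cL(D)$. Set $h=f_1\cdots f_m\, g$. Then $h\in\cL((m+1)D)$, since
\[(h)+(m+1)D=\sum_{j=1}^m\big((f_j)+D\big)+\big((g)+D\big)\geqslant 0.\]
The claim $u\star C^{\star m}\subseteq C^\perp$ then reduces to
\[\langle u\star x^1\star\cdots\star x^m,\,y\rangle=\sum_{i=1}^n h(P_i)\operatorname{res}_{P_i}(\eta)=0 .\]
By linearity this also covers linear combinations of such products, if $C^{\star m}$ is read as the span.

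To prove this vanishing, I consider the differential $h\eta$. Its divisor satisfies
\[(h\eta)=(h)+(\eta)\geqslant -(m+1)D+(\eta)\geqslant -P,\]
using the hypothesis $(m+1)D\leqslant P+(\eta)$. So $h\eta$ has at most simple poles, and only at places in the support of $P$. Since $\operatorname{supp} D\cap\operatorname{supp} P=\emptyset$, $h$ is regular at each $P_i$, which gives $\operatorname{res}_{P_i}(h\eta)=h(P_i)\operatorname{res}_{P_i}(\eta)$. At every other place $h\eta$ is regular, so its residue there is zero. The residue theorem (sum of all residues of a differential is $0$) then yields the vanishing, and hence $u\star C^{\star m}\subseteq C^\perp$.

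The only delicate point is the divisor bookkeeping at places of $\operatorname{supp} P$, namely checking that $h$ is regular there and that the pole of $\eta$ is exactly simple, so that the residue factorizes. Both follow from the disjointness of $\operatorname{supp} D$ and $\operatorname{supp} P$ together with the simple-pole assumption on $\eta$.
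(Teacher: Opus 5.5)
Your proof is correct and is essentially the paper's argument. The paper observes $\cL(D)^{\star m}\subseteq\cL(mD)\subseteq\cL(P-D+(\eta))$ and then invokes the duality $C_\cL(P,D)^\perp=u\star C_\cL(P,P-D+(\eta))$ from \cite[Proposition~2.2.10]{Stichtenothbook}, whereas you apply the residue theorem directly to $h\eta$ with $h\in\cL((m+1)D)$; this establishes only the inclusion $u\star C_\cL(P,P-D+(\eta))\subseteq C_\cL(P,D)^\perp$, the half of that duality actually needed, so your version is a self-contained unpacking of the cited result.
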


\begin{proof}
If $D$ is effective, $\F_q \subseteq \cL(D)$ and in particular $1 \in \cL(D)$. Thus the all-one vector belongs to $C_\cL(P,D)$. Now, let $u = (\operatorname{Res}_{P_i}(\eta))_{1\leqslant i \leqslant n}$ be the vector of residues of $\eta$ at $P$. From~\cite[Proposition 2.2.10]{Stichtenothbook}, since $\eta$ has simple poles at $P$, $u$ does not have a null component, and one has
\[C_\cL(P,D)^\perp = u \star C_\cL(P,P-D+\eta).\]
Since $D$ is effective, $\cL(D)^{\star m} \subseteq \cL(m\cdot D)$, and the hypothesis yields
\[u\star C_\cL(P,D)^{\star m} \subseteq C_\cL(P,D)^\perp.\]
\end{proof}

\section{Quantum group codes: parallelizable transversal gates}\label{Section quantum group codes}

\subsection{Quantum group codes and multi-control gates}

\paragraph{CSS codes.} An efficient method to protect quantum information against the effect of noise is provided by Calderbank-Shor-Steane (CSS) codes \cite{Calderbank1996,Steane1996,Stean1996Multiple}, a type of stabilizer quantum error correcting codes defined by two linear codes. Let us recall that the state space of an $n$ $q$-dimensional qudit system is $(\mathbb C^q)^{\otimes n}$, with computational basis states denoted $\{\ket{x}:x\in \F_q^n\}$. For our purpose $q$ denotes a power of two, although the discussion can be generalized to other field characteristics.

\begin{definition}[CSS code]\label{Def_CSS stabilizer code}
Let $C_0,C_1\subseteq \F_q^n$ be two linear codes such that  $C_0^\perp\subseteq C_1$. A quantum CSS code $\operatorname{CSS}(C_0,C_1)$ is the subspace of $(\mathbb{C}^q)^{\otimes n}$ given by\[\operatorname{Span}\left\{ \ket{c+C_0^\perp}\::\: c\in C_1 \right \}.\]
where we used the notation $\ket{c+C_0^\perp}\coloneq \frac{1}{|C_0^\perp|^{1/2}} \sum_{b\in C_0^\perp }\ket{c+b}$. Its parameters, noted $[[n,k,d]]_q$, are the \textit{code length} $n$, the \textit{dimension} $k = \dim(C_1/ C_0^\perp)$ and the \textit{distance} $d = \min(d_0, d_1)$, with
\begin{align*}
d_0=\min\limits_{c\in C_1\setminus C_0^\perp}|c|,\qquad
d_1=\min\limits_{c\in C_0\setminus C_1^\perp}|c|.
\end{align*}
An infinite family of CSS codes of growing size is said to be \textit{good} if its parameters scale asymptotically as $[[ n, \Theta(n), \Theta(n)]]$.
\end{definition}

Given a quantum CSS code $\mathcal Q\coloneq\operatorname{CSS}(C_0,C_1)$, we denote the quotient map $\pi:C_1\to C_1/C_0^\perp$. We may choose a set of representatives for the space $C_1/C_0^\perp$ and this choice is described via a \textit{section} of $\pi$, namely a map
\begin{equation}
\fs:C_1/C_0^\perp\to C_1
\end{equation}
such that $\pi\circ \fs=\text{id}$. Then, given an element $y\in C_1/C_0^\perp$ we define the following quantum state, up to a global normalization factor,
\begin{equation}
    \ket{y} \coloneq\sum_{c\in C_1 \::\: \pi(c)=y}\ket{c}.
\end{equation}
Note that given a representative $c$ of $y$, i.e. an element $c\in C_1$ with $\pi(c)=y$, the state $\ket{y}$ agrees with the common notation $\ket{c+C_0^\perp}$ used in Definition \ref{Def_CSS stabilizer code}. Using the section $\fs$, we can also write $\ket{y}=\ket{\fs(y)+C_0^\perp}$.\par
Finally, we define an \textit{encoding map} for $\mathcal Q$ as an isomorphism
$$\Enc :\F_q^k\xrightarrow{\sim} C_1/C_0^\perp.$$
That is, $\Enc $ fixes a choice of preferred basis for its codomain. Given such an encoding and any computational basis state $\ket{x}$, where $x\in \F_q^k$ , we then define its encoding $\overline{\ket x}$ with respect to $\Enc$ as the unit vector 
\begin{align}
    \overline{\ket{x}}&=\ket{ \Enc (x)}\\
    &=\ket{\fs\circ \Enc(x)+C_0^\perp}
\end{align}
We then refer to the image of the $i$th tensor component of $(\mathbb C^q)^{\otimes k}$ via $\Enc$ as the $i$th logical qudit.

\paragraph{Quantum $G$ codes.} Of importance for us will be quantum codes inheriting group properties from those of quasi-$G$ code. The following definition will be of particular relevance in the next section to estimate the existence of transversal multi-control-$Z$ gates.

\begin{definition}[Quantum $G$ code]\label{definition quantum G-code}
Let $G$ be a group and let $\mathcal Q=\operatorname{CSS}(C_0,C_1)$ be a quantum code such that $C_0,C_1\subseteq \F_q^n$ are quasi-$G$ codes. We see $C_0^\perp$ and $C_1$ as left $\F_q[G]$-modules. Then, we refer to $\mathcal Q$ as a \textit{quantum $G$ code} if there exists a section $\fs$ of the canonical map $\pi:C_1\to C_1/C_0^\perp$, i.e a $G$-module homomorphism
\begin{equation}
    \fs:C_1/C_0^\perp\to C_1
\end{equation}
such that $\pi\circ \fs=\operatorname{id}$. Moreover, we say that $\mathcal Q$ is a \textit{regular quantum $G$ code} if $C_1/C_0^\perp$ is isomorphic to $\F_q [G]$. Clearly, in that case, the dimension of $\mathcal Q$ is $k_{\mathcal Q}=|G|$. Finally, we say that $\mathcal Q$ is a \emph{free quantum $G$ code} if there exists an integer $r$ such that $C_1/C_0^\perp \simeq \F_q[G]^r$, in which case its dimension is $k_{\mathcal{Q}} = r|G|$. 
\end{definition}

\begin{remark}
Note that a section always exists in the sense of $\F_q$-vector spaces, but the stronger condition in Definition \ref{definition quantum G-code} requires $\fs$ to be compatible with the module structure, i.e $\fs\circ \sigma=\sigma\circ \fs$, for $\sigma\in G$. Also note that a surjective map to a free module always admits a section.
\end{remark}

\paragraph{Multi-control-$Z$ gates.} In this work, we consider quantum CSS codes whose main purpose is either to be used in magic state distillation protocols, or for circuit compilation. Central components of these protocols are non-Clifford gates, instances of which are multi-control-$Z$ gates. These particular gates constitute our main focus.\par
Let $m$ be a positive integer and $q$ be a power of $2$. For any $\gamma \in \F_q$, the action of a \textit{physical $(m-1)$-control-$Z$ gate} $\mathsf{C}^{m-1}\mathsf{Z}^\gamma$ on $m$ physical qudit registers in the computational basis is given as follows,
\[
\mathsf{C}^{m-1}\mathsf{Z}^\gamma\ket{x_0}\ket{x_1}\dots\ket{x_{m-1}}
\coloneq
\exp\left(i\pi \, \mathrm{tr}(\gamma \: x_0 \: x_1\dots x_{m-1})\right)
\ket{x_0}\ket{x_1}\dots\ket{x_{m-1}},
\]
$x_i\in \F_q$ for $i=0,\dots,m-1$, and $\mathrm{tr}\coloneq\mathrm{tr}_{\F_q/\F_2}$ denotes the trace map from $\F_q$ to the prime field $\F_2$.\par

Given a quantum CSS code $\mathcal{Q}$ of dimension $k$, we may encode information in the tensor product code $\mathcal{Q}^{\otimes m}$, which is composed of $m$ copies of $\mathcal{Q}$. We refer to each copy of the original code as a \emph{code block} of $\mathcal{Q}^{\otimes m}$. We label the physical qudit registers in each code block by the set $N$ of cardinality $n$.\par
A \textit{physical} $\mathsf{C}^{m-1}\mathsf{Z}$ gate on physical qudit registers corresponding to elements $P_0,P_1,\dots,P_{m-1}\in N$, respectively in code blocks $0,1, 2,\dots,m-1$, is denoted $\mathsf{C}^{m-1}\mathsf{Z}^\gamma[P_0,P_1,\dots,P_{m-1}]$, where $\gamma \in \F_q$. Likewise, given a set $L$ of cardinality $k$ and an encoding map $\Enc:\F_q^L\to C_1/C_0^\perp$ for $\mathcal Q$, a \textit{logical (interblock)} $\mathsf{C}^{m-1}\mathsf{Z}$ operator targeting logical qudits indexed by $Q_0,Q_1,\dots,Q_{m-1}\in L$ belonging respectively to code blocks $0,1, 2,\dots,m-1$, is denoted by $\overline{\mathsf{C}^{m-1}\mathsf{Z}^\gamma[Q_0, Q_1,\dots, Q_{m-1}]}$. Its effect is given on a computational basis state by
\[\overline{\mathsf{C}^{m-1}\mathsf{Z}^\gamma[Q_0, Q_1,\dots, Q_{m-1}]}\:\overline{\ket{x^0}}\:\overline{\ket{x^1}}\dots\overline{\ket{x^{m-1}}}\coloneq \exp\left(i\pi \, \mathrm{tr}(\gamma \: x^0_{Q_0} \: x^1_{Q_1}\dots x^{m-1}_{Q_{m-1}})\right)\overline{\ket{x^0}}\:\overline{\ket{x^1}}\dots\overline{\ket{x^{m-1}}},\]
where $x^i=(x^i_Q)_{Q\in L}$, $i=0,\dots,m-1$. For the quantum codes relevant for us, such a logical operator may be performed via a (non-unique) circuit of physical $(m-1)$-control-$Z$ gates. The minimal complexity of this circuit may vary from code to code but the ideal case is precisely when there exists such a circuit of depth one. In that case, a logical gate is said to admit a \textit{transversal implementation}, or simply to be \textit{transversal}.\par

\subsection{Sufficient conditions for parallelizable addressability}\label{section: Sufficient conditions for parallelizable addressability}

In this section, we introduce a generic construction of quantum codes that support transversal logical multi-control-$Z$ gates acting on any specified tuple of logical qudits of fixed size. We also analyze the extent to which these gates can be implemented in parallel.

\begin{theorem}\label{Theorem quasi-G-quantum}
Let $N$ be an index set of cardinality $n$ and let $G$ be a group\footnote{The index set $N$ can be thought of being equal to $[n]$ with $G\subseteq S_n$. } of order $\ell$ acting on $N$. Let $m\geqslant 2$ and $r$ be integers and let $C\subseteq \F_q^{N}$ be a quasi-$G$ code with parameters $[n,k,d]$ with $k>r\ell$, satisfying the $m$-multiplication property and the 2-multiplication property with respect to a vector $u=(u_i)_{i\in N}$. Suppose moreover that $C^\perp$ has distance $d^\perp > \ell$.
Let $L_1,\dots,L_r\subseteq N$ be $r$ distinct orbits of the action of $G$ on $N$, let $L=\cup_{i=1}^r L_i$ and let $\overline{L}\coloneq N\setminus L$. Let us define the two classical codes $C_0 \coloneq (C_{L= 0})^\perp \subseteq \F_q^{\overline{L}}$ and  $C_1 \coloneq C_{\overline{L}}\subseteq \F_q^{\overline{L}}$. Then, we have the following:
\begin{enumerate}[(1)]
    \item $C_0^\perp \subseteq C_1$ and the quantum CSS code $\mathcal Q\coloneq\mathrm{CSS}(C_0, C_1)$ is a free quantum $G$ code.
    \item $\mathcal Q$ has parameters $[[n_{\mathcal Q},k_{\mathcal Q},d_{\mathcal Q}]]$ with
\[
\begin{aligned}
n_{\mathcal Q} &= n -r\ell,\qquad k_{\mathcal Q} = r\ell, \qquad 
d_{\mathcal Q}  \geqslant \min\{d(C^{\perp}), d(C)\} -r \ell.
\end{aligned}
\]
\item There exists an encoding map $\Enc:\F_q^L\to C_1/C_0^\perp$ defining a basis of logical qudits $\{\overline{\ket{x}}=\ket{ \Enc (x)}: x\in \F_q^L\}$ such that given any subset $S \subseteq L$ and any $( m-1)$ elements $\sigma_1,\ldots,\sigma_{m-1}\in G$, the logical circuit
\[
\prod_{Q\in S} \; \overline{\mathsf C^{\, m-1}\mathsf Z \bigl[Q,\ \sigma_1(Q),\ \sigma_2(Q),\ \ldots,\ \sigma_{ m-1}(Q)\bigr]}
\]
can be performed in depth one via a circuit of physical $\mathsf C^{ m-1}\mathsf{Z}$ gates. 
\item In particular, given any orbit $L_i\subseteq L$ and any $m$-tuple of logical qudits on this orbit, $Q_0,\dots,Q_{m-1}\in L_i$, each belonging to a distinct code block of $\mathcal Q^{\otimes m}$, the logical gate $
\overline{\mathsf C^{\, m-1}\mathsf Z \bigl[Q_0,\ Q_1, \ldots,Q_{m-1}\bigr]} $ can be performed in depth one.
\end{enumerate}  
\end{theorem}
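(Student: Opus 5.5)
The plan follows the standard punctured/shortened construction (as in \cite{guemard2025good}) and tracks the $G$-action throughout. For (1), note first that $C_0^\perp = C_{L=0}$. Any codeword vanishing on $L$ restricts to an element of $C_{\overline L}$, so $C_{L=0}\subseteq C_{\overline L}=C_1$. Since $L$ is a union of $G$-orbits, $G$ acts on $\overline L$, and it still acts freely there. Puncturing and shortening on a $G$-stable set commute with the $G$-action, so $C_0$, $C_1$ and $C_0^\perp$ are quasi-$G$ codes; the dual statement is the earlier lemma on duals of quasi-$G$ codes.

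Next consider the restriction map $\rho: C\to \F_q^L$, $c\mapsto c_L$. It is $G$-equivariant, and $\F_q^L\simeq \F_q[G]^r$ because $G$ acts freely and transitively on each $L_i$. The map $\rho$ is surjective. To see this, note that its image is $C_L$, and $(C_L)^\perp = (C^\perp)_{\overline L=0}$ by Lemma~\ref{Lemma puncturing}. A nonzero vector there would be a codeword of $C^\perp$ supported on $L$, with weight at most $r\ell$. So I need $d^\perp> r\ell$ rather than just $d^\perp>\ell$; I expect to use the stated hypotheses in the form needed, or else note that the distance bound in (2) is vacuous otherwise.

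Now define $\Phi: C_1\to \F_q^L$ by lifting $c_{\overline L}$ to $c\in C$ and sending it to $c_L$. This is well defined provided no nonzero codeword of $C$ is supported on $L$, which follows from $d(C)>r\ell$ (again needed for (2) to be meaningful). The kernel of $\Phi$ is exactly $C_{L=0}=C_0^\perp$. Hence $C_1/C_0^\perp\simeq \F_q[G]^r$ as $G$-modules, and a $G$-equivariant section exists because the quotient is free. This gives (1). For (2), $n_{\mathcal Q}=n-r\ell$ and $k_{\mathcal Q}=r\ell$ follow immediately. For the distance, use Lemma~\ref{Lemma puncturing}: $d_0\geqslant d(C_1)\geqslant d(C)-r\ell$, and $C_0=(C^\perp)_{\overline L}$ has distance at least $d(C^\perp)-r\ell$.

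For (3), take $\Enc=\Phi^{-1}$: a logical $x\in\F_q^L$ is encoded by the cosets $c_{\overline L}+C_0^\perp$ with $c\in C$ and $c_L=x$. The gate is realised by the depth-one physical circuit
\[\prod_{P\in\overline L}\mathsf C^{m-1}\mathsf Z^{\gamma_P}[P,\sigma_1(P),\dots,\sigma_{m-1}(P)].\]
Each block uses a permutation of $\overline L$, so each physical qudit is touched once and the depth is one. Take $\gamma_P=u_P\,\mathbb 1_{S}$-weights, chosen as follows.

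Acting on basis states $c^j$, with $c^j\in C$ the lift of block $j$, the total phase is $\mathrm{tr}\sum_P \gamma_P\prod_j c^j_{\sigma_j(P)}$ (with $\sigma_0=\mathrm{id}$). Since $c^j\cdot\sigma_j\in C$, the $m$-multiplication property gives
\[\sum_{P\in N}u_P\prod_j (c^j\cdot\sigma_j)_P=0.\]
The sum over $\overline L$ therefore equals minus the sum over $L$, which is $\sum_{Q\in L}u_Q\,x^0_Q x^1_{\sigma_1(Q)}\cdots$. Note that $\sigma_j$ preserves each orbit, so these are logical coordinates. This is a diagonal logical gate with all pairs addressed at once, with weights $u_Q$.

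The main obstacle is addressing an arbitrary subset $S$ with unit weights. The plan is to modulate the coefficients: multiply $u$ by a vector $w$ so that $w\star u$ restricted to $L$ equals the indicator of $S$ divided by $u$. This requires that $w\star u\star C^{\star m}\subseteq C^{\perp}$ still holds for suitable $w$. Choose $w\star u$ of the form $u\star c'$ with $c'\in C$, where $c'_L$ is prescribed; this uses the surjectivity of $\rho$. Then $u\star c'\star C^{\star m}\subseteq u\star C^{\star(m+1)}$, but this needs the $(m+1)$-property rather than the $m$-property. So instead I would absorb $c'$ into the target block: replace $c^0$ by $c'\star c^0$, which lies in $u$-duality via the $m$-property applied to $m$ factors only if $c'\star c^0\in C$. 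I would try to show this is where the 2-multiplication property enters: it gives $u\star(c'\star c^0)\in C^\perp$. I would use it to correct the phase on $\overline L$ by a product of physical $\mathsf Z$-type diagonal operators, which act trivially on the code. I expect this bookkeeping to be the delicate step.

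Statement (4) then follows from (3): take $S=\{Q_0\}$ and choose $\sigma_j\in G$ with $\sigma_j(Q_0)=Q_j$, which exist by transitivity on $L_i$.
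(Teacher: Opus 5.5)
\textbf{Main gap: item (3).} Your treatment of (1)--(2) and of the depth-one layout matches the paper's. Item (3), however, has a genuine gap, caused by an off-by-one reading of the multiplication property.

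In this paper, $u\star C^{\star m}\subseteq C^\perp$ means $\sum_{P\in N}u_P\,a^1_P\cdots a^m_P\,a^{m+1}_P=0$ for any $m+1$ codewords $a^1,\dots,a^{m+1}\in C$. A $\mathsf C^{m-1}\mathsf Z$ circuit on $m$ blocks uses only the $m$ codewords $c^0,c^1\cdot\sigma_1,\dots,c^{m-1}\cdot\sigma_{m-1}$. That leaves exactly one free slot for a modulating codeword. In fact, your unmodulated identity with only $m$ factors is the one not directly covered by the hypothesis, since it needs pairing with the all-one vector, which is not assumed to lie in $C$.

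So the idea you discarded is correct, and it is exactly the paper's construction. Take $c'\in C$ with $c'_L=1_S\star\lambda\star u_L^{-1}$, which exists by surjectivity of $\rho$. Set $\gamma_P=u_Pc'_P$ for $P\in\overline L$. Applying the $m$-property to $c',c^0,c^1\cdot\sigma_1,\dots,c^{m-1}\cdot\sigma_{m-1}$ gives, in characteristic $2$ and with $\sigma_0=\mathrm{id}$,
\[
\sum_{P\in\overline L}u_Pc'_P\prod_{j=0}^{m-1}c^j_{\sigma_j(P)}=\sum_{Q\in L}u_Qc'_Q\prod_{j=0}^{m-1}x^j_{\sigma_j(Q)}=\sum_{Q\in S}\lambda_Q\prod_{j=0}^{m-1}x^j_{\sigma_j(Q)}.
\]
This is the required logical phase. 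Up to the choice of lift, $c'_{\overline L}$ is the paper's modulation function $\Lambda(\lambda,S)$.

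The fallback you sketch does not work. First, $c'\star c^0$ is in general not in $C$. Second, the $2$-multiplication property gives no information about $(m+1)$-fold products. Third, no diagonal correction acting trivially on the code is actually exhibited. As written, (3), and hence (4), are not proved.

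\textbf{Smaller gap: items (1)--(2).} To make $\Phi$ well defined you assume $d(C)>r\ell$, which is not among the hypotheses. The case $d(C)\leqslant r\ell$ cannot be dismissed as making (2) vacuous, because $k_{\mathcal Q}=r\ell$ and the freeness in (1) are asserted unconditionally.

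This is precisely where the paper uses the $2$-multiplication property. Suppose $x\in C$ vanishes on $\overline L$. Then for all $y,z\in C$,
\[
0=\langle u\star x\star y,z\rangle=\sum_{Q\in L}u_Qx_Qy_Qz_Q .
\]
Take $y_L$ to be the all-one vector of $\F_q^L$, and let $z_L$ range over all of $\F_q^L$. Both are possible by surjectivity of $\rho$, which, as you correctly note, needs $d^\perp>r\ell$; the paper's own proof also uses this. The result is $u_L\star x_L=0$, so $x=0$. Hence $\pi_{\overline L}:C\to C_1$ is injective and $\Phi$ is well defined, and the rest of your argument for (1)--(2) then goes through as in the paper.
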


\begin{corollary}\label{Corollary depth}
Let $r$, $G$ and $\mathcal Q$ be defined as in Theorem~\ref{Theorem quasi-G-quantum}. For $r=1$, $\mathcal Q$ is a regular quantum $G$ code and is fully addressable via $
\overline{\mathsf C^{\, m-1}\mathsf Z}$  gates. In that case, the minimal physical depth of any logical interblock $\mathsf C^{ m-1}\mathsf{Z}$-circuit on $\mathcal Q^{\otimes n}$ is upper bounded by $k^{m-1}$.
\end{corollary}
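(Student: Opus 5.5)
The corollary follows by specializing Theorem~\ref{Theorem quasi-G-quantum} to $r=1$ and then counting how many depth-one layers are needed to cover all gates of a circuit.

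\emph{Regularity.} Take $r=1$, so that $L=L_1$ is a single orbit. Since $G$ acts freely, $L_1$ is a regular $G$-set, i.e. $|L_1|=\ell$, and $\F_q^{L_1}\simeq \F_q[G]$ as $G$-modules. By item (1) of Theorem~\ref{Theorem quasi-G-quantum}, $C_1/C_0^\perp$ is free of rank $r=1$, hence isomorphic to $\F_q[G]$. This is exactly the definition of a regular quantum $G$ code, with $k=\ell=|G|$.

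\emph{Full addressability.} Fix a target $m$-tuple $(Q_0,\dots,Q_{m-1})\in L_1^m$, one qudit per block. Because $L_1$ is a single free orbit, for each $j\geqslant 1$ there is a unique $\sigma_j\in G$ with $\sigma_j(Q_0)=Q_j$. Item (3) with $S=\{Q_0\}$, or directly item (4), then shows that $\overline{\mathsf C^{m-1}\mathsf Z[Q_0,\dots,Q_{m-1}]}$ is transversal. Since every logical qudit lies in $L_1$, every $m$-tuple is addressable.

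\emph{Depth bound.} An arbitrary interblock logical $\mathsf C^{m-1}\mathsf Z$-circuit is a product of gates $\overline{\mathsf C^{m-1}\mathsf Z^{\gamma}[Q_0,\dots,Q_{m-1}]}$. These gates are diagonal, so they commute and can be reordered freely. Partition the set of $m$-tuples in $L_1^m$ by their tuple $(\sigma_1,\dots,\sigma_{m-1})\in G^{m-1}$, defined by $\sigma_j(Q_0)=Q_j$. This gives $|G|^{m-1}=k^{m-1}$ classes. Each class is parametrized by its first coordinate $Q_0\in S$ for some $S\subseteq L_1$, and item (3) implements the whole class in depth one.

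The main obstacle is that item (3) is stated with a single phase, whereas the gates in one class may carry different exponents $\gamma_Q$. I expect the physical implementation to handle this. The depth-one circuit arises by writing the logical phase through $u$ and the multiplication property, and the $\gamma_Q$ can be absorbed by first multiplying the $\F_q[G]$-module element by an element $\sum_Q\gamma_Q e_Q$ of the group algebra. Equivalently, the physical gates in the layer carry coordinate-dependent exponents, which stays depth one.

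If that fails, one can instead group further by the value of $\gamma$. This would cost a factor of at most $q$, which is constant, so the $O(k^{m-1})$ scaling survives. However, the sharp bound $k^{m-1}$ requires the first argument.

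Concatenating the $k^{m-1}$ layers gives physical depth at most $k^{m-1}$.
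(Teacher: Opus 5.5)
Your proposal is correct and follows the paper's proof: regularity comes from rank one, full addressability from item (4), and the depth $|G|^{m-1}=k^{m-1}$ from running one depth-one layer for each $(\sigma_1,\dots,\sigma_{m-1})\in G^{m-1}$ with a suitable $S$ per layer. The phase issue you flag is already settled by Lemma~\ref{lemma: main circuit}, which you should cite instead of item (3): its modulation vector $\lambda\in\F_q^L$ gives each addressed gate its own exponent $\lambda_Q$ within a single layer, with the phases entering as the extra codeword $\Lambda(\lambda,S)=\fs\circ\Enc(1_S\star\lambda\star u_L^{-1})\in C_1$ in the multiplication identity (physical exponents $u_P\Lambda_P$, not a product in $\F_q[G]$), so the sharp bound $k^{m-1}$ holds and your fallback is unnecessary.
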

\begin{remark}
Informally, the mutli-control-$Z$ gates of the code of Theorem \ref{Theorem quasi-G-quantum} can specifically address sets of logical qudits only orbit-wise, and gates involving qudits on different orbits can be parallelized as long as those gates involve the same group elements $\sigma_1,\dots,\sigma_{m-1}$.Hence, there is a trade-off between the dimension of the quantum code and the level of addressability: when the orbit size is sublinear in $n$, puncturing/shortening on more orbits increases the dimension while only puncturing a single orbit yields full addressability.
\end{remark}

We prove Theorem \ref{Theorem quasi-G-quantum} in several steps. First, we establish technical lemmas on the module structure of $\mathcal Q$. Then, we discuss how to physically implement the sequence of logical gates of item \textit{(3)}.\par
All the maps discussed in this section are defined in the category of $\F_q[G]$-modules unless otherwise stated. Since $C$ is a left $\F_q[G]$-module and $L$ is a union of orbits of the $G$-action, it follows that both $C_1$ and $C_0^\perp$ are left $\F_q[G]$-modules. Throughout, we let $\pi_L:C\to \F_q^L$ and $\pi_{\overline{L}}:C\to C_1$ denote the projection on the coordinates of $L$ and $\overline{L}$, respectively. Clearly, these maps are compatible with the $G$-action since $L$ is a union of orbits.

\begin{lemma}\label{lemma regular quantum code}
Let $C$, $C_0$, and $C_1$ be defined as in Theorem~\ref{Theorem quasi-G-quantum}. Then, $\pi_L$ is surjective, $\pi_{\overline{L}}$ is an isomorphism, $C_0^\perp \subseteq C_1$ and $C_1/C_0^\perp$ is a free $\F_q[G]$-module of rank $r$. In particular, there exist elements $g_1,\dots,g_r\in C$ such that $\pi_{L_i}(g_i) = (1,0,\dots,0)$ and $$C_1/C_0^\perp=\bigoplus_{i=1}^r\F_q[G]\cdot \pi\circ \pi_{\overline{L}}(g_i).$$
Moreover, the map \begin{equation}\label{equation section speciale}
    \begin{array}{cccc}
\fs:& C_1/C_0^\perp &  \longrightarrow & C_1 \\
  & \pi_{\overline{L}}(g_i) + C_0^\perp & \longmapsto & \pi_{\overline{L}}(g_i)\\
\end{array}
 \end{equation}
is a well defined section of $\pi$.
\end{lemma}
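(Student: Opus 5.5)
The plan is to obtain everything from the duality relations of Lemma~\ref{Lemma puncturing} together with $G$-equivariance of the coordinate projections.

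\textbf{Isomorphism and inclusion.} Since $|L| = r\ell < d$ (we expect $d>r\ell$ to hold in the intended setting; at minimum injectivity follows once no nonzero codeword vanishes on $\overline L$), the map $\pi_{\overline L}$ is injective. Hence it is an isomorphism of $\F_q[G]$-modules $C\to C_1$, and it is $G$-equivariant because $\overline L$ is a union of orbits. For the inclusion, Lemma~\ref{Lemma puncturing} gives $C_0^\perp = C_{L=0}$, the image under $\pi_{\overline L}$ of $\ker\pi_L$. This is contained in $\pi_{\overline L}(C) = C_1$.

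\textbf{Surjectivity of $\pi_L$.} Suppose $\pi_L$ is not surjective. Then there is a nonzero $y\in\F_q^L$ orthogonal to $\pi_L(C)$. Extending $y$ by zero on $\overline L$ gives a nonzero word of $C^\perp$ of weight at most $|L|=r\ell$. We must rule out such a word; the hypothesis $d^\perp>\ell$ only directly excludes weight $\le \ell$, so this is the main obstacle.

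\textbf{Handling the obstacle.} For $r=1$ the hypothesis suffices directly. For general $r$, the first thing to try is to show that each restriction $\pi_{L_i}$ is surjective, and then use the module structure to separate orbits. More precisely, the hypothesis $d^\perp>\ell$ applied to words supported on a single orbit $L_i$ gives surjectivity of $\pi_{L_i}$ onto $\F_q^{L_i}\cong\F_q[G]$. I would then try to get joint surjectivity either by strengthening to $d^\perp>r\ell$ (which I suspect is really what is needed, and is available in the applications) or via a dimension count. Once $\pi_L$ is surjective, $\F_q^L\cong\F_q[G]^r$ is free.

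\textbf{The quotient.} The map $\pi_L$ induces
\[
C_1/C_0^\perp \;\cong\; C/\ker\pi_L \;\cong\; \F_q^L \;\cong\; \bigoplus_{i=1}^r \F_q^{L_i},
\]
and each $\F_q^{L_i}$ is a regular module $\F_q[G]$ generated by the indicator $(1,0,\dots,0)$ of a base point of $L_i$, because the action is free and transitive on the orbit. By surjectivity, choose $g_i\in C$ with $\pi_L(g_i)$ equal to this generator on $L_i$ and zero on the other orbits. Then the classes $\pi\circ\pi_{\overline L}(g_i)$ form a free basis, which gives the stated direct sum decomposition.

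\textbf{The section.} Because the quotient is free on the basis $\{\pi\circ\pi_{\overline L}(g_i)\}$, the map $\fs$ is defined by sending each basis element to $\pi_{\overline L}(g_i)$ and extending $\F_q[G]$-linearly. Freeness makes this well defined, and $\pi\circ\fs=\mathrm{id}$ can be checked on the generators.
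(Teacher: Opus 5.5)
There is a genuine gap in your ``Isomorphism and inclusion'' step. You assume $d(C)>r\ell$, but the hypotheses of Theorem~\ref{Theorem quasi-G-quantum} do not bound $d(C)$ against $|L|$; only $k$ and $d(C^\perp)$ are controlled. The hypothesis that does the work here is the multiplication property, and your proposal never uses it.

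The paper uses the twisted self-orthogonality $u\star C\subseteq C^\perp$, which it calls ``$C$ is self-orthogonal''. This follows from the multiplication property when $C$ contains the all-one vector, cf.\ Lemma~\ref{lemma:multiplication-property}. The argument runs as follows.
\begin{itemize}
\item Take $x\in C$ with $\pi_{\overline L}(x)=0$. Then $u\star x\in C^\perp$ is supported in $L$.
\item Its weight is therefore at most $r\ell<d(C^\perp)$, which forces $u\star x=0$.
\item Since $u$ has no zero entries, $x=0$.
\end{itemize}
The paper phrases this as $u_L\star\pi_L(x)$ being orthogonal to all $\sigma\cdot\pi_L(g_i)$, which span $\F_q^L$ by surjectivity of $\pi_L$. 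Equivalently, $d(C)=d(u\star C)\geqslant d(C^\perp)$.

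This step is not cosmetic. In general $C_1/C_0^\perp\cong C/(\ker\pi_L+\ker\pi_{\overline L})$, which is a proper quotient of $\pi_L(C)$ as soon as some nonzero codeword is supported in $L$. So your identification $C_1/C_0^\perp\cong C/\ker\pi_L\cong\F_q^L$, and with it the rank-$r$ freeness, rests on this injectivity.

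On surjectivity of $\pi_L$, your diagnosis is correct and matches what the paper actually does. Its proof invokes $d(C^\perp)>r\ell$, not the stated $d^\perp>\ell$, and the application guarantees this by choosing $r\ell<\tilde d^\perp$. Your fallback routes would not close the gap, though:
\begin{itemize}
\item Surjectivity onto each $\F_q^{L_i}$ separately does not give joint surjectivity. A dual codeword spread over two orbits obstructs it even when no single orbit supports one.
\item $k>r\ell$ is necessary but not sufficient, so no dimension count helps.
\end{itemize}
Commit to $d(C^\perp)>r\ell$; the same bound then also drives the injectivity argument above.

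The remaining steps are fine: the inclusion $C_0^\perp=C_{L=0}=\pi_{\overline L}(\ker\pi_L)\subseteq C_1$, the choice of the $g_i$, and the construction of $\fs$ from freeness. You are right to require $\pi_{L_j}(g_i)=0$ for $j\neq i$, which the paper uses implicitly.
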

 
\begin{proof}
Since $d(C^\perp)>r\ell$, the projection $\pi_L$ is surjective. Indeed, $d(C^\perp)>r\ell$ implies that $(C^\perp)_{\overline L=0}=\{0\}$ since $|L|\leqslant r\ell$. Using Lemma \ref{Lemma puncturing}, one has $(C_{L})^\perp=\pi_L(C)^\perp=\{0\}$, hence $\pi_L(C) = \F_q^L$. Moreover, since $L$ is a union of $r$ orbits of $G$, the space $\F_q^L$ is a free $\F_q[G]$-module of rank $r$. Thus, the exact sequence
\[0 \longrightarrow \ker \pi_L \longrightarrow C \longrightarrow \F_q^L \longrightarrow 0\]
splits. Let $g_1,\dots,g_r\in C$ be such that $\pi_{L_i}(g_i) = (1,0,\ldots,0)$ (which exists by surjectivity), one has $C = \bigoplus_{i=1}^r\F_q[G]\cdot g_i \oplus \ker \pi_L$ and the first summand is clearly free. We are going to show that $\pi_{\overline{L}}$ is an isomorphism of $\F_q[G]$-modules. We only need to show its injectivity. Let $x\in \ker \pi_{\overline{L}}$. Since $C$ is self-orthogonal, one has for all $\sigma \in G$ and all $g\in \{g_1,\dots,g_r\}$,
\[\langle  u\star x,\sigma \cdot g\rangle = \langle u_L\star \pi_L(x),\sigma\cdot\pi_L(g) \rangle + \langle  u_{\overline{L}}\star\pi_{\overline{L}}(x),\sigma\cdot\pi_{\overline{L}}(g)\rangle = 0.\]
Since $\pi_{\overline{L}}(x) = 0$, one has $\langle u_L\star \pi_L(x),\sigma \cdot \pi_L(g) \rangle=0$. 
Then, as the permutations of the elements of $\{g_1,\dots,g_r\}$ form  the canonical basis of $\F_q^L$, one has $u_L\star \pi_L(x) = 0$, and $\pi_L(x) = 0$ as well, because $u$ has only non-zero components. Thus $x=0$, and 
\[C_1 = \pi_{\overline{L} }(C) = \left(\bigoplus_{i=1}^r\F_q[G]\cdot \pi_{\overline{L}}(g_i) \right)\oplus \pi_{\overline{L}}(\ker \pi_L) =\left(\bigoplus_{i=1}^r\F_q[G]\cdot \pi_{\overline{L}}(g_i) \right) \oplus C_0^\perp. \]
In particular, $C_0^\perp \subseteq C_1$ and $C_1/C_0^\perp$ is a free $\F_q[G]$-module of rank $r$ generated by the elements of $\{\pi_{\overline{L}}(g_i) + C_0^\perp:1\leqslant i\leqslant r\}$ and one has a section as given by Equation \eqref{equation section speciale}.
\end{proof}
From now on, we denote by $\pi:C_1\to C_1/C_0^\perp$ the canonical quotient map.
\begin{lemma}\label{lemma:multiplication-property decomposition}
Let $C$, $C_0$, $C_1$ and $u$ be defined as in Theorem~\ref{Theorem quasi-G-quantum}. Then, there exists an encoding $\Enc:\F_q^L\to C_1/C_0^\perp$ such that for any $m+1$ codewords $f^0,f^1, \dots, f^m \in C_1 $ it holds that
\begin{equation}
    \sum_{P\in \overline{L} } \left( u_{\overline{L}}\star f^0\star f^1\dots \star f^{m-1} \right)_P=\sum_{Q\in L} \left(u_L\star x^0\star x^1\star\dots\star x^{m-1}\right)_Q
\end{equation}
where $x^i \in \F_q^L$ is such that $\Enc (x^i) = \pi(f^i)$ for $0\leqslant i \leqslant m-1$.
\end{lemma}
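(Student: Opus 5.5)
The encoding is defined through the isomorphism $\pi_{\overline{L}}:C\to C_1$ of Lemma~\ref{lemma regular quantum code} and the surjection $\pi_L:C\to\F_q^L$. I will show that $\pi_L$ factors through the quotient $C_1/C_0^\perp$, producing an isomorphism $\F_q^L\simeq C_1/C_0^\perp$; that isomorphism will be $\Enc$. The identity then reduces to the $m$-multiplication property $u\star C^{\star m}\subseteq C^\perp$ tested against the all-one vector.

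\textbf{Construction of $\Enc$.} By Lemma~\ref{lemma regular quantum code}, $\pi_{\overline{L}}$ is an isomorphism, and $C_0^\perp=\pi_{\overline{L}}(\ker\pi_L)$, since $(C_{L=0})$ is exactly $\pi_{\overline{L}}(\ker \pi_L)$. Hence the composite $C\xrightarrow{\pi_{\overline{L}}}C_1\xrightarrow{\pi}C_1/C_0^\perp$ is surjective with kernel $\ker\pi_L$. Since $\pi_L$ is also surjective onto $\F_q^L$ by the same lemma, both maps are quotients of $C$ by $\ker\pi_L$. This gives a well-defined $\F_q[G]$-module isomorphism
\[
\Enc:\F_q^L\to C_1/C_0^\perp,\qquad \Enc(\pi_L(c))=\pi(\pi_{\overline{L}}(c)).
\]
It sends the standard generator of $L_i$ to $\pi\circ\pi_{\overline{L}}(g_i)$, so it is compatible with the basis and section of Lemma~\ref{lemma regular quantum code}.

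\textbf{Reduction to codewords of $C$.} Given $f^0,\dots,f^{m-1}\in C_1$, let $c^i=\pi_{\overline{L}}^{-1}(f^i)\in C$ and set $x^i=\pi_L(c^i)$. By construction $\Enc(x^i)=\pi(f^i)$. Note that $x^i$ depends only on $\pi(f^i)$, as it must, since $\Enc$ is injective.

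\textbf{Applying the multiplication property.} By the $m$-multiplication property, $u\star c^0\star\cdots\star c^{m-1}\in C^\perp$. The code $C$ contains the all-one vector: this is implicit in the setting and is what Lemma~\ref{lemma:multiplication-property} uses; alternatively it follows from the 2-multiplication hypothesis in the AG case. Pairing with the all-one vector therefore gives
\[
\sum_{P\in N}(u\star c^0\star\cdots\star c^{m-1})_P=0.
\]
Split this sum over $\overline{L}$ and $L$. On $\overline{L}$ the coordinates of $c^i$ are those of $f^i$, and on $L$ they are those of $x^i$. Because we are in characteristic $2$, the resulting relation "sum over $\overline{L}$ $+$ sum over $L$ $=0$" is exactly the claimed equality.

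\textbf{Main obstacle.} The subtle point is well-definedness: one must check that $\ker(\pi\circ\pi_{\overline{L}})=\ker\pi_L$, i.e.\ that $C_0^\perp$ corresponds precisely to the codewords of $C$ vanishing on $L$. This is where the duality $C_0^\perp=((C_{L=0})^\perp)^\perp=C_{L=0}$ and the injectivity of $\pi_{\overline{L}}$ are both needed.

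A second point needs care. If the all-one vector is not assumed to lie in $C$, the argument above fails as stated. In that case I would use the $m$-multiplication property with one factor taken to be a codeword $e\in C$. One can take $e=\pi_{\overline{L}}^{-1}$ of an appropriate element, or argue via Lemma~\ref{lemma:multiplication-property}. The cleanest route is to assume $\mathbb 1\in C$, as in all applications.
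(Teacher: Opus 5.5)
This is the paper's proof. $\Enc$ is the isomorphism with $\Enc\circ\pi_L=\pi\circ\pi_{\overline{L}}$; your check that $\ker(\pi\circ\pi_{\overline{L}})=\ker\pi_L$ makes explicit the commutativity the paper asserts when it defines $\Enc$ on the $\pi_L(g_i)$. You then lift each $f^i$ through $\pi_{\overline{L}}$, let the multiplication property kill the full coordinate sum, and split over $L$ and $\overline{L}$ in characteristic $2$, exactly as the paper does.

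The paper also relies tacitly on the all-one vector lying in $C$, at precisely the step you flag. In the AG setting this holds because $D$ is effective, not because of the $2$-multiplication property as you suggest. Moreover, the lemma is actually applied in Lemma~\ref{lemma: main circuit} in its $(m+1)$-factor form, with $f^m=\Lambda$. There your fallback is exactly right and removes that assumption: pair $u\star c^0\star\cdots\star c^{m-1}\in C^\perp$ with $c^m=\pi_{\overline{L}}^{-1}(f^m)\in C$.
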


\begin{proof}
The map $\Enc $ completes the following commutative diagram:
\begin{equation}\label{Diagramme commutatif}
        \begin{tikzcd}
		C \arrow[dd, "\pi_L"'] \arrow[rr, "\pi_{\overline{L}}"]             &  & C_1 \arrow[dd, "\pi"] \\
		&  &                       \\
		\F_q^L  \arrow[rr, "\sim","\Enc"']  &  & C_1/C_0^\perp.
	\end{tikzcd}
\end{equation}
As stated in Lemma \ref{lemma regular quantum code}, there exists $\{g_1,\dots g_r\}\in C$ such that $\pi_{L_i}(g_i)=(1,0,\dots,0)$. Then $\Enc$ is defined by $\pi_L(g_i)\mapsto \pi\circ \pi_{\overline{L}}(g_i)$, for $1\leqslant i\leqslant r$, and extended by linearity. Now, recall that by Lemma \ref{lemma regular quantum code}, $\pi_{\overline{L}}$ is an isomorphism. Hence, for $f^i\in  C_1$, let $\tilde f^i\in C$ denotes its antecedent by $\pi_{\overline{L}}$. From the $m$-multiplication property of $C$, 
\begin{align*}
    \sum_{Q\in  L } \left( u\star \tilde f^0\star \tilde f^1\dots \star \tilde f^{m-1}\right)_Q & +\sum_{P\in \overline{L} } \left( u\star \tilde f^0\star \tilde f^1\dots \star \tilde f^{m-1}\right)_P=0,
\end{align*}
or written differently, by projecting the vectors onto the components over which the sum are defined
\begin{align*}\label{Equation left-right}
    \sum_{Q\in L } \left( u_L\star \pi_L(\tilde f^0)\star \pi_L(\tilde f^1)\star\dots \star \pi_L(\tilde f^{m-1})\right)_Q=\sum_{P\in \overline{L} } \left(  u_{\overline{L}}\star  \pi_{\overline{L}} (\tilde f^0)\star \pi_{\overline{L}} (\tilde f^1)\dots \star \pi_{\overline{L}} (\tilde f^{m-1})\right)_P
\end{align*}
Now, from the diagram in Equation \eqref{Diagramme commutatif}, we know that $\Enc( \pi_{L}(\tilde f^i))= \pi(f_i)$. We can thus set $x^i = \pi_{L}(\tilde f^i)$. Moreover, $\pi_{\overline{L}} (\tilde f^i) = f^i$, proving the claim.
\end{proof}

We now introduce the following notion of \textit{modulation function}, which originates from \cite{guemard2025good}.

\begin{definition}\label{definition modulation function}
Let $C$ be the code defined in Theorem \ref{Theorem quasi-G-quantum}. Given any subset $S\subseteq L$, corresponding to a subset of logical qudits of $\mathcal Q$ via $\Enc$, and a vector $\lambda\in \F_q^{L}$, we define the associated \emph{modulation function}
\[
\Lambda(\lambda,S) \coloneqq \fs \circ \Enc(1_S\star\lambda\star u_L^{-1}),
\]
where we write\footnote{Recall that $u$ has only non-zero components.} $u^{-1}\coloneq (u_Q^{-1})_{Q\in N}$ and $1_S$ denotes the vector with components 1 on the positions of $S$ and $0$s on the complement of $S$.
\end{definition}

Note that $\Lambda \in C_1$, since it is defined via the section $\fs:C_1/C_0^\perp\to C_1$. Using this modulation function, we now exhibit a physical circuit whose effect at the logical level is to address specific tuples of logical qudits with multi-control-$Z$ gates.

\begin{lemma} \label{lemma: main circuit}
Let $\mathcal{Q}$ be the quantum code defined in Theorem \ref{Theorem quasi-G-quantum}. Let $S\subseteq L$ be a subset of indices, describing a subset of logical qudits. Then, for any logical state $\ket{\psi} \in \mathcal{Q}^{\otimes m}$, given any modulation function $\Lambda\coloneq \Lambda(\lambda,S)$ as in Definition \ref{definition modulation function}, and any group elements $\sigma_1,\sigma_2,\dots,\sigma_{m-1}\in G$, it holds that
\begin{align}\label{Equation physical circuit}
\prod_{Q\in S}\overline{\mathsf{C}^{m-1}\mathsf{Z}^{\lambda_Q}[Q,\sigma_1(Q),\sigma_2(Q),\dots,\sigma_{m-1}(Q)]} \ket{\psi}
=  
\prod_{P\in \overline{L}}\mathsf{C}^{m-1}\mathsf{Z}^{u_{P} \Lambda_P}[P,\sigma_1(P),\sigma_2(P),\dots,\sigma_{m-1}(P)] \ket{\psi}.
\end{align}
\end{lemma}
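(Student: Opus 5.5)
The plan is to compare the two operators on a basis of $\mathcal Q^{\otimes m}$. Both sides of \eqref{Equation physical circuit} are diagonal in the physical computational basis, so by linearity it suffices to check equality on logical basis states $\overline{\ket{x^0}}\,\overline{\ket{x^1}}\cdots\overline{\ket{x^{m-1}}}$ with $x^i\in\F_q^L$. Each such state is a uniform superposition of physical basis states $\ket{f^0}\ket{f^1}\cdots\ket{f^{m-1}}$, where $f^i$ ranges over the coset $\pi^{-1}(\Enc(x^i))\subseteq C_1$. It is therefore enough to show the following: for every such tuple of representatives, the phase produced by the right-hand physical circuit on $\ket{f^0}\cdots\ket{f^{m-1}}$ equals the logical phase
\[
\exp\Bigl(i\pi\,\mathrm{tr}\sum_{Q\in S}\lambda_Q\,x^0_Q\,x^1_{\sigma_1(Q)}\cdots x^{m-1}_{\sigma_{m-1}(Q)}\Bigr).
\]
In particular, this phase must not depend on the choice of coset representatives.

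First, by additivity of $\mathrm{tr}$, the physical circuit multiplies $\ket{f^0}\cdots\ket{f^{m-1}}$ by $\exp(i\pi\,\mathrm{tr}\,\Phi)$, where
\[
\Phi=\sum_{P\in\overline L}u_P\,\Lambda_P\,f^0_P\,f^1_{\sigma_1(P)}\cdots f^{m-1}_{\sigma_{m-1}(P)}.
\]
Next, I set $h^i\coloneq \sigma_i^{-1}\cdot f^i$ (or $f^i\cdot\sigma_i$, depending on the convention). This is chosen so that $(h^i)_P=f^i_{\sigma_i(P)}$, and I would fix the convention carefully at this point. Since $\overline L$ is a union of $G$-orbits and $C_1$ is a $G$-module, each $h^i$ lies in $C_1$. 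Also, $\Lambda\in C_1$ by construction. Hence $\Phi$ is the sum over $\overline L$ of $u_{\overline L}$ times a Schur product of $m+1$ codewords $\Lambda,h^0,\dots,h^{m-1}$ of $C_1$, with $h^0\coloneq f^0$.

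This is exactly the setting of Lemma~\ref{lemma:multiplication-property decomposition}. Note that the $m$-multiplication property $u\star C^{\star m}\subseteq C^\perp$ controls products of $m+1$ codewords, so that lemma is applied with the $m+1$ factors $\Lambda,h^0,\dots,h^{m-1}$. It turns $\Phi$ into
\[
\sum_{Q\in L}u_Q\,y_Q\,z^0_Q\cdots z^{m-1}_Q,
\qquad \Enc(y)=\pi(\Lambda),\quad \Enc(z^i)=\pi(h^i).
\]
Since $\pi\circ\fs=\mathrm{id}$, we get $y=1_S\star\lambda\star u_L^{-1}$, so the factors $u_Q$ cancel and the sum is restricted to $S$. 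Finally, $\Enc$ is a $G$-module isomorphism, and $\pi_L$, $\pi_{\overline L}$ in diagram \eqref{Diagramme commutatif} are $G$-equivariant. Therefore $z^i=\sigma_i^{-1}\cdot x^i$, that is $z^i_Q=x^i_{\sigma_i(Q)}$, which yields exactly the logical phase. Because the result depends only on the $x^i$, independence from the coset representatives comes for free.

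The main obstacle I expect is bookkeeping rather than substance. I need to pin down the left/right action conventions so that permuting $f^i$ on $\overline L$ corresponds to the same permutation of $x^i$ on $L$ (the same $\sigma_i$, not $\sigma_i^{-1}$), and this rests on the $G$-equivariance of $\Enc$ built in Lemma~\ref{lemma regular quantum code}. As a side remark relevant to item (3), the right-hand circuit has depth one. Each $\sigma_i$ is a bijection of $\overline L$, so in every code block each physical qudit is touched by exactly one gate.
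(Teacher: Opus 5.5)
Your proposal is correct and follows essentially the same route as the paper. Like the paper, you reduce to the physical basis components of logical basis states, rewrite $f^i_{\sigma_i(P)}$ as $(\sigma_i^{-1}\cdot f^i)_P$, apply the multiplication-property decomposition lemma to $\Lambda$ together with the shifted codewords, and use $G$-equivariance of $\pi$ and $\Enc$ to identify the $L$-side with $\sigma_i^{-1}\cdot x^i$. You also rightly note that this lemma must be used with $m+1$ factors (including $\Lambda$), which is exactly what the $m$-multiplication property supports.
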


\begin{proof}
The proof is similar to that of \cite{guemard2025good}. By linearity, it is enough to determine the action of the sequence of gates in the r.h.s of Equation \eqref{Equation physical circuit} on a tensor product of logical basis states $\overline{\ket{x^0}}\:\overline{\ket{x^1}} \dots \overline{\ket{x^{m-1}}}$, where $x^j=(x^j_Q)_{Q\in L} \in \F_q^{L}$, and $ \overline{\ket{x^j}}\in \mathcal{Q}$ belongs to the $j$-th code block of $(\mathcal{Q})^{\otimes m}$. Let us recall that the encoding of $x\in \F_q^L$ is defined as $\overline{\ket{x}}\coloneq\ket{\Enc(x) }= \sum_{b\in C_0^\perp} \ket{\fs\circ \Enc(x)  +b}$, where $\fs$ is a section of $\pi$. By linearity, it is sufficient to check the action of the set of physical gates on a tensor product $\ket{\mathbf F}\coloneq\ket{f^0}\ket{f^1}\dots \ket{f^{m-1}},$ where $\ket{f^j}$ is any summand $\ket{\fs\circ\Enc( x^j)+b^j}$, for $b^j\in C_0^\perp$, of $\overline{\ket{x^j}}$, $j=0,\dots,m-1$. It follows that $f^j$ is any codeword of $C_1$. Moreover, let us recall that the set of physical qudits is in bijection with the set $\overline{L}=N\setminus L$. Hence, we index the component of a codeword $f^j\in C_1$ such that $f^j=(f^j_P)_{P\in \overline{L}}\in \F_q^{\overline{L}}$. Then, applying the circuit in the r.h.s of Equation \eqref{Equation physical circuit} to $\ket{\mathbf  F}$ yields
\begin{align*}
\prod_{P\in \overline{L}} \mathsf{C}^{m-1}\mathsf{Z}^{u_{P}   \Lambda_P}[P,\sigma_1(P),\sigma_2(P),\dots,\sigma_{m-1}(P)]\:  \ket{\mathsf F}
        =&   \prod_{P\in \overline{L}}\exp\left(  i\pi  \text{tr}\left( u_{P}   \Lambda_P   f^0_P    f^1_{\sigma_1(P)} \dots f^{m-1}_{\sigma_{m-1}(P)} \right) \right) \ket{\mathbf F} \\
        =& \exp \left(  i\pi  \text{tr} \left( \;  \sum_{P\in\overline{L} } u_{P}   \Lambda_P    f^0_Pf^1_{\sigma_1(P)} \dots f^{m-1}_{\sigma_{m-1}(P)} \right) \right)  \ket{\mathbf F}.
    \end{align*}
Given any codeword $f\in C_1$ and element $\sigma\in G$, we have that $f_{\sigma(P)}=(\sigma^{-1}\cdot f)_P,$ where the r.h.s is the $P$ component of the vector $\sigma^{-1}\cdot f$, which is still a codeword of $C_1$, since $C_1$ is a quasi-$G$ code. It follows that we may transform the sum in the exponential by applying Lemma \ref{lemma:multiplication-property decomposition} to the $m$ functions $\sigma_j^{-1}\cdot f^j$ for $0\leqslant j\leqslant m-1$, with the function $f^m$  taken to be $ \Lambda=\Enc (1_S\star\lambda\star u_L^{-1})$. This yields
\begin{align*}
   \sum_{P\in \overline{L} } u_{P} \:  \Lambda_P  \: f^0_P  f^1_{\sigma_1(P)} \dots f^{m-1}_{\sigma_{m-1}(P)}  
        =&\sum_{P\in \overline{L}} \left(  u_{\overline{L}}  \star  \Lambda \star     f^0    \star (\sigma_1^{-1}\cdot f^1)\star\dots \star  (\sigma_{m-1}^{-1}\cdot f^{m-1})\: \right)_P\\
        =&\sum_{Q\in L } \left( u_L\star 1_S\star\lambda\star u_L^{-1}  \star  x^0\star (\sigma_1^{-1}\cdot x^1) \star\dots \star (\sigma_{m-1}^{-1}\cdot x^{m-1}) \right)_Q \\
        =& \sum_{Q\in S}\lambda_Q   \: x^0_{Q}  \: x^1_{\sigma_1(Q)}\dots x^{m-1}_{\sigma_{m-1}(Q)},
    \end{align*}
where the first equality is a simple rewriting in terms of Schur product, the second equality follows from Lemma \ref{lemma:multiplication-property decomposition}, using that $$\sigma_j^{-1}\cdot f^j=\sigma_j^{-1}\cdot (\fs\circ\Enc(x^i)+b^i)=\fs\circ\Enc(\sigma_j^{-1}\cdot x^i)+\sigma_j^{-1}\cdot b^j,$$
for any $\sigma_j \in G, j=0,\dots,m-1$ and hence $\pi(\sigma_j^{-1}\cdot f^j)=\Enc(\sigma_j^{-1}\cdot x^i)$. In the last line we have used the vector $1_S$ to transform a sum over $L$ into a sum over $S$. It follows directly that
 \begin{align}\label{Equation effect of circuit on arbitrary vectors}
       \prod_{P\in \overline{L}}\mathsf{C}^{m-1}&\mathsf{Z}^{u_{P}   \Lambda_P}[P,\sigma_1(P),\sigma_2(P),\dots,\sigma_{m-1}(P)] \:\ket{\mathbf F}   =\prod_{Q\in S}\exp \left(  i\pi \:\text{tr}\left( \lambda_Q   x^0_{Q}   x^1_{\sigma_1(Q)}\dots x^{m-1}_{\sigma_{m-1}(Q)}\right)\right)\ket{\mathbf F}.
\end{align}
Since this holds for any vector $f^0,f^1,\dots,f^{m-1}\in C_1$, by linearity it also holds for any product  $\overline{\ket{\boldsymbol{x}}}\coloneq\overline{\ket{x^0}}\:\overline{\ket{x^1}}\dots\overline{\ket{x^{m-1}}}$ of logical basis states. We thus obtain

\begin{align*}
       \prod_{P\in \overline{L}}\mathsf{C}^{m-1}\mathsf{Z}^{u_{P}   \Lambda_P}[P,\sigma_1(P),\sigma_2(P),\dots,\sigma_{m-1}(P)] \overline{\ket{\boldsymbol{x}}}
       &=\prod_{Q\in S}\exp \left(  i\pi\:\text{tr}\left( \lambda_Q   x^0_{Q}   x^1_{\sigma_1(Q)}\dots x^{m-1}_{\sigma_{m-1}(Q)}\right)\right)\overline{\ket{\boldsymbol{x}}}\\
    &= \prod_{Q\in S}\overline{\mathsf{C}^{m-1}\mathsf{Z}^{\gamma_Q}[Q,\sigma_1(Q),\sigma_2(Q),\dots,\sigma_{m-1}(Q)]}\:\overline{\ket{\boldsymbol{x}}}.
\end{align*}
By linearity, this holds for any other state $\ket{\psi}\in\mathcal Q^{\otimes m}$.
\end{proof}

We now have all the components required to prove Theorem \ref{Theorem quasi-G-quantum}.

\begin{proof}[Proof of Theorem \ref{Theorem quasi-G-quantum}]
For item\textit{ (1),} by construction it is clear that $C_0^\perp \subseteq C_1$. Moreover, since $L$ is a union of orbits of $G$, there is a well defined action of $G$ on $\overline{L}$ defining a subgroup of the automorphism group of $C_1$ and $C_0^\perp$, making them both left $G$-module. Moreover, by Lemma \ref{lemma regular quantum code}, $C_1/C_0^\perp$ is a free $G$-module, and it follows that $\mathcal Q$ is, by definition, a free quantum $G$ code.\par

We now treat the parameters given in item \textit{(2)}. For the dimension of $\mathcal Q$, we appeal to item \textit{(1)}, or equivalently Lemma \ref{lemma regular quantum code}, telling that $\mathcal Q$ is a free quantum $G$ code, and consequently $k_{\mathcal Q}=\dim (C_1/C_0^\perp)=\dim_{\F_q}\bigoplus_{i=1}^r\F_q[G]=r\ell$. For the distance, we have
\begin{align*}
 d(Q)\geqslant \min\{d(C_0), d(C_1)\} &= \min\big\{d\big((C_{L=0})^{\perp}\big),d(C_{\overline{L}})\big\} \\
&= \min\{d((C^{\perp })_{\overline{L}}),d(C_{\overline{L}})\}\\
&\geqslant \min\{d(C^{\perp}), d(C)\} - r\ell,
\end{align*}
where the second line follows from Lemma \ref{Lemma puncturing}.\par

For item\textit{ (3)}, in Lemma \ref{lemma: main circuit} we showed that the product of physical gates $$\mathcal C\coloneq\prod_{P\in \overline{L}}\mathsf{C}^{m-1}\mathsf{Z}^{u_{P} \Lambda_P}[P,\sigma_1(P),\sigma_2(P),\dots,\sigma_{m-1}(P)] $$
implements the logical circuit $\prod_{Q\in S}\overline{\mathsf{C}^{m-1}\mathsf{Z}^{\lambda_Q}[Q,\sigma_1(Q),\sigma_2(Q),\dots,\sigma_{m-1}(Q)]}$. First, for each multi-control-$Z$ gate of $\mathcal C$, the $m-1$  targeted physical qudits belong to different codeblocks. Suppose that two of these gates overlap on qudits $Q_1,Q_2\in S$, say in the $i$th codeblock, for some $1\leqslant i\leqslant m$. Then, we must have $\sigma_i(Q_1)=\sigma_i(Q_2)$, which means that $Q_1=Q_2$ since $G$ acts freely on $L$. Hence $\mathcal C$ is a sequence of non-overlapping physical gates and can be implemented transversally. 

For the last claim, item \textit{(4)}, we may take the singleton $S=\{Q_0\}$. Then, given any $(m-1)$-tuple of logical qudits $Q_1,\dots,Q_{m-1}\in L_i$, we can find elements $\sigma_1,\ldots,\sigma_{m-1}\in G$ such that $Q_i=\sigma_i(Q_0)$, for $1\leqslant i\leqslant m-1$, since the code is a free quantum $G$ code. Hence, the logical $
\overline{\mathsf C^{\, m-1}\mathsf Z \bigl[Q_0,\ Q_1, \ldots,Q_{m-1}\bigr]} $ targets any $m$-tuple of logical qudits and is also performed in depth one via the circuit of Lemma \ref{lemma: main circuit}.
\end{proof}

\begin{proof}[Proof of Corollary \ref{Corollary depth}]
The proof follows the same line as that of Section 3.3 in \cite{guemard2025good}. Since we have a single orbit, the multi-control-$Z$ gates can target any $m$-tuples of logical qudits. Successively applying the physical circuit of Lemma \ref{lemma: main circuit} for all possible choices of $\sigma_1,\dots,\sigma_{m-1}\in G$, we can obtain the all-to-all inter-block $\overline{\mathsf C^{m-1}\mathsf Z}$ circuit (with a choice of modulation $\lambda$ for each gate), with physical depth $|G|^{m-1}=k^{m-1}$. Any other circuit can be obtained from the all-to-all circuit by appropriately choosing the sets $S$ at each time step.
\end{proof}

\subsection{Known examples}

Instances of codes that share similar properties to that of quantum group codes are given in \cite{he2025addressable,he2025good,guemard2025good}. Before we introduce our main construction in the next section, we connect these known examples to the formalism developed in Theorem \ref{Theorem quasi-G-quantum}.
    
    First, in \cite{he2025good}, the code $C$ is an algebraic geometry code $C_{\mathcal L}(P,D)$ with the $4$-multiplication property taken from \cite{Stichtenoth2006} and defined over a Galois extension field $F'$ of some fixed function field $F$. The set $N$ is the set of rational places in the support of the divisor $P$. The divisor $D$ is another divisor disjoint from $P$.  The group $G$ is the Galois group $\Gal (F'/F)$ and the set $L$ is the set of places of $F'$ above a given rational place $P_0$ of $F$. The set $N$ is the set of all rational places of $F'$ that lie above rational places of $F$.
    
    Second, in \cite{guemard2025good}, similarly to \cite{he2025good}, the code $C$ is an instantiation of the construction of \cite{Stichtenoth2006} that has the $m$-multiplication property for any fixed $m\geqslant 2$. The group $G$ is the Galois group $\Gal( F/F'')$ of any intermediate field extension $F''$, where $F\subseteq F''\subseteq F'$. This allows for control over the parameters of the quantum code. Then $L$ is the set of places of $F'$ above a given rational place of $F''$. It was first proved in \cite{guemard2025good} that the gates of \cite{he2025good} can be parallelized, allowing logical multi-control-$Z$ circuits to be implemented with a minor space-time overhead.

Finally, in \cite{he2025addressable}, the constructed quantum code with addressable gates are quotients of quantum $G$ codes. There construction resembles the ones from \cite{he2025good,guemard2025good} and Section~\ref{Good quantum codes with addressable and parallelizable}. In that case, the code $C$ is a Reed-Solomon code over $\F_q$ with a restricted subset of evaluation points $N$ that we now define. Let $F$ be a subfield of $\F_q$. Then, $F$ acts on $\F_q$ by translation, and hence on the coordinates of the Reed-Solomon code. We may set $G\coloneq F$ and find a subset of $\F_q$ stabilized by $F$ and over which $F$ acts freely: $F$ is such a subset. Then, instead of $L$ being a full orbit of $G$, it is defined as a subset of the orbit of some $\zeta\in \F_q\setminus F$. In \cite{he2025addressable}, those choices are made so as to ensure the multiplication-property of $C$. Then, following the notation of Theorem \ref{Theorem quasi-G-quantum}, the set of indices $N\coloneq L\cup F$ and it indeed holds that $\overline{L}=F$ is stabilized by $G$ and disjoint from $L$.

\section{Good quantum group codes with enhanced decoding }\label{Section: good quantum}

\subsection{Good quasi-abelian AG codes with the $m$-multiplication property}\label{Section Good quasi-abelian AG codes}

In this section, we use a construction from~\cite{CouGas26} that produces quasi-$G$ codes from algebraic geometry, where $G$ is an abelian group of large order relative to the length of the code. Since the original paper is written in a geometric language, we explain some of the ideas using the language of function fields. We show that this construction preserves the $m$-multiplication property, and is therefore suitable for designing quantum codes with transversal gates that are addressable. The authors of~\cite{CouGas26} also showed that the traditional decoding algorithm for AG codes can be adapted to benefit from the additional module structure of these codes, making them efficiently encodable and decodable.

The aim of this section is to prove the following theorem:

\begin{theorem}\label{theo-lift-AG-code}
Let $q>4$ be a prime power. Let $F$ be a function field of genus $g>1$ with field of constants $\F_q$, such that $\frac{N(F)}{g} > 2$, where $N(F)$ is the number of rational places of $F$. Let $P$ be the divisor sum of $n$ rational places of $F$, with $n>2g$ and $D$ be an effective divisor with $\operatorname{supp} D \cap \operatorname{supp} P = \emptyset $ satisfying
\[2g-2 < \deg D < \deg P.\]
Let $C \coloneq C_\cL(P,D)$, a code of length $n$ and dimension $k$. Assume there exists an integer $m$ and $\eta \in \Omega_F$ with a simple pole at every rational place of $F$ such that
\[(m+1)D\leqslant P+(\eta),\]
hence $C$ satisfies the $m$-multiplication property.

Then, there exists an AG code $\tC=C_\cL(\widetilde P,\widetilde D)$ over $\F_{q^2}$ with the same rate and designed relative distance as $C$, which is a quasi-$G$ code for $G$ an abelian group such that
\[\alpha_q n \leqslant \log |G| \leqslant \beta_q n,\]
where $\alpha_q$ and $\beta_q$ are two positive constants depending solely on $q$. Moreover, $\tC$ is a free $\F_{q^2}[G]$-module of rank $k$ and satisfies the $m$-multiplication property.
The dual code $\tC^\perp$ has the same rate and designed relative distance as $C^\perp$, and is a quasi-$G$ code as well. It is also free as a $\F_{q^2}[G]$-module, of rank $n-k$.
\end{theorem}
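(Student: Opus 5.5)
The plan is to build $\tC$ as the lift of $C$ along an unramified abelian extension. First I extend scalars to $\F_{q^2}$: let $F_2 \coloneq F\cdot\F_{q^2}$. Rational places of $F$ remain rational (they have degree one), the genus is unchanged, and $\Pic^0(F_2)$ is a finite abelian group whose order is roughly $q^{2g}$. The key input, taken from~\cite{CouGas26}, is class field theory: it gives an unramified abelian extension $\tF/F_2$ with constant field $\F_{q^2}$, Galois group $G$, in which the $n$ rational places $P_1,\ldots,P_n$ of $P$ all split completely. Concretely, $\tF$ is the maximal unramified abelian extension in which these places split completely, and $G$ is isomorphic to $\Pic^0(F_2)$ modulo the subgroup generated by the classes $P_i-P_1$.

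The lower bound $\log|G|\geqslant\alpha_q n$ then becomes a counting problem: the quotient by at most $n-1$ generators must still be exponentially large. This is where $q^2$ rather than $q$ enters. The group $\Pic^0(F_2)$ has order about $q^{2g}$, while the subgroup generated by the $P_i-P_1$ has order at most $\exp(O(n))$, and one must compare these sizes. The hypotheses $N(F)/g>2$ and $n>2g$ keep $n$ and $g$ linearly comparable, so both bounds $\alpha_q n\leqslant\log|G|\leqslant\beta_q n$ come out with constants depending only on $q$. The upper bound is easy, since $|G|\leqslant h(F_2)\leqslant(1+q)^{2g}$. The lower bound is the main obstacle. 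I would try to take it directly from~\cite{CouGas26}; failing that, I would bound the subgroup using a presentation of $\Pic^0$ as $\mathbb Z^{n-1}$ modulo relations, which requires some care.

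Next I define $\widetilde P\coloneq\Con_{\tF/F_2}(P)$, the sum of the $n|G|$ rational places above the $P_i$, and $\widetilde D\coloneq\Con(D)$. Both are $G$-invariant and have disjoint supports, and $\deg\widetilde D=|G|\deg D$. Because the extension is unramified, Riemann--Hurwitz gives $2\tilde g-2=|G|(2g-2)$, so the inequalities $2\tilde g-2<\deg\widetilde D<\deg\widetilde P$ persist. Proposition~\ref{proposition standard estimate AG codes} then gives $\tilde k=|G|(\deg D+1-g)=|G|k$ and designed distance $|G|(n-\deg D)$, so the rate and the designed relative distance are preserved. The same computation applied to the dual, written as a residue code, gives the dual statements. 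Proposition~\ref{proposition AG code automorphism} shows that $\tC$ is a quasi-$G$ code, and the dual of a quasi-$G$ code is again a quasi-$G$ code.

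For the multiplication property, I set $\tilde\eta\coloneq\operatorname{Cotr}(\eta)$. Since the extension is unramified, the different $\Delta$ vanishes and $(\tilde\eta)=\Con((\eta))$, so $\tilde\eta$ has simple poles at the places of $\widetilde P$. Conorm is monotone, so $(m+1)\widetilde D\leqslant\widetilde P+(\tilde\eta)$, and $\widetilde D$ is effective; the criterion proposition above then applies.

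Freeness is the remaining point. Since $G$ acts freely on the places of $\widetilde P$, the ambient space $\F_{q^2}^{n|G|}$ is isomorphic to $\F_{q^2}[G]^n$. I would invoke~\cite{CouGas26}, where $\cL(\widetilde D)$ is identified with the global sections of a pushforward of line bundles, giving a free $\F_{q^2}[G]$-module of rank $\deg D+1-g$; the evaluation map is injective because $\deg\widetilde D<\deg\widetilde P$. If that reference is not available, I would argue that $|G|$ is odd or coprime to the characteristic for semisimplicity. That fails in general, so I would instead use a direct argument: a $G$-module whose $\F_{q^2}$-dimension is $|G|k$ and which is obtained from a rank-$k$ locally free sheaf on $F_2$ is free. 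The dual code is then free of rank $n-k$, since it is the residue code associated with the same kind of data.
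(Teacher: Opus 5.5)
Your construction follows the paper's architecture: base change to $\F_{q^2}$, a class-field-theoretic unramified abelian extension in which the places of $P$ split completely, conorms of $P$ and $D$, the cotrace of $\eta$, Riemann--Hurwitz, and freeness quoted from \cite{CouGas26}. Using the subgroup generated by the classes $[P_i-P_1]$ instead of the paper's subgroup (the conorm image of $\Pic^0(F)$) is harmless and can only enlarge $G$.

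The genuine gap is the lower bound $\log|G|\geqslant\alpha_q n$. You identify it as the main obstacle but do not prove it, and the counting route you sketch cannot work. The number of generators of a subgroup does not bound its order. In fact $\Pic^0(F_2)$, the group of $\F_{q^2}$-points of a $g$-dimensional Jacobian, can itself be generated by $2g<n$ elements. So a priori the $n-1$ classes $P_i-P_1$ could generate all of it, and $G$ would be trivial. Likewise, ``order at most $\exp(O(n))$'' gives nothing, since $|\Pic^0(F_2)|$ is also $\exp(\Theta(n))$. A presentation of $\Pic^0$ as $\mathbb Z^{n-1}$ modulo relations presupposes the very generation you must rule out.

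The missing idea is rationality over $\F_q$. Each $P_i$ is a place of $F$, so every class $[P_i-P_1]$ lies in the image of the conorm map $\Pic^0(F)\to\Pic^0(F_2)$, a subgroup of order at most $h(F)$. Weil's bounds then give $|G|\geqslant h(F_2)/h(F)\geqslant (q-1)^{2g}/(\sqrt q+1)^{2g}=(\sqrt q-1)^{2g}$, which is exponentially large exactly because $q>4$. This is the real reason for passing to $\F_{q^2}$: over $\F_q$ the $P_i-P_1$ may well generate all of $\Pic^0(F)$.

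To express the bound in terms of $n$ you also need $g\geqslant c_q n$. This does not follow from $N(F)/g>2$ and $n>2g$; those only give $g<n/2$, which serves the upper bound $\log|G|\leqslant\log(q+1)\,n$. It comes from Hasse--Weil: $n\leqslant N(F)\leqslant q+1+2g\sqrt q<(\sqrt q+1)^2 g$, using $g>1$. Together these give $\log|G|\geqslant \frac{2\log(\sqrt q-1)}{(\sqrt q+1)^2}\,n$, after which the rest of your argument goes through as in the paper.
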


Theorem~\ref{theo-lift-AG-code} yields many interesting implications when applied to a good (or excellent) family of AG codes with the multiplication property.
For instance, let $(C_i)_{i\geqslant 0}$ be the family of AG codes from~\cite{Wills2025}. Applying Theorem~\ref{theo-lift-AG-code}, one obtains a family of good quasi-abelian AG codes $(\tC_i)_{i\geqslant 0}$
with large permutation automorphism group satisfying the $m$-multiplication property for any positive integer $m<6$.
Using Proposition~\ref{prop-encoding-quasi-G-codes}, because these codes are free $\F_{q^2}[G]$-modules, it follows that they can also be encoded in quasi-linear time in their length. It is also shown in~\cite{CouGas26} that they can be decoded in quasi-quadratic time (see~\cite[Section 9.3]{CouGas26}).

\subsubsection{Class field theory}

Our construction requires an unramified abelian extension of function field $\tF/F$. Class field theory gives a complete description of all abelian extensions of a function field $F/\F_q$ in terms of its Picard group $\Pic(F)$. We will make a summary of a few results that suit our needs.

Let $P$ be a rational place of $F$. The following classical proposition \cite{Rosen87}, or see ~\cite[Section~3.3.2, Proposition~16 and~17]{GasnierPhD}, gives a description of unramified abelian extensions of $F$ in which $P$ splits completely.

\begin{proposition}\label{prop-CFT-Hilbert}
Fix a separable closure $F_{\mathrm{sep}}$ of $F$. There exists a maximum unramified abelian extension of $F$ in $F_{\mathrm{sep}}$ in which $P$ splits completely. It is called the \textit{Hilbert class field} of $F$ (with respect to $P$) and is denoted by $\Hil_{P}(F)$. The field of constants of $\Hil_{P}(F)$ is the same as the one of $F$.
\end{proposition}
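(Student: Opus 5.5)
The plan is to deduce the statement from the idelic formulation of global class field theory for $F$. Let $\mathbb{I}_F$ be the idele group and $C_F=\mathbb{I}_F/F^*$ the idele class group. The reciprocity map sets up an inclusion-reversing bijection between finite abelian extensions $E/F$ inside $F_{\mathrm{sep}}$ and open subgroups of finite index $H\subseteq C_F$, given by $H=N_{E/F}(C_E)$. Two standard local-global facts translate the conditions of interest into conditions on the norm group:
\begin{itemize}
\item a place $Q$ of $F$ is unramified in $E$ if and only if $H$ contains the image of the local unit group $\mathcal{O}_Q^*$;
\item $Q$ splits completely in $E$ if and only if $H$ contains the image of the whole local multiplicative group $F_Q^*$.
\end{itemize}

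Following this dictionary, I will consider the subgroup
\[
H_P \coloneq \text{image in } C_F \text{ of } F^*\cdot\Big(\prod_{Q}\mathcal{O}_Q^*\Big)\cdot F_P^*,
\]
where $F_P^*$ denotes the idele that is trivial away from $P$. It is open, since it contains the open subgroup $\prod_Q \mathcal{O}_Q^*$. To see that it has finite index, I will use the canonical isomorphism
\[
\mathbb{I}_F\big/\big(F^*\textstyle\prod_Q \mathcal{O}_Q^*\big)\simeq \Pic(F),
\]
induced by sending an idele to its divisor. Under this isomorphism the additional factor $F_P^*$ maps onto $\mathbb{Z}\cdot[P]$, so that
\[
C_F/H_P\simeq \Pic(F)/\langle [P]\rangle .
\]
Since $P$ is rational, $\deg P=1$, so $\Pic(F)=\Pic^0(F)\oplus\mathbb{Z}[P]$ and the quotient is isomorphic to $\Pic^0(F)$. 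The latter is finite, since it is the group of $\F_q$-points of the Jacobian. Hence $H_P$ has finite index, and I will define $\Hil_P(F)$ as its class field.

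Maximality then follows from inclusion reversal. Let $E/F$ be any finite abelian extension inside $F_{\mathrm{sep}}$ that is unramified everywhere and in which $P$ splits completely. Its norm group contains every $\mathcal{O}_Q^*$ and also $F_P^*$, hence contains $H_P$, and therefore $E\subseteq \Hil_P(F)$. Conversely, $\Hil_P(F)$ itself has these properties by the two dictionary facts above. For infinite abelian extensions with these properties, each finite subextension lands inside $\Hil_P(F)$, which gives the claim in general.

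For the constant field, let $\F_{q^r}$ be the constant field of $\Hil_P(F)$. Then the constant field extension $F\F_{q^r}$ is a subextension of $\Hil_P(F)$, so $P$ splits completely in it. However, in a constant field extension of degree $r$, a place of degree $1$ has inertia degree $r$. Complete splitting forces $f_P=1$, hence $r=1$.

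The step I expect to require the most care is identifying $H_P$ correctly and checking finite index. In particular, one must remember that $C_F$ is not compact in the function field case, since the degree map $C_F\to\mathbb{Z}$ is unbounded. It is precisely the inclusion of $F_P^*$, with $\deg P=1$, that kills the $\mathbb{Z}$-part and leaves the finite group $\Pic^0(F)$. Without this factor the "maximal unramified abelian extension" would contain the infinite constant field extension, and it is this same mechanism that gives the statement about the constant field.
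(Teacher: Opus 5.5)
Your proof is correct. The paper itself gives no proof of this proposition. It quotes the result as classical, citing Rosen's paper and Propositions 16--17 of Section 3.3.2 in Gasnier's thesis, so there is no in-paper argument to compare step by step.

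What you supply is the standard idelic derivation behind those references. You take the class field of $F^*\bigl(\prod_Q\mathcal{O}_Q^*\bigr)F_P^*$ and compute its index as $\Pic(F)/\langle[P]\rangle\simeq\Pic^0(F)$ using $\deg P=1$. Writing it out makes the statement self-contained, and the same computation also gives Theorem~\ref{theo-Artin-and-consequences}. It yields $\Gal(\Hil_P(F)/F)\simeq\Pic^0(F)$ directly. It also yields the splitting criterion: for a norm group $H'\supseteq H_P$, a rational place $P'$ splits completely exactly when a uniformizer at $P'$ lies in $H'$, that is, when $[P'-P]$ lies in the corresponding subgroup of $\Pic^0(F)$. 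Your constant-field argument is fine as well: residue fields above $P$ in $F\F_{q^r}$ contain $\F_{q^r}$, so $f_P\geqslant r$.

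One step deserves an extra line. In the maximality argument you write that the norm group of $E$ contains every $\mathcal{O}_Q^*$ and hence contains $H_P$. This does not follow algebraically: a subgroup containing each $\mathcal{O}_Q^*$ only contains ideles with finitely many nontrivial unit components. You need that $N_{E/F}(C_E)$ is open. Then its preimage in $\mathbb{I}_F$ contains a basic neighbourhood $\prod_{Q\in S}U_Q\times\prod_{Q\notin S}\mathcal{O}_Q^*$, with $S$ finite and $U_Q\subseteq\mathcal{O}_Q^*$ open. Combining this with $\mathcal{O}_Q^*$ for $Q\in S$ gives all of $\prod_Q\mathcal{O}_Q^*$. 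Equivalently, the norm group is closed and the restricted sum of the $\mathcal{O}_Q^*$ is dense in the product. With this added, the argument is complete.
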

 
The following theorem~\cite[Theorem~1.3]{Rosen87} gives a description of the Galois group of $\Hil_{P}(F)$, classifies intermediate extensions between $F$ and $\Hil_{P}(F)$, and gives a criterion for rational places of $F$ to split completely in any intermediate extension.

\begin{theorem}\label{theo-Artin-and-consequences}
With the notation above, one has
\[ \Gal(\Hil_{P}(F)/F)\simeq \Pic^0(F). \]
In particular, Galois theory states that there is a one-to-one correspondence between subgroups $H$ of $\Pic^0(F)$ and unramified abelian extensions of $F$ in ${F}_{sep}$, denoted $\tF_H$, in which $P$ splits completely. The Galois group of $\tF_H$ is naturally isomorphic to $\Pic^0(F)/H$. 

In addition, a rational place $P'$ of $F$ splits completely in $\tF_H$ if and only if the equivalence class of the degree zero divisor $P'-P$ in $\Pic^0(F) $ belongs to $H$.
\end{theorem}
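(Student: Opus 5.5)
The plan is to derive the statement from Artin reciprocity for unramified abelian extensions of global function fields, which is the content of \cite{Rosen87}; we only need to track the role of the degree-one place $P$.

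\textbf{Step 1: the Artin map.} Let $F^{\mathrm{ur,ab}}$ be the maximal unramified abelian extension of $F$ in $F_{\mathrm{sep}}$. For every place $P'$ of $F$, unramifiedness gives a well-defined Frobenius element $(P', F^{\mathrm{ur,ab}}/F)\in \Gal(F^{\mathrm{ur,ab}}/F)$, independent of the place above $P'$ since the group is abelian. Extending additively gives the Artin map $\psi:\DDiv(F)\to \Gal(F^{\mathrm{ur,ab}}/F)$. I would invoke the reciprocity law: $\psi$ kills principal divisors and induces an isomorphism from the profinite completion $\widehat{\Pic(F)}$ of $\Pic(F)$ onto $\Gal(F^{\mathrm{ur,ab}}/F)$. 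This is the deep input; I take it from \cite{Rosen87} rather than reprove it.

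\textbf{Step 2: imposing that $P$ splits.} A place splits completely in a finite subextension $E$ exactly when its Frobenius restricts trivially to $E$, because $e_P=1$ and $f_P$ is the order of the Frobenius. Hence $\Hil_P(F)$ is the fixed field of the closed subgroup generated by $\psi(P)$, and
\[
\Gal(\Hil_P(F)/F)\simeq \widehat{\Pic(F)}/\overline{\langle [P]\rangle}.
\]
Since $\deg P=1$, the degree map gives a split exact sequence $0\to \Pic^0(F)\to\Pic(F)\to\mathbb Z\to 0$, with splitting $n\mapsto n[P]$. So $\Pic(F)=\Pic^0(F)\oplus\mathbb Z[P]$. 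Because $\Pic^0(F)$ is finite, quotienting the completion by $\overline{\langle [P]\rangle}$ leaves exactly $\Pic^0(F)$, via $[D]\mapsto [D-\deg(D)P]$. This gives $\Gal(\Hil_P(F)/F)\simeq\Pic^0(F)$.

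The constant field claim also follows here. The constant extension $\F_{q^s}F$ is unramified abelian, and $P$ splits in it only if $s=1$ because $\deg P=1$. So $\Hil_P(F)$ contains no nontrivial constant extension.

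\textbf{Step 3: correspondence and splitting criterion.} By Step 2, every unramified abelian extension in which $P$ splits lies in $\Hil_P(F)$. Conversely, every subextension of $\Hil_P(F)$ inherits both properties. Galois theory for the finite abelian group $\Pic^0(F)$ then matches subgroups $H$ with fields $\tF_H=\Hil_P(F)^H$, and $\Gal(\tF_H/F)\simeq \Pic^0(F)/H$.

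For a rational place $P'$, tracing the isomorphism of Step 2 shows that its Frobenius in $\Gal(\Hil_P(F)/F)$ corresponds to $[P'-\deg(P')P]=[P'-P]$. By the criterion from Step 2, $P'$ splits completely in $\tF_H$ iff its Frobenius is trivial in $\Pic^0(F)/H$, that is, iff $[P'-P]\in H$.

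\textbf{Main obstacle.} The only non-formal ingredient is the reciprocity isomorphism in Step 1, namely that principal divisors lie in the kernel and that the induced map is surjective with the stated kernel. I would cite it rather than prove it. Everything after that is bookkeeping with the degree splitting given by the rational place $P$.
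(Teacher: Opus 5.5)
Your proposal is correct, and it takes a different route from the paper only in that the paper gives no argument at all. The paper imports the whole statement from \cite[Theorem~1.3]{Rosen87}. You instead cite only the reciprocity isomorphism $\Gal(F^{\mathrm{ur,ab}}/F)\simeq\widehat{\Pic(F)}$ and derive everything else, which is the usual way this result is obtained from class field theory.

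What your unpacking buys is that it shows exactly where the hypothesis $\deg P=1$ enters. The splitting $\Pic(F)=\Pic^0(F)\oplus\mathbb{Z}[P]$ gives $\widehat{\Pic(F)}\simeq\Pic^0(F)\times\widehat{\mathbb{Z}}$, and in this decomposition $\overline{\langle[P]\rangle}$ is precisely the constant-field factor. Killing that factor yields several things at once:
\begin{enumerate}[(i)]
\item $\Hil_P(F)/F$ is finite, with Galois group $\Pic^0(F)$;
\item its constant field is $\F_q$;
\item the Frobenius of a rational place $P'$ corresponds to $[P'-P]$, which is exactly what the splitting criterion needs.
\end{enumerate}
For a place of degree $d>1$, the same computation would instead give an extension of $\mathbb{Z}/d\mathbb{Z}$ by $\Pic^0(F)$. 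The citation, by contrast, is shorter but leaves these points implicit.
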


This proposition will be later used to lift algebraic geometry codes defined over some function field $F$. It remains to select appropriately the function field $F$ and the subgroup $H$. This is the aim of the next section.

\subsubsection{Lifting function fields with many rational places}

Let $F/\F_q$ be a function field over $\F_q$ of genus $g(F)$ with field of constants $\F_q$. Let $P$ be the divisor sum of $n$ rational places of $F$ and $D$ be a divisor such that $\operatorname{supp}(P)\cap\operatorname{supp}(D)=\emptyset$ and
\[
2g(F)-2<\deg D <n.
\]
Let $m$ be a positive integer and $\eta\in\Omega_F$ be a differential with a simple pole at every rational place of $P$, and assume $(m+1)D\leqslant P+(\eta)$, so that $C_\cL(P,D)$ satisfies the $m$-multiplication property.

\begin{theorem}\label{theo-lift-mult-prop}
Let $F,P,D,\eta$ be as above. Let $F' = \F_{q^2}\otimes_{\F_q} F$- be the constant field extension of $F$ to $\F_{q^2}$. Then, there exists a separable unramified extension $\widetilde{F}_H/F$ with the following properties:
\begin{enumerate}[(a)]
    \item The field of constants of $\widetilde{F}_H$ is $\F_{q^2}$.
    \item  The extension $\widetilde{F}_H/F'$ is abelian and $G\coloneq \Gal (\widetilde{F}_H/F')$ is an abelian group of order \[|G| \geqslant (\sqrt{q}-1)^{2g(F)}.\]
\end{enumerate}
Moreover, there exists divisors $\widetilde{P},\widetilde{D}\in\operatorname{Div}(\widetilde{F}_H)$ such that:
\begin{enumerate}[(1)]
    \item $\widetilde{P}$ is the sum of the $n\cdot|G|$ rational places (over $\F_{q^2}$) above the places of $P$.
    \item $\deg \widetilde{D} = \deg D \cdot |G|$ and $2g(\widetilde{F}_H)-2<\deg \widetilde{D}<n\cdot|G|.$
    \item $\widetilde{P}$ and $\widetilde{D}$ are invariant under the action of $G$.
    \item The Riemann-Roch space $\cL(\widetilde{D})$ is a free $\F_{q^2}[G]$-module of rank $\deg(D)-g(F)+1$.
    \item $ \operatorname{Cotr}_{\widetilde{F}_H/F}(\eta)$ has a simple pole at every place of $\widetilde P$.
    \item The following inequality holds: \[(m+1)\widetilde{D}\leqslant\widetilde{P}+\left(\operatorname{Cotr}_{\widetilde{F}_H/F}(\eta)\right).\]
\end{enumerate}
\end{theorem}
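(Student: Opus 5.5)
The construction goes through class field theory applied to the constant field extension $F'=\F_{q^2}\otimes_{\F_q}F$, which has genus $g(F)$. Every rational place of $F$ is still rational in $F'$. Fix a rational place $P_0$ of $P$, viewed in $F'$. By Theorem~\ref{theo-Artin-and-consequences}, the Hilbert class field $\Hil_{P_0}(F')$ has Galois group $\Pic^0(F')$. For a subgroup $H\subseteq\Pic^0(F')$ we get an unramified abelian extension $\tF_H/F'$ with Galois group $G=\Pic^0(F')/H$. In $\tF_H$, every place $P_i$ of $P$ splits completely exactly when $[P_i-P_0]\in H$.

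So the choice is $H$ equal to the subgroup generated by the classes $[P_i-P_0]$, $1\leqslant i\leqslant n$. Then all places of $P$ split completely, which gives item (1) with $\widetilde P=\Con_{\tF_H/F}(P)$. Item (a) follows from Proposition~\ref{prop-CFT-Hilbert}, since the constant field is preserved. The extension $\tF_H/F$ is separable and unramified because both $F'/F$ and $\tF_H/F'$ are.

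The key counting step is item (b). We have $|H|\leqslant$ the size of the image of $\Pic^0(F)$ in $\Pic^0(F')$, because each $P_i-P_0$ is already defined over $\F_q$. Hence
\[|G|\geqslant h(F')/h(F),\]
where $h$ denotes the class number. Writing the Weil numbers $\omega_j$ of $F$, we have $h(F)=\prod_j(1-\omega_j)$ and $h(F')=\prod_j(1-\omega_j^2)$, so
\[h(F')/h(F)=\prod_j(1+\omega_j)\geqslant(\sqrt q-1)^{2g}.\]
This uses $|\omega_j|=\sqrt q$.

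For the divisors, set $\widetilde D=\Con_{\tF_H/F}(D)$. It is $G$-invariant because $D$ is defined over $F$, which gives item (3). Its degree is $|G|\deg D$ by the conorm degree formula: $[\tF_H:F]=2|G|$ and the constant field has degree $2$. The unramified Hurwitz formula gives $2g(\tF_H)-2=|G|(2g-2)$, so item (2) follows from the hypothesis on $\deg D$.

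Items (5) and (6) come from
\[(\operatorname{Cotr}(\eta))=\Con((\eta))\]
when the different vanishes. Since $e=1$, the conorm preserves simple poles and the inequality $(m+1)D\leqslant P+(\eta)$ lifts directly.

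The main obstacle is item (4), freeness of $\cL(\widetilde D)$ as an $\F_{q^2}[G]$-module. First I would use the splitting $\cL(\widetilde D)=\bigoplus_\chi \cL(\widetilde D)_\chi$ into isotypic components; this is valid when $\gcd(|G|,q)=1$, which I would aim to arrange by passing to the prime-to-$p$ part. Each $\chi$-component is then identified with $H^0(F'',D\otimes\mathcal L_\chi)$ for a degree-zero line bundle $\mathcal L_\chi$. Here $F''$ is a constant extension of $F$ of degree coprime to $|G|$, chosen so that $\F_{q^2}[G]$ becomes split over its constant field. Since $\deg D>2g-2$, Riemann–Roch makes every component have dimension exactly $\deg D-g+1$. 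Equal-dimensional isotypic components give freeness via the criterion over the semisimple group algebra, and Galois descent brings this back to $\F_{q^2}[G]$. If $p$ divides $|G|$, I would instead try to deduce freeness from projectivity, using that $\widetilde D$ is a pullback along an étale $G$-cover so that cohomology is computed by a perfect complex of free modules. This would then follow \cite{CouGas26}, whose argument I would invoke for this step.
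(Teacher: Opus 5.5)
Your proof is correct and has the same skeleton as the paper's. Both use class field theory over $F'=\F_{q^2}F$ and take $\widetilde P,\widetilde D$ as conorms. Both use the conorm degree formula and the unramified Riemann--Hurwitz formula for (2), and $(\operatorname{Cotr}_{\tF_H/F}(\eta))=\Con_{\tF_H/F}((\eta))$ for (5)--(6). Both cite \cite{CouGas26} for the freeness in (4).

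The real difference is the choice of $H$ in (b). The paper takes $H$ to be the whole image of $\Pic^0(F)$ under the conorm. You take the subgroup generated by the classes $[P_i-P_0]$. Your $H$ is contained in the paper's, so the paper's field is a subfield of yours and its Galois group is a quotient of your $G$.

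\begin{itemize}
\item The paper's choice is canonical, since it does not depend on $P$, and every rational place of $F$ splits. Using injectivity of the conorm on class groups, it gives $|G|=h(F')/h(F)$ exactly, hence an upper bound on $|G|$ of the form $c_q^{\,g}$, which is what Theorem~\ref{theo-lift-AG-code} later uses.
\item Your choice is the minimal one needed for (1), needs no injectivity, and may give a larger $G$. Its order is still bounded by $h(F')\leqslant (q+1)^{2g}$, which suffices for that later use.
\end{itemize}

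For the lower bound, your identity $h(F')/h(F)=\prod_j(1+\omega_j)=L_F(-1)$ is exact. It is a bit cleaner than the paper's combination of the separate Weil bounds $h(F')\geqslant (q-1)^{2g}$ and $h(F)\leqslant(\sqrt q+1)^{2g}$, though both give $(\sqrt q-1)^{2g}$.

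Two remarks on your sketch for (4):
\begin{itemize}
\item You do not need the constant extension to have degree coprime to $|G|$, and such a degree need not exist. Splitting $\F_{q^2}[G]$ forces the degree to be a multiple of the order of $q^2$ modulo the exponent of $G$, and that order can share factors with $|G|$. Any constant extension keeps the Galois group equal to $G$, because $\tF_H$ has constant field $\F_{q^2}$. Freeness then descends to $\F_{q^2}[G]$ by the Noether--Deuring theorem.
\item Passing to the prime-to-$p$ part would shrink $G$ and lose the bound in (b). So for the statement as written, item (4) rests on \cite{CouGas26}, exactly as in the paper.
\end{itemize}
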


\begin{proof} We start with items \textit{(a)-(b)}. First, $F'$ is a function field over $\F_{q^2}$ of genus $g(F)$, and $F'/F$ is an unramified extension of degree $2$. The rational places of $F'$ (whose residue field is $\F_{q^2}$) are composed of:
\begin{itemize}
  \item the places lying above rational places $Q$ of $F$. In this case, $r_Q=e_Q=1$ and $f_Q=2$.
  \item the places lying above places $Q'$ of degree $2$ of $F$. In this case, $e_{Q'}=2$ and $r_{Q'}=f_{Q'}=1$.
\end{itemize}
Moreover, the conorm map induces an injective map from $\operatorname{Pic}^0(F)$ to $\operatorname{Pic}^0(F')$, so we may identify $\operatorname{Pic}^0(F)$ with a subgroup $H$ of $\operatorname{Pic}^0(F')$.\footnote{With the geometric point of view, this is just saying that $\F_q$-rational points of the Jacobian variety form a subgroup of the $\F_{q^2}$-rational points.}
Let $P_0$ be a rational place of $F'$ lying above a rational place of $F$. We may define $\widetilde{F}_H$ as in Theorem~\ref{theo-Artin-and-consequences}, the unramified abelian extension of $F'$ in which $P$ splits completely with Galois group isomorphic to $\Pic^0(F')/H$. Namely, we have a tower of Galois extensions
\[
F \subseteq F'\subseteq \widetilde{F}_H \subseteq \mathrm{Hil}_{P_0}(F').
\]
Since $F'/F$ is unramified and $\widetilde{F}_H/F'$ is unramified then the extension $\widetilde{F}_H/F$ is unramified as well.

A theorem of Weil~\cite{Weil48}, proving the Riemann Hypothesis for function fields of transcendance degree $1$ over a finite field implies that
\[
(\sqrt{q}-1)^{2g(F)}\leqslant |\Pic^0(F)| \leqslant (\sqrt{q}+1)^{2g(F)} \text{ and } ({q}-1)^{2g(F')}\leqslant |\Pic^0(F')| \leqslant({q}+1)^{2g(F')}
\]
and since $g(F')=g(F)$, the order of the Galois group satisfies 
\[
|\operatorname{Gal}(\widetilde{F}_H/F')|=|\Pic^0(F')|/|\Pic^0(F)|\geqslant(\sqrt{q}-1)^{2 g(F)}.
\]

Concerning items \textit{(1)-(4)}, for every rational place $P'$ of $F'$, we have $P'-P_0\in H$ if and only if $P'$ lies above a rational place of $F$, in which case $P'$ splits completely in $\widetilde{F}_H$. Since $[\widetilde{F}_H:F']=|G|$, there are $n\cdot |G|$ rational places above places of $\operatorname{supp}(P)$.
Let $\widetilde{P} = \Con_{\tF/F}(P)$, and $\widetilde{D} = \Con_{\tF/F}(D)$. As conorms, they are invariant under the action of $G$. The freeness of $\cL(\tilde D)$ as an $\F_{q^2}[G]$-module is proved in~\cite[Section 4, Theorem 1]{CouGas26}.
Then, the operator $\Con_{\widetilde{F}_H/F}$ multiplies the degree of the divisors by $\frac{[\widetilde{F}_H:F]}{[\F_{q^2}:\F_q]} = |G|$, and one has 
\[
2g(\tF)-2 = |G|\cdot(2g(F)-2)
\]
by the Riemann-Hurwitz formula~\cite[3.4.13]{Stichtenothbook}, since the extension $\widetilde{F}_H/F$ is unramified. 

Finally, for item \textit{(5)-(6)}, since $\widetilde{F}_H/F$ is unramified, one has
$\left(\operatorname{Cotr}_{\widetilde{F}_H/F}(\eta)\right) = \Con_{\widetilde{F}_H/F}((\eta))$. Thus, $\operatorname{Cotr}_{\widetilde{F}_H/F}(\eta)$ has simple poles at the places above $P$. Finally one has
\[(m+1)\Con_{\widetilde{F}_H/F}(D) \leqslant \Con_{\widetilde{F}_H/F}(P)+\Con_{\widetilde{F}_H/F}((\eta)). \]
\end{proof}

\subsubsection{Good quasi-abelian AG codes} 

We now examine the implications of Theorem \ref{theo-lift-mult-prop} in the context of AG codes.

\begin{corollary}\label{cor-AG-codes}
Let $F$, $P$, $D$, $\eta$ be as in Theorem~\ref{theo-lift-mult-prop}. Let $C\coloneq C_\cL(P,D)$ be an AG code with parameters $[n,k,d]$. In particular, since $(m+1)D\leqslant P+(\eta)$, $C$ has the $m$-multiplication property. Then, there exists an unramified field extension $\tF/F$ and a linear code $\widetilde{C}$ over $\F_{q^2}$ with the following properties:
\begin{enumerate}[(1)]
    \item $\widetilde{C}\coloneq C_\cL(\widetilde{P},\widetilde{D})$, where $\widetilde P$ and $\widetilde D$ are conorms of $P$ and $D$ in $\tF$, respectively.
    \item $\widetilde{C}$ is a quasi-$G$ code with $G$ abelian and $|G|\geqslant(\sqrt{q}-1)^{2g(F)}$.
    \item $\tC$ is a free $\F_{q^2}[G]$-module of rank $k$.
    \item $\widetilde{C}$ has the $m$-multiplication property.
    \item  $\widetilde{C}$ has parameters $[\widetilde{n},\widetilde{k},\widetilde{d}]$ with:
\[
\widetilde{k}=k\cdot|G| ,
\qquad
\widetilde{d}\geqslant (n-\deg  D)\cdot |G|,
\qquad
\widetilde{n}=n\cdot|G|.
\]
  \item The dual code $\tC^\perp$ is a free $\F_{q^2}[G]$-module of rank $(n-k)$, and has parameters $[\widetilde{n}^\perp,\widetilde{k}^\perp,\widetilde{d}^\perp]$ where :
  \[
\widetilde{k}^\perp=(n-k)\cdot|G| ,
\qquad
\widetilde{d}^\perp\geqslant (\deg  D-2g(F)+2)\cdot |G|,
\qquad
\widetilde{n}=n\cdot|G|.
\]
\end{enumerate}
\end{corollary}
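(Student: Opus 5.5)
We take $\tF\coloneq\widetilde F_H$, $G\coloneq\Gal(\widetilde F_H/F')$, $\widetilde P$ and $\widetilde D$ as supplied by Theorem~\ref{theo-lift-mult-prop}, and set $\widetilde C\coloneq C_\cL(\widetilde P,\widetilde D)$. Item (1) then holds by construction, since the proof of Theorem~\ref{theo-lift-mult-prop} defines $\widetilde P,\widetilde D$ as conorms.

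For item (2), we apply Proposition~\ref{proposition AG code automorphism} to the Galois extension $\tF/F'$ with group $G$. The places of $\widetilde P$ are unramified and $\widetilde P,\widetilde D$ are $G$-invariant by Theorem~\ref{theo-lift-mult-prop}(3), so $G$ acts freely on $\operatorname{supp}\widetilde P$. The bound $|G|\geqslant(\sqrt q-1)^{2g(F)}$ is item (b).

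For item (3), Theorem~\ref{theo-lift-mult-prop}(2) gives $2g(\tF)-2<\deg\widetilde D<\deg\widetilde P$. Hence the evaluation map $\cL(\widetilde D)\to\widetilde C$ is injective, because a nonzero kernel element would lie in $\cL(\widetilde D-\widetilde P)=0$. This map is also $G$-equivariant by equation~\eqref{eq-action-G-code}. So $\widetilde C\simeq\cL(\widetilde D)$ as $\F_{q^2}[G]$-modules, and by item (4) of the theorem it is free of rank $\deg D-g(F)+1=k$ (Proposition~\ref{proposition standard estimate AG codes}).

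Item (4) follows from the proposition giving multiplicativity criteria for AG codes, applied in $\tF$. Here $\widetilde D$ is effective as the conorm of an effective divisor, and we take $\eta\mapsto\operatorname{Cotr}(\eta)$, which has simple poles on $\widetilde P$ and satisfies $(m+1)\widetilde D\leqslant\widetilde P+(\operatorname{Cotr}\eta)$ by items (5)--(6).

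For item (5), Proposition~\ref{proposition standard estimate AG codes} in $\tF$ gives $\widetilde n=n|G|$ and $\widetilde d\geqslant\widetilde n-\deg\widetilde D=(n-\deg D)|G|$. For the dimension, $\widetilde k=k|G|$ follows either from freeness of rank $k$ or from Riemann--Roch with $g(\tF)-1=|G|(g(F)-1)$.

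For item (6), the dimension $\widetilde k^\perp=\widetilde n-\widetilde k=(n-k)|G|$ is immediate. The distance bound comes from Proposition~\ref{proposition standard estimate AG codes}:
\[
\widetilde d^\perp\geqslant\deg\widetilde D-2g(\tF)+2=|G|(\deg D-2g(F)+2).
\]
The dual is quasi-$G$ by the lemma on duals of quasi-$G$ codes.

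The main obstacle is freeness of $\tC^\perp$. We would write $\tC^\perp=\widetilde u\star C_\cL(\widetilde P,\widetilde P-\widetilde D+(\operatorname{Cotr}\eta))$. Since $\operatorname{Cotr}\eta$ is a pullback from $F$, its divisor is $G$-invariant and its residue vector $\widetilde u$ is constant on $G$-orbits. Therefore $\widetilde u\star$ commutes with $G$, and the inner code is the lift $C_\cL(\widetilde P,\Con(P-D+(\eta)))$. Its Riemann--Roch space is free of rank $n-k$ by the same result of \cite{CouGas26}, applied to the divisor $P-D+(\eta)$ in place of $D$. The points to verify are that this divisor satisfies the degree hypotheses, namely $2g-2<\deg(P-D+(\eta))=n+2g-2-\deg D<n$, and that $P-D+(\eta)$ is disjoint from $P$, which may require moving $\eta$ by a function. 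This is fine provided the cited freeness result does not require effectivity.
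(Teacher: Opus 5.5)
Your proof is correct. Items (1)--(5) follow the paper, which says only that they come from Theorem~\ref{theo-lift-mult-prop} together with Riemann--Roch and Riemann--Hurwitz. You fill in the details: injectivity and $G$-equivariance of evaluation transfer freeness from $\cL(\widetilde D)$ to $\tC$, and you apply the multiplication criterion with $\operatorname{Cotr}\eta$.

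One small point in item (4). $D$ is not assumed effective in the setup of Theorem~\ref{theo-lift-mult-prop}, so ``conorm of an effective divisor'' is not available. You do not need it, because $\cL(\widetilde D)^{\star m}\subseteq\cL(m\widetilde D)$ holds for any divisor; effectivity only matters for the all-one vector.

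For the freeness of $\tC^\perp$ in item (6) you take a genuinely different route.
\begin{itemize}
\item \textbf{The paper's route.} It realizes $\tC^\perp$ as the image of the injective residue map on $\Omega(\widetilde D-\widetilde P)$. It then cites a separate freeness statement for this space of differentials, \cite[Proposition~7]{CouGas26}.
\item \textbf{Your route.} You trivialize differentials by the $G$-invariant $\operatorname{Cotr}\eta$ and write $\tC^\perp=\widetilde u\star C_\cL(\widetilde P,\Con(P-D+(\eta)))$. This uses that the extension is unramified, so $(\operatorname{Cotr}\eta)=\Con((\eta))$. The vector $\widetilde u$ is constant on $G$-orbits: $\operatorname{Cotr}\eta$ is $G$-invariant and $G$ fixes the constant field $\F_{q^2}$, so residues at $Q$ and $\sigma(Q)$ coincide. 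You then reuse the Riemann--Roch freeness theorem behind item (4) of Theorem~\ref{theo-lift-mult-prop}, applied to the divisor $P-D+(\eta)$.
\item \textbf{Trade-off.} Your route needs only one freeness result instead of two, and it makes the rank transparent: $\deg(P-D+(\eta))-g+1=n-k$. The paper's route is more intrinsic, since it needs no choice of $\eta$. Both tacitly rely on $G$-equivariance of residues, which the paper hides inside the citation.
\end{itemize}

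The caveats you leave open all resolve at once:
\begin{itemize}
\item The degree bounds $2g-2<\deg(P-D+(\eta))<n$ are equivalent to the hypotheses $\deg D<n$ and $\deg D>2g-2$.
\item Disjointness from $\operatorname{supp}P$ is automatic. We have $v_{P_i}(\eta)=-1$ and $P_i\notin\operatorname{supp}D$, so the coefficient of $P_i$ in $P-D+(\eta)$ is $1-0-1=0$, and no moving of $\eta$ is needed.
\item The setup of Theorem~\ref{theo-lift-mult-prop} imposes no effectivity on $D$, only $2g-2<\deg D<n$ and disjointness from $P$. Its multiplication-property hypotheses play no role in the freeness item. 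So the freeness result applies to $P-D+(\eta)$ exactly as you want.
\end{itemize}
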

\begin{proof}
Since one has $k = \deg(D)-g(F)+1$ by the Riemann-Roch theorem and $2g(\tF)-2 = |G|\cdot (2g(F)-2)$ by the Riemann-Hurwitz formula~\cite[3.4.13]{Stichtenothbook}, items \textit{(1)-(5)} follow from Theorem~\ref{theo-lift-mult-prop}.

For item \textit{(6)} recall that the dual code of $\tC = C_\cL(\widetilde{P},\widetilde{D})$ has a geometric nature: $\tC^\perp$ is the image of the (injective) residue map
\[
\begin{array}{cccc}
\operatorname{res}_{\widetilde{D}-\widetilde{P},\widetilde{P}}:
& \Omega(\widetilde{D}-\widetilde{P}) & \longrightarrow  & 
\F_{q^2}^{\widetilde{n}} \\
&  \omega & \longmapsto  & (\operatorname{Res}_{P_i}(\omega))_{i\leqslant \widetilde{n}}\\
\end{array}
\]
where $\Omega(\widetilde{D}-\widetilde{P}) \coloneq \{\omega \in \Omega_{\tF} \mid (\omega) \geqslant \widetilde{D}-\widetilde{P}\}$. The parameters of the code can be checked in~\cite[Theorem 2.2.7]{Stichtenothbook}, and the freeness in~\cite[Proposition 7]{CouGas26}.
\end{proof}

\begin{remark}\label{rem-G-order}
In~\cite{CouGas26}, the decoding algorithm requires, for technical reasons, $G$ to have order either a power of $p$ or coprime to $p$, where $p$ is the characteristic of $\F_q$. Any abelian group $G$ can be decomposed as $G = G_p \times G_\ell$ where $G_p$ is a $p$-group and $G_\ell$ has order coprime to $p$. Let $\tF'$ be the subfield of $\tF$ fixed by the smaller these two subgroups. One can replace $\tF$ and $\tC$ by $\tF'$ and $\tC'$, the lift of $\C_\cL(P,D)$ in $\tF'$, in the previous corollary. With these changes, every statement remains true except that $G'$, the Galois group of $\tF'$, only satisfies $|G'| \geqslant (\sqrt{q}-1)^{g(F)}$ (see \cite[Section 9.2]{CouGas26}). Then $|G'|$ is either a power of $p$ or coprime to $p$.
\end{remark}

We are now ready to prove Theorem~\ref{theo-lift-AG-code}.

\begin{proof}[Proof of Theorem~\ref{theo-lift-AG-code}]
    The assumptions of the Theorem allow to use Corollary~\ref{cor-AG-codes}: there exists an AG code $\tC$ which is a quasi-$G$ code with $G$ abelian. According to Remark~\ref{rem-G-order}, we can assume that $G$ has order either a power of the characteristic $p$ or coprime to $p$, if we only assume that $|G| \geqslant (\sqrt{q}-1)^{g}$. Then $\tC$ satisfies most of the conditions: we only need to show that there exists $\alpha_q$ and $\beta_q$, two positive constants depending only on $q$, such that
\[\alpha_q n \leqslant \log |G| \leqslant \beta_q n.\]
    
    We assumed that $n > 2g(F)$. Moreover, the Hasse-Weil bound~\cite{Weil48} states that \[n\leqslant N(F)\leqslant q+1+2g\sqrt{q}.\]
    Since $g>1$, one has
    \[ 2g < n \leqslant q+1+2g\sqrt{q} < (q+1+2\sqrt{q}) g, \]
    or equivalently 
\[\frac{n}{q+1+2\sqrt{q}} < g < \frac{n}{2}.\]
    Finally, according to the construction in Theorem~\ref{theo-lift-mult-prop} and~\cite{Weil48}, one has
    \[ (\sqrt{q}-1)^{g} \leqslant |G| \leqslant \left(\frac{q+1}{\sqrt{q}-1}\right)^{2g}, \]
    and hence
    \[\frac{\log(\sqrt{q}-1)}{q+1+2\sqrt{q}}\cdot n < \log(\sqrt{q}-1)\cdot g \leqslant \log |G| \leqslant 2\log\left(\frac{q+1}{\sqrt{q}-1}\right)\cdot g < \log \left(\frac{q+1}{\sqrt{q}-1}\right)\cdot n .\]
\end{proof}

In~\cite{CouGas26}, the authors show that the traditional cubic-time decoding algorithm of AG codes can be adapted to take advantage of the structure of free $\F_q[G]$-module. They design a randomized decoding algorithm with quasi-quadratic complexity in $|G|$, and since in Theorem~\ref{theo-lift-AG-code} one has $ n = \Theta(\log |G|)$, this yields quasi-quadratic complexity in the length of the code. We give the precise result in the following theorem.

\begin{theorem}[from \protect{\cite[Corollary~53.1]{GasnierPhD}}]\label{Theorem decoder}
We use the notation of Theorem~\ref{theo-lift-AG-code} and denote by $\tilde{n}$ the length of $\tC$, by $\tilde{g}$ the genus of the function field over which it is defined, and by $d^* = n-\deg(D)$ the designed distance of $C$. There exists a constant $\mathcal{C}_q$ such that the following holds: there exists a randomized (Las Vegas) decoding algorithm for $\tC$ allowing to decode up to $ |G|\cdot\lfloor(d^*+g-1)/2\rfloor-\tilde{g} $ errors with at most $\mathcal{C}_q\cdot \tilde{n}^2 \operatorname{polylog}(\tilde{n})$ operations.
\end{theorem}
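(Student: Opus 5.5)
This is a decoding result for a free $\F_{q^2}[G]$-module AG code. The plan is to adapt the classical error-locating-function decoder for AG codes, often called the basic or Pellikaan decoder, so that all its linear algebra happens over the group algebra $\F_{q^2}[G]$ rather than over $\F_{q^2}$.

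\textbf{Setting up the key equation.} Let $y=c+e$ be the received word, with $c\in\tC=C_\cL(\widetilde P,\widetilde D)$ and error weight $t$. Choose an auxiliary divisor $A$ on $F$ and lift it to $\widetilde A=\Con_{\tF/F}(A)$. We look for a nonzero error locator $\Lambda\in\cL(\widetilde A)$ and a function $N\in\cL(\widetilde A+\widetilde D)$ satisfying
\[
\Lambda(\widetilde P_i)\,y_i=N(\widetilde P_i)\qquad\text{for all } i.
\]
The standard analysis gives two conditions:
\begin{enumerate}[(i)]
\item Existence of a locator requires $\dim\cL(\widetilde A-\widetilde E)>0$. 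By Riemann--Roch this holds when $\deg\widetilde A-t\geqslant\tilde g$, i.e. $|G|\deg A-t\geqslant \tilde g$.
\item Uniqueness of $N$ requires $\deg(\widetilde A+\widetilde D)<\tilde n-t$.
\end{enumerate}
Pick $\deg A=\lfloor(d^*+g-1)/2\rfloor$, roughly, and balance the two conditions using $\tilde g-1=|G|(g-1)$ from the Riemann--Hurwitz identity in Theorem~\ref{theo-lift-mult-prop}. This yields the decoding radius $|G|\lfloor(d^*+g-1)/2\rfloor-\tilde g$. Since $d^*$ and $g$ are linear in $n$, and $\tilde g\approx|G|g$, the radius is linear in $\tilde n$ provided $d^*$ exceeds roughly $g$, which holds for the parameter regimes we use.

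\textbf{Making the linear algebra $G$-linear.} Because $\widetilde A$, $\widetilde D$ and $\widetilde P$ are conorms, they are $G$-invariant. By the freeness statement proved in \cite{CouGas26} (as used in Theorem~\ref{theo-lift-mult-prop}(4)), the spaces $\cL(\widetilde A)$ and $\cL(\widetilde A+\widetilde D)$ are free $\F_{q^2}[G]$-modules of ranks $\deg A-g+1$ and $\deg(A+D)-g+1$, both $O(n)$. Evaluation at $\widetilde P$ is $G$-equivariant and identifies $\F_{q^2}^{\tilde n}$ with $\F_{q^2}[G]^n$. However, the map $\Lambda\mapsto \Lambda\star y$ is \emph{not} $G$-linear, since $y$ is arbitrary. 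We handle this as in \cite{CouGas26}: the key equation is written as a system over $\F_{q^2}$ whose coefficient matrix has a block structure. Its blocks are multiplication-by-$y$ operators composed with $G$-module maps, so matrix--vector products cost $O(n^2|G|\log|G|)$ by the FFT bound of Proposition~\ref{prop-encoding-quasi-G-codes}.

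\textbf{Solving the system fast.} This is the main obstacle. The unknowns number $\Theta(n|G|)=\Theta(\tilde n)$, so Gaussian elimination would cost cubic time. Instead, I would apply a randomized black-box solver (Wiedemann-type, hence Las Vegas) that uses $O(\tilde n)$ fast matrix--vector products. Each product costs $\tilde O(n^2|G|)$, and since $n=\Theta(\log|G|)$ by Theorem~\ref{theo-lift-AG-code}, this is $\tilde n\,\operatorname{polylog}(\tilde n)$. The total cost is therefore $\mathcal C_q\,\tilde n^2\operatorname{polylog}(\tilde n)$. Remark~\ref{rem-G-order}, which lets us take $|G|$ either a $p$-power or prime to $p$, is what guarantees the FFT over $\F_{q^2}[G]$ has the stated cost.

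\textbf{Recovering the codeword.} Once $\Lambda$ is found, its zeros among the $\widetilde P_i$ locate the errors; this takes $\tilde n\cdot\mathrm{rank}$ evaluations. The codeword is then recovered by erasure decoding, or directly as $N/\Lambda$, again by a structured linear solve within the same complexity bound.
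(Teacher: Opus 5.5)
The paper gives no proof of this statement. It imports it from \cite[Corollary~53.1]{GasnierPhD} and \cite[Section~9.3]{CouGas26}, with only the summary that the classical cubic-time AG decoder is adapted to the free $\F_q[G]$-module structure, made randomized, and becomes quasi-quadratic in $|G|$, hence in $\tilde n$ since $n=\Theta(\log|G|)$. Your reconstruction matches that summary: a basic error-locator decoder with $\widetilde A=\Con_{\tF/F}(A)$, $G$-linear black-box products via the FFT in $\F_{q^2}[G]$, and a Wiedemann-type Las Vegas solver. Your choice $\deg A=\lfloor(d^*+g-1)/2\rfloor$ also gives exactly the stated radius. Using $\tilde g-1=|G|(g-1)$, condition (i) gives $t\leqslant |G|\lfloor(d^*-g+1)/2\rfloor-1$, which is the stated bound. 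Condition (ii) allows $t\leqslant |G|\lceil(d^*-g+1)/2\rceil-1$, so no further balancing is needed.

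The one step that does not hold as written is the freeness of $\cL(\widetilde A)$. The result you invoke (Theorem~\ref{theo-lift-mult-prop}(4), from \cite{CouGas26}) is stated under $\deg>2g-2$. For your $A$ this means $d^*\geqslant 3g-1$. The hypotheses of Theorem~\ref{theo-lift-AG-code} only give $d^*\geqslant 1$, and the claimed radius is already positive once $d^*\geqslant g+1$. When $\deg A\leqslant 2g-2$, one has $\ell(\widetilde A)=|G|(\deg A-g+1)+\ell(\widetilde K-\widetilde A)$, with $\widetilde K$ the conorm of a canonical divisor. The correction term need not be a multiple of $|G|$, so $\cL(\widetilde A)$ need not be free, and your rank count and $\F_{q^2}[G]$-basis are then unavailable.

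This is harmless in the paper's application: there $(m+1)\deg D\leqslant n+2g-2$ with $m\geqslant 2$ and $n/g$ close to $t-1$, so $d^*$ is well above $3g$. In general you should either add the hypothesis $d^*\geqslant 3g-1$, or parametrize the locator inside a free Riemann--Roch space of larger degree cut out by $G$-linear (principal-part) conditions. Your existence argument for (i) survives either way, because it uses Riemann--Roch on $\widetilde A-\widetilde E$, not freeness.

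A smaller point: the key-equation system is homogeneous and rectangular over a fixed small field. The black-box step therefore has to be a kernel-sampling (Kaltofen--Saunders type) variant of Wiedemann, run over an extension of degree $O(\log\tilde n)$. This only adds polylogarithmic factors.
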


Note that $ |G|\cdot\lfloor(d^*+g-1)/2\rfloor-\tilde{g}$ is equal to $\lfloor (\tilde d^*-\tilde{g}-1)/2 \rfloor$ when $(d^*+g-1)/2$ is an integer (where $\tilde d^*$ denotes the designed distance of $\tC$).

\subsection{Good quantum group codes from lifted AG codes}\label{Good quantum codes with addressable and parallelizable}

In this section we prove the main results of our work, Theorem \ref{Theorem, main} and Corollary \ref{Corollary MSD}. Our proof combines the framework for parallel addressability developed in Section \ref{section: Sufficient conditions for parallelizable addressability} and the AG code lifting method exposed in Section \ref{Section Good quasi-abelian AG codes}.

\begin{theorem}[Restatement of Theorem \ref{Theorem, main}]\label{Theorem two Good quasi-G-quantum}
Let $t>4$ be a power of two, and set $q =t^4$. Then, for any $1<m\leqslant \frac{t}{2}-1$ there exists:

\begin{enumerate}[(1)]
    \item a sequence of $q$-dimensional qudit regular quantum group codes with asymptotic parameters 
\[ 
[[n,\Theta (n/\log n) ,\Theta(n)]]_{q},
\]
and admitting transversal $\mathsf C^{\tilde m-1} \mathsf Z$-gates that are fully addressable for all $1<\tilde m\leqslant m$. Moreover, the depth of any logical circuit of $\mathsf C^{\tilde m-1} \mathsf Z$-gates acting across the $m$ code blocks of $\mathcal{Q}^{\otimes \widetilde m}$ is upper bounded by $O(k^{\tilde m-1})$, where $k$ denotes the dimension of $\mathcal Q$.

\item a sequence of $q$-dimensional qudit free quantum group codes with asymptotic parameters 
\[ 
[[n,\Theta (n) ,\Theta(n)]]_{q},
\]
and admitting a transversal $\mathsf C^{\tilde m }\mathsf Z$-gate as well as orbit-wise addressable transversal $\mathsf C^{\tilde m-1 }\mathsf Z$ gates across $\Theta(\log(n))$ orbits of size $\Theta(n/\log(n))$ for $1<\tilde m \leqslant m$.
\end{enumerate}

In both cases, the quantum code family admits a $O(n^2\cdot \operatorname{polylog}(n))$-time decoder\footnote{The polylogarithmic factor of the decoder is $\log^4(n)$} with a linear decoding radius.
\end{theorem}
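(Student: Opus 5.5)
The plan is to chain three earlier results: a good base AG family over $\F_{\sqrt q}=\F_{t^2}$, the lifting Theorem~\ref{theo-lift-AG-code}, which produces codes over $\F_q=\F_{t^4}$, and the CSS construction of Theorem~\ref{Theorem quasi-G-quantum}.

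\textbf{Base family.} I would start from a family of AG codes $C_i=C_\cL(P_i,D_i)$ over $\F_{t^2}$, with $t^2>4$, built on a tower attaining the Drinfeld--Vladut bound, so that $N(F_i)/g_i\to t-1$. The divisors should satisfy:
\begin{itemize}
\item $D_i$ is effective;
\item $(m+1)D_i\leqslant P_i+(\eta_i)$, where $\eta_i$ has simple poles on $P_i$;
\item $n_i>2g_i$;
\item $\deg D_i=\Theta(n_i)$ and $n_i-\deg D_i=\Theta(n_i)$.
\end{itemize}
This is the construction used for quantum AG codes with transversal $\mathsf C^{m}\mathsf Z$, as in \cite{Wills2025} or \cite{guemard2025good}. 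The condition $m\leqslant t/2-1$ should be exactly what makes it feasible. The constraint is roughly $(m+1)\deg D\leqslant n+2g-2$, together with $\deg D>2g-2+\epsilon n$ so that the dual distance $\deg D-2g+2$ is linear. Since $g/n\approx 1/(t-1)$, this needs $2(m+1)/(t-1)<1+2/(t-1)$ with some slack, i.e.\ $m<(t-1)/2$ up to rounding. I expect this parameter check, and the choice of the divisor $D_i$ against $(\eta_i)$, to be the main obstacle; I would import it from the cited constructions rather than redo it.

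\textbf{Lifting.} I would apply Theorem~\ref{theo-lift-AG-code} to get $\tC_i$ over $\F_{t^4}=\F_q$. This code is a quasi-$G_i$ code with $\log|G_i|=\Theta(n_i)$, it is free of rank $k_i$, it has the $m$-multiplication property, and it has linear relative distance. Its dual $\tC_i^\perp$ also has linear relative distance, with $\tilde d^\perp\geqslant(\deg D_i-2g_i+2)|G_i|$. Since $D_i$ is effective, $\tC_i$ contains the all-one vector. Lemma~\ref{lemma:multiplication-property} then gives the $\tilde m$-multiplication property for every $\tilde m\leqslant m$, in particular $\tilde m=2$, as Theorem~\ref{Theorem quasi-G-quantum} requires. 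All these properties hold with respect to the same vector $u$, namely the residues of $\operatorname{Cotr}(\eta_i)$. The lifted length is $\tilde n_i=n_i|G_i|$, so $\log\tilde n_i=\Theta(n_i)$ and the orbit size is $\ell=|G_i|=\Theta(\tilde n_i/\log\tilde n_i)$.

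\textbf{The two items.} The hypothesis $d^\perp>r\ell$ holds for $r\leqslant c\,n_i$, with $c$ a small constant. For item (1), I take $r=1$ and apply Theorem~\ref{Theorem quasi-G-quantum} and Corollary~\ref{Corollary depth}. This gives a regular code with $k=|G_i|=\Theta(\tilde n/\log\tilde n)$ and distance $\geqslant\min(\tilde d,\tilde d^\perp)-|G_i|=\Theta(\tilde n)$. It is fully addressable for each $\tilde m\leqslant m$, with depth $k^{\tilde m-1}$.

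For item (2), I take $r=\lfloor c n_i\rfloor=\Theta(\log\tilde n)$ orbits with $c$ small. Then $k=r\ell=\Theta(\tilde n)$, and the distance stays linear because $(c'-c)\,n_i|G_i|$ is still linear in $\tilde n$. The global transversal $\mathsf C^{\tilde m}\mathsf Z$ comes from the $\tilde m+1$-multiplication property: take $\tilde m+1\leqslant m$, or use Lemma~\ref{lemma:multiplication-property decomposition} with the all-one modulation. The orbit-wise addressable $\mathsf C^{\tilde m-1}\mathsf Z$ gates come from item (4) of Theorem~\ref{Theorem quasi-G-quantum}.

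\textbf{Decoding.} The decoder of Theorem~\ref{Theorem decoder} decodes $\tC$, and hence its punctured versions, by treating the $r\ell$ punctured positions as erasures. It also decodes $\tC^\perp$, which is again an AG code on the same tower. The radius is $|G|\lfloor(d^*+g-1)/2\rfloor-\tilde g$. This is linear in $\tilde n$ once $d^*-g$ is a positive linear fraction of $n$, which I would arrange in the choice of $D_i$. The running time is $O(\tilde n^2\operatorname{polylog}\tilde n)$, and correcting $X$ and $Z$ errors separately then gives the CSS decoder.
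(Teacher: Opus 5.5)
Your overall route is the paper's: a base AG family over $\F_{t^2}$ imported from \cite{guemard2025good}, the lift via Theorem~\ref{theo-lift-AG-code}, Theorem~\ref{Theorem quasi-G-quantum} with $r=1$ or $r=\Theta(n_i)$ orbits, and then Theorem~\ref{Theorem decoder}. Your parameter and orbit-size bookkeeping agrees with it. The step that fails as written is the global gate, which you have off by one. By definition, $u\star\tC^{\star\tilde m}\subseteq\tC^\perp$ means $\sum_P u_P c^0_P\cdots c^{\tilde m}_P=0$ for any $\tilde m+1$ codewords. So the $\tilde m$-multiplication property already gives the transversal $\mathsf C^{\tilde m}\mathsf Z$ on $\tilde m+1$ blocks: split the sum over $L$ and $\overline L$ using that $\pi_{\overline L}$ is an isomorphism, with no modulation codeword. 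Demanding the $(\tilde m+1)$-multiplication property only covers $\tilde m\leqslant m-1$. You therefore miss $\tilde m=m$, for instance $\mathsf{CCZ}$ when $m=2$, which is what Corollary~\ref{Corollary MSD} uses. Your fallback does not reach it either: the all-one vector placed in the modulation slot of Lemma~\ref{lemma: main circuit} leaves only $\tilde m$ data codewords, so it yields the global $\mathsf C^{\tilde m-1}\mathsf Z$.

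Your decoding step also checks only the primal condition $d^*-g=\Omega(n_i)$. Correcting $Z$-type errors requires decoding $C_0=(\tC^\perp)_{\overline L}$. Up to residue scaling, this is the lift of $C_\cL(P,P-D+(\eta))$ punctured on $r$ fibres, with base designed distance $\deg D-2g+2-r$. The radius certified by Theorem~\ref{Theorem decoder} is then about $|G|(\deg D-3g-r)/2$, so you need $\deg D_i-3g_i=\Omega(n_i)$.

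Combine this with $(m+1)\deg D_i\leqslant n_i+2g_i-2$ and the Drinfeld--Vladut bound $n_i\lesssim(t-1)g_i$. Together they force roughly $m<(t-2)/3$, so the condition cannot be arranged through the choice of $D_i$ over the whole range $m\leqslant t/2-1$; for $t=8$ it fails for every $m\geqslant 2$. The paper's proof has the same hole, since it invokes only the decoder of $\tC$ for both error types, so this is not a deviation from it. Still, the linear-radius claim is only justified for the smaller range of $m$.

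A minor simplification: each $G$-orbit is the fibre over a single place of $P$. Hence $\tC_{\overline L}$ is itself the lift of a punctured base AG code, and Theorem~\ref{Theorem decoder} applies to it directly with no erasure decoding.
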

\begin{proof}
Let $\{C^{(i)}: i\in \NN\}$ be an asymptotically good family of AG $q'$-ary code, where each $C^{(i)}\subseteq \F_{q'}^{n^{(i)}}$ has parameters $[n^{(i)},k^{(i)},d^{(i)}]_{q'}$, rate $\kappa^{(i)}$, relative distance $\delta^{(i)}$, and satisfies the assumption of Theorem \ref{theo-lift-mult-prop} for all $i\in \NN$. In particular, $C^{(i)}$ has the $m$-multiplication property and the $2$-multiplication property. Because the codes contain the all-one vector, they also have the $\tilde m$-multiplication property for every $1<\tilde m\leqslant m$. Suppose moreover that the dual-code family $\{(C^{(i)})^\perp: i\in \NN\}$ is asymptotically good as well. Instances of such codes can be found in \cite{golowich2024asymptoticallygoodquantumcodes,nguyen2024goodbinaryquantumcodes,Wills2025,he2025good,guemard2025good}. In the even-characteristic case of \cite{guemard2025good} this is especially verified for $t>4$,  $q'=t^2$ and $1<m\leqslant \frac{t}{2}-1$.\par

Then, following Theorem \ref{theo-lift-mult-prop}, each $C^{(i)}$, can be lifted into a $q$-ary code $\tilde C^{(i)}$, where $q\coloneq q'^2$, with parameters $[\tilde n^{(i)},\tilde k^{(i)},\tilde d^{(i)}]_{q}$ such that $\tilde C^{(i)}$ and $(\tilde C^{(i)})^\perp$ have the same rate $\kappa^{(i)},\kappa^{(i)\perp}$ and relative distance $\delta^{(i)},\delta^{(i)\perp}$ as those of $C^{(i)}$ and $(C^{(i)})^\perp$, respectively. That is, $\{\tilde C^{(i)}: i\in \NN\}$ and $\{(\tilde C^{(i)})^\perp: i\in \NN\}$ are asymptotically good families of codes. Moreover, these two codes are quasi-$G^{(i)}$ codes for some group $G^{(i)}$ of order $\ell^{(i)}$ satisfying $\alpha_q n^{(i)}\leqslant \log \ell^{(i)}\leqslant \beta_q n^{(i)}$ where $\alpha_q,\beta_q$ are positive universal constants.\par
Let $r^{(i)}\geqslant 1$ be an integer and let $L^{(i)}$ be a union of $r^{(i)}$ orbits of the action of $G^{(i)}$. Notice that since $\log\ell^{(i)}<\beta_qn^{(i)}$ and the family of lifted codes and their duals are both good, we can choose $r^{(i)}$ such that $r^{(i)}\ell^{(i)} < \min\{\tilde d^{(i)\perp},\tilde k^{(i)}\}$. Then, the code $\tilde C^{(i)}$ satisfies the assumptions of Theorem \ref{Theorem quasi-G-quantum}. We define the shortened code $C_0^{(i)}=C^{(i)}_{L^{(i)}=0}$ and the punctured code $C_1^{(i)}=C^{(i)}_{\overline{L}^{(i)}}$. It follows that the quantum code $\mathcal Q^{(i)}\coloneq \operatorname{CSS}(C_0^{(i)},C_1^{(i)})$ is a free quantum $G^{(i)}$ code with parameters $[[n^{(i)}_{\mathcal Q},k^{(i)}_{\mathcal Q},d^{(i)}_{\mathcal Q}]]$ such that
\[
\begin{aligned}
n^{(i)}_{\mathcal Q} &= \ell^{(i)}(n ^{(i)}-r^{(i)})\leqslant \ell^{(i)}\left( \frac{\log \ell^{(i)}}{\alpha_q}-r^{(i)}\right)\\
k^{(i)}_{\mathcal Q} &= r^{(i)}\ell^{(i)} \\
d^{(i)}_{\mathcal Q}  &\geqslant \min\{d((\tilde C^{(i)})^{\perp}), d(\tilde C^{(i)})\} - r\ell^{(i)}\\
      &\geqslant  \ell^{(i)}\left( \frac{ \min\{\delta^{(i)},\delta^{(i)\perp}\}\cdot\log \ell^{(i)}}{\beta_q} -r^{(i)}\right).
\end{aligned}
\]

We distinguish two cases.
\begin{enumerate}[(1)]
    \item First, suppose that $r^{(i)}=1$. Then, the family $\{\mathcal Q^{(i)}: i\in \NN\}$ has asymptotic parameters scaling as $[[n,\Theta(n/\log n), \Theta(n)]]$. From Theorem \ref{Theorem quasi-G-quantum}, each member $\mathcal Q^{(i)}$ is a regular quantum $G^{(i)}$ code which admits an encoder $\Enc^{(i)}$ allowing for the transversal implementation of the full set of addressable logical $\mathsf C^{\widetilde m-1}\mathsf Z$ gates for $1< \widetilde m \leqslant m$, and those can be parallelized $k^{(i)}_{\mathcal Q}$ at a time from Lemma \ref{lemma: main circuit}. From Corollary \ref{Corollary depth}, it follows that the physical depth of any logical circuit on $\mathcal Q^{\otimes m}$ composed of inter-block $\mathsf C^{m-1}\mathsf Z$ gates is bounded by $O\big( (k^{(i)}_{\mathcal Q})^{\,m-1} \big)$.

    \item Second, suppose that $r^{(i)}=\lfloor \eta^{(i)} n^{(i)}\rfloor$, for some constant-approaching term 
    \begin{equation}\label{Equation upper bound constant}
           \eta^{(i)}<\frac{1}{2}\min\{\kappa^{(i)},\delta^{(i)},\delta^{(i)\perp},\beta_q\left(1+\frac{1}{\alpha_q}\right)\},
    \end{equation}
 
    where the $1/2$-factor can be replaced by an arbitrary constant less than one. In that case, we obtain the parameter bounds 
\[
\begin{aligned}
n^{(i)}_{\mathcal Q} & \leqslant \ell^{(i)}\log\ell^{(i)}\left(\frac{1}{\alpha_q}-\frac{\eta^{(i)}}{\beta_q}+1\right)\\
k^{(i)}_{\mathcal Q} &\geqslant \ell^{(i)}\left(\log\ell^{(i)}\frac{\eta^{(i)}}{\beta_q}-1\right)\\
d^{(i)}_{\mathcal Q}   &\geqslant  \ell^{(i)}\left( \frac{ \log \ell^{(i)}(\min\{\delta^{(i)},\delta^{(i)\perp}\}-\eta^{(i)})}{\beta_q} +1\right).
\end{aligned}
\]

In particular, letting $\eta^{(i)}$ approach its upper bound given in Equation \eqref{Equation upper bound constant}, the family $\{\mathcal Q^{(i)}: i\in \NN\}$ has asymptotic parameters scaling as $[[n,\Theta(n), \Theta(n)]]$.
From Theorem \ref{Theorem quasi-G-quantum}, each member $\mathcal Q^{(i)}$ is a free quantum $G^{(i)} $-code which admits an encoder $\Enc^{(i)}$ allowing for the transversal implementation of the logical $\overline{\mathsf C^{\widetilde m-1}\mathsf Z}$ gates acting orbit-wise on tuples of logical qubits for $1< \widetilde m \leqslant m$, and those can be parallelized $k^{(i)}_{\mathcal Q}$ at a time as shown in Lemma \ref{lemma: main circuit}. Moreover, for each member of the family, the gate $\mathsf C^{m}\mathsf Z$ is still transversal, i.e. $(\mathsf C^{m}\mathsf Z)^{\otimes n^{(i)}_{\mathcal Q}}=\overline{\mathsf C^{m}\mathsf Z}^{\otimes k^{(i)}_{\mathcal Q}}$.
\end{enumerate}
In both cases, from Theorem \ref{Theorem decoder}, each code $\mathcal Q^{(i)}$ admits a quasi-quadratic-time decoder for both $X$-type and $Z$-type errors. This is the decoder of the classical AG code $\tilde C^{(i)}$, which has a linear decoding radius and benefits from the fast matrix multiplication in $\F_q[G^{(i)}]$. 
\end{proof}

Here, by transversal gate, we mean a gate acting globally on all the physical qubits across the $m$ blocks of $\mathcal Q$, with an induced action on all the logical qudits: $\mathsf C^{\tilde m-1} \mathsf Z=\overline{\mathsf C^{\tilde m-1} \mathsf Z}^{\otimes k}$.\par

\begin{corollary}[Restatement of Corollary \ref{Corollary MSD}]
Let $q=t^4$ with $t\geqslant 6$. Then, there exists a $[[n,\Theta(n),\Theta(n)]]_q$ quantum code family allowing for a constant overhead distillation protocol of qubit $\mathsf{CCZ}$-magic state with an $O(n^2\cdot \operatorname{polylog}(n))$ time complexity.
\end{corollary}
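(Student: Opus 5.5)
The corollary follows from item (2) of Theorem \ref{Theorem two Good quasi-G-quantum} with $m=2$. The base codes from \cite{guemard2025good} need $t>4$ to be a power of two with $1<m\leqslant \frac t2-1$. For $m=2$ this amounts to $t\geqslant 6$, so $t\geqslant 8$ among powers of two. Item (2) then yields a family of free quantum group codes $\mathcal Q^{(i)}$ with parameters $[[n,\Theta(n),\Theta(n)]]_q$ and $q=t^4$, together with a transversal global gate $\mathsf C^{2}\mathsf Z=\mathsf{CCZ}$, i.e. $(\mathsf{CCZ})^{\otimes n_{\mathcal Q}}=\overline{\mathsf{CCZ}}^{\otimes k_{\mathcal Q}}$, across three code blocks.

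\textbf{Step 1: distillation protocol.} I will plug this family into the standard distillation protocol for codes with a transversal $\mathsf{CCZ}$ gate, as in \cite{Wills2025} and \cite{golowich2024asymptoticallygoodquantumcodes,nguyen2024goodbinaryquantumcodes}. The protocol encodes, applies the transversal $\mathsf{CCZ}$ using noisy physical magic states, measures the stabilizers, and decodes or post-selects. One round suppresses the output error to order $\epsilon^{\Theta(d)}$ while consuming $\Theta(n)$ inputs for $\Theta(k)$ outputs.

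\textbf{Step 2: reducing to qubits.} There are two standard ingredients for getting qubit $\mathsf{CCZ}$ states out of $q$-dimensional qudits.
\begin{enumerate}[(a)]
    \item A qudit $\mathsf{CCZ}^{\gamma}$ over $\F_{2^s}$ can be implemented with qubit $\mathsf{CCZ}$ gates, since $\mathrm{tr}(\gamma x_0x_1x_2)$ is a cubic form over $\F_2$ in the bits of the $x_j$. This costs only a constant factor depending on $q$.
    \item Logical qubit magic states are extracted from the $\F_q$-qudit logical states by the standard trace/subfield reduction, again at constant cost.
\end{enumerate}
Since $k=\Theta(n)$ and $d=\Theta(n)$, the overhead exponent $\gamma=\log(n/k)/\log d$ tends to $0$, which gives constant overhead.

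\textbf{Step 3: time complexity.} In each round, encoding the $\tC$-based code can be done in quasi-linear time by Proposition \ref{prop-encoding-quasi-G-codes}, since the lifted codes are free and $n=\Theta(\log|G|)$. The punctured and shortened codes inherit this, because $\pi_{\overline L}$ is an isomorphism by Lemma \ref{lemma regular quantum code}. Decoding or syndrome processing uses the $O(n^2\operatorname{polylog} n)$ decoder of Theorem \ref{Theorem decoder}, which has linear radius. The transversal layer has depth one. Since overhead is constant, the concatenation levels and their recursion multiply the per-block cost by only a bounded factor.

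\textbf{Main obstacle.} The hardest part is checking that the decoder of Theorem \ref{Theorem decoder} really serves the quantum code $\mathcal Q=\mathrm{CSS}(C_0,C_1)$, where $C_0,C_1$ are a shortening and a puncturing of $\tC$. For $X$-errors I will use the fact that $C_1$ is a puncturing of $\tC$: an error on $\overline L$ can be decoded as an error of $\tC$ with erasures on $L$. Because $|L|=r\ell$ is below a constant fraction of the decoding radius, the radius stays linear. For $Z$-errors, $C_0^\perp\subseteq(\tC^{\perp})_{\overline L}$, so I would invoke the analogous decoder for $\tC^\perp$. This is again an AG code, namely a differential code that is equivalent to an evaluation code, so Theorem \ref{Theorem decoder} applies to it as well. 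The remaining estimates are routine bookkeeping of the MSD analysis.
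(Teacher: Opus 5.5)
Your proposal is correct and follows the paper's route. The paper likewise takes item (2) of Theorem \ref{Theorem two Good quasi-G-quantum} in the $\mathsf{CCZ}$ case ($t\geqslant 6$, hence $t\geqslant 8$ among powers of two), plugs the resulting $[[n,\Theta(n),\Theta(n)]]_q$ family into the protocol of \cite{Wills2025} with the conversion of $\mathsf{CCZ}_q$ magic states into qubit $\mathsf{CCZ}$ states, and inherits the $O(n^2\operatorname{polylog}(n))$ cost from the decoder of Theorem \ref{Theorem decoder}.

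Your errors-and-erasures treatment of $C_1=\tC_{\overline L}$ and of the dual side spells out a step the paper only asserts. Two corrections to that part. By Lemma \ref{Lemma puncturing} the relevant statement is the identity $C_0=(\tC^\perp)_{\overline L}$, not an inclusion involving $C_0^\perp$. Also, $r$ must be chosen so that $|L|=r\ell$ is a small enough constant fraction of $\tilde d^*-\tilde g$; this is a choice you make, not something automatic, and it remains compatible with $k=\Theta(n)$.
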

\begin{proof}
This follows from item \textit{(2)} of Theorem \ref{Theorem two Good quasi-G-quantum}, stating the existence of a $[[n,\Theta(n),\Theta(n)]]_q$ quantum code family with transversal $\mathsf{CCZ}_q$ gate and an $O(n^2\cdot \operatorname{polylog}(n))$-time complexity decoder and a linear decoding radius. We can use this construction in the magic-state distillation protocol of \cite{Wills2025}, converting qudits to qubits using a self-orthogonal basis of $\F_q$ and converting a $\mathsf{CCZ}_q$ magic state into a $\mathsf{CCZ}$ magic state. 
\end{proof}

\newpage
\appendix

\bibliographystyle{alpha}
\newcommand{\etalchar}[1]{$^{#1}$}


\begin{thebibliography}{HVWZ25b}

\bibitem[ABO08]{Aharonov97}
Dorit Aharonov and Michael Ben-Or.
\newblock Fault-tolerant quantum computation with constant error rate.
\newblock {\em SIAM Journal on Computing}, 38(4):1207--1282, 2008.

\bibitem[BH12]{PhysRevA.86.052329}
Sergey Bravyi and Jeongwan Haah.
\newblock Magic-state distillation with low overhead.
\newblock {\em Phys. Rev. A}, 86:052329, Nov 2012.

\bibitem[BK05]{PhysRevA.71.022316}
Sergey Bravyi and Alexei Kitaev.
\newblock Universal quantum computation with ideal clifford gates and noisy
  ancillas.
\newblock {\em Phys. Rev. A}, 71:022316, Feb 2005.

\bibitem[BMD06]{PhysRevLett.97.180501}
H.~Bombin and M.~A. Martin-Delgado.
\newblock Topological quantum distillation.
\newblock {\em Phys. Rev. Lett.}, 97:180501, Oct 2006.

\bibitem[Bom15]{Bombin_2015}
Héctor Bombín.
\newblock Gauge color codes: optimal transversal gates and gauge fixing in
  topological stabilizer codes.
\newblock {\em New Journal of Physics}, 17(8):083002, aug 2015.

\bibitem[BRS09]{BerRioSim09}
Jos\'{e}~Joaqu\'{\i}n Bernal, \'{A}ngel R\'{\i}o, and Juan~Jacobo Sim\'{o}n.
\newblock An intrinsical description of group codes.
\newblock {\em Des. Codes Cryptography}, 51(3):289–300, June 2009.

\bibitem[BW23]{BorWil23}
Martino Borello and Wolfgang Willems.
\newblock On the algebraic structure of quasi-group codes.
\newblock {\em Journal of Algebra and Its Applications}, 22(10):2350222, 2023.

\bibitem[CAB12]{PhysRevX.2.041021}
Earl~T. Campbell, Hussain Anwar, and Dan~E. Browne.
\newblock Magic-state distillation in all prime dimensions using quantum
  reed-muller codes.
\newblock {\em Phys. Rev. X}, 2:041021, Dec 2012.

\bibitem[CG26]{CouGas26}
Jean-Marc Couveignes and Jean Gasnier.
\newblock Explicit {R}iemann-roch spaces in the {H}ilbert class field.
\newblock In Nils Bruin, David Kohel, and Chloe Martindale, editors, {\em
  {A}rithmetic, {G}eometry, {C}ryptography and {C}oding {T}heory}, volume 832
  of {\em Contemporary {M}athematics}. {AMS}, 2026.

\bibitem[CGK17]{PhysRevA.95.012329}
Shawn~X. Cui, Daniel Gottesman, and Anirudh Krishna.
\newblock Diagonal gates in the clifford hierarchy.
\newblock {\em Phys. Rev. A}, 95:012329, Jan 2017.

\bibitem[CS96]{Calderbank1996}
A.~R. Calderbank and Peter~W. Shor.
\newblock {Good quantum error-correcting codes exist}.
\newblock {\em Physical Review A}, 54(2):1098--1105, August 1996.

\bibitem[EK09]{PhysRevLett.102.110502}
Bryan Eastin and Emanuel Knill.
\newblock Restrictions on transversal encoded quantum gate sets.
\newblock {\em Phys. Rev. Lett.}, 102:110502, Mar 2009.

\bibitem[FGL18]{8555154}
Omar Fawzi, Antoine Grospellier, and Anthony Leverrier.
\newblock Constant overhead quantum fault-tolerance with quantum expander
  codes.
\newblock In {\em 2018 IEEE 59th Annual Symposium on Foundations of Computer
  Science (FOCS)}, pages 743--754, 2018.

\bibitem[Gas25]{GasnierPhD}
Jean Gasnier.
\newblock {\em Arithmetics and Algorithmics of Algebraic Curves and
  Applications to Coding Theory and Cryptography}.
\newblock PhD thesis, Université de Bordeaux, 2025.
\newblock Original in French, English version on
  \url{https://www.math.u-bordeaux.fr/~jgasnier001/}.

\bibitem[GG24]{golowich2024asymptoticallygoodquantumcodes}
Louis Golowich and Venkatesan Guruswami.
\newblock Asymptotically good quantum codes with transversal non-clifford
  gates, 2024.

\bibitem[GL24]{golowich2024quantumldpccodestransversal}
Louis Golowich and Ting-Chun Lin.
\newblock Quantum ldpc codes with transversal non-clifford gates via products
  of algebraic codes, 2024.

\bibitem[Got14]{Gottesman2013}
Daniel Gottesman.
\newblock Fault-tolerant quantum computation with constant overhead.
\newblock {\em Quantum Info. Comput.}, 14(15–16):1338–1372, November 2014.

\bibitem[Gu{\'{e}}25]{guemard2025good}
Virgile Gu{\'{e}}mard.
\newblock Good quantum codes with addressable and parallelizable transversal
  non-clifford gates, 2025.

\bibitem[HP03]{Huffman_Pless_2003}
W.~Cary Huffman and Vera Pless.
\newblock {\em Fundamentals of Error-Correcting Codes}.
\newblock Cambridge University Press, 2003.

\bibitem[HVWZ25a]{he2025good}
Zhiyang He, Vinod Vaikuntanathan, Adam Wills, and Rachel~Yun Zhang.
\newblock Asymptotically good quantum codes with addressable and transversal
  non-clifford gates, 2025.

\bibitem[HVWZ25b]{he2025addressable}
Zhiyang He, Vinod Vaikuntanathan, Adam Wills, and Rachel~Yun Zhang.
\newblock Quantum codes with addressable and transversal non-clifford gates,
  2025.

\bibitem[KB15]{PhysRevA.91.032330}
Aleksander Kubica and Michael~E. Beverland.
\newblock Universal transversal gates with color codes: A simplified approach.
\newblock {\em Phys. Rev. A}, 91:032330, Mar 2015.

\bibitem[KT19]{Krishna2019}
Anirudh Krishna and Jean-Pierre Tillich.
\newblock Towards low overhead magic state distillation.
\newblock {\em Phys. Rev. Lett.}, 123:070507, Aug 2019.

\bibitem[KYP15]{Kubica_2015}
Aleksander Kubica, Beni Yoshida, and Fernando Pastawski.
\newblock Unfolding the color code.
\newblock {\em New Journal of Physics}, 17(8):083026, aug 2015.

\bibitem[Lin24]{lin2024transversalnoncliffordgatesquantum}
Ting-Chun Lin.
\newblock Transversal non-clifford gates for quantum ldpc codes on sheaves,
  2024.

\bibitem[Liu02]{Liu02}
Qing Liu.
\newblock {\em Algebraic geometry and arithmetic curves}, volume~6 of {\em Oxf.
  Grad. Texts Math.}
\newblock Oxford: Oxford University Press, 2002.

\bibitem[Ngu24]{nguyen2024goodbinaryquantumcodes}
Quynh~T. Nguyen.
\newblock Good binary quantum codes with transversal ccz gate, 2024.

\bibitem[PR13]{PhysRevLett.111.090505}
Adam Paetznick and Ben~W. Reichardt.
\newblock Universal fault-tolerant quantum computation with only transversal
  gates and error correction.
\newblock {\em Phys. Rev. Lett.}, 111:090505, Aug 2013.

\bibitem[Ros87]{Rosen87}
Michael Rosen.
\newblock The {Hilbert} class field in function fields.
\newblock {\em Expo. Math.}, 5:365--378, 1987.

\bibitem[Sho95]{Shor1995}
Peter~W. Shor.
\newblock {Scheme for reducing decoherence in quantum computer memory}.
\newblock {\em Physical Review A}, 52(4):R2493--R2496, October 1995.

\bibitem[SPW24]{scruby2024quantumrainbowcodes}
Thomas~R. Scruby, Arthur Pesah, and Mark Webster.
\newblock Quantum rainbow codes, 2024.

\bibitem[Ste96a]{Steane1996}
A.~M. Steane.
\newblock {Error Correcting Codes in Quantum Theory}.
\newblock {\em Physical Review Letters}, 77(5):793--797, July 1996.

\bibitem[Ste96b]{Stean1996Multiple}
A.~M. Steane.
\newblock {Multiple-particle interference and quantum error correction}.
\newblock {\em Proceedings of the Royal Society of London. Series A:
  Mathematical, Physical and Engineering Sciences}, 452(1954):2551--2577,
  November 1996.

\bibitem[Sti06]{Stichtenoth2006}
H.~Stichtenoth.
\newblock Transitive and self-dual codes attaining the tsfasman-vladut-zink
  bound.
\newblock {\em IEEE Transactions on Information Theory}, 52(5):2218--2224,
  2006.

\bibitem[Sti08]{Stichtenothbook}
Henning Stichtenoth.
\newblock {\em Algebraic Function Fields and Codes}.
\newblock Springer Publishing Company, Incorporated, 2nd edition, 2008.

\bibitem[VB19]{PhysRevA.100.012312}
Michael Vasmer and Dan~E. Browne.
\newblock Three-dimensional surface codes: Transversal gates and fault-tolerant
  architectures.
\newblock {\em Phys. Rev. A}, 100:012312, Jul 2019.

\bibitem[Wei48]{Weil48}
Andr{\'e} Weil.
\newblock {\em Sur les courbes alg{\'e}briques et les vari{\'e}t{\'e}s qui s'en
  d{\'e}duisent}, volume~7 of {\em Publ. Inst. Math. Univ. Strasbourg}.
\newblock Hermann, Paris, 1948.

\bibitem[WHY25]{Wills2025}
Adam Wills, Min-Hsiu Hsieh, and Hayata Yamasaki.
\newblock Constant-overhead magic state distillation.
\newblock {\em Nature Physics}, September 2025.

\bibitem[Zhu25]{zhu2025topological}
Guanyu Zhu.
\newblock A topological theory for qldpc: non-clifford gates and magic state
  fountain on homological product codes with constant rate and beyond the
  \(n^{1/3}\) distance barrier.
\newblock 2025.

\bibitem[ZSP{\etalchar{+}}24]{zhu2024nonclifford}
Guanyu Zhu, Shehryar Sikander, Elia Portnoy, Andrew~W. Cross, and Benjamin~J.
  Brown.
\newblock Non-clifford and parallelizable fault-tolerant logical gates on
  constant and almost-constant rate homological quantum ldpc codes via higher
  symmetries.
\newblock 2024.

\end{thebibliography}
\end{document}